\documentclass[aps,preprint,superscriptaddress]{revtex4-2}
\usepackage{amsmath}
\usepackage{amsthm}
\usepackage{amssymb} 
\usepackage{graphicx}
\usepackage{algorithm}
\usepackage{algpseudocode}
\usepackage{booktabs}
\usepackage{tabularx}
\usepackage{multirow} 
\usepackage{float}  
\usepackage[utf8]{inputenc}
\usepackage{array} 
\usepackage{siunitx}  
\usepackage{collcell}
\usepackage{longtable}
\usepackage{placeins}
\usepackage{threeparttable} 
\usepackage{needspace}
\newtheorem{proposition}{Proposition}[section] 
\begin{document}


\title{Comprehensive pKa Data Augmentation from Limited Real Data \\through an Engineered Models–Quantum Framework}



\author{Wang Rui}
\email[]{r-wang18@tsinghua.org.cn}     
\homepage[]{https://orcid.org/0000-0002-2362-3214} 
\thanks{Contact author}
\affiliation{Department of Chemistry, Tsinghua University, Beijing 100084, P. R. China}
\affiliation{Department of Chemical Engineering, Tsinghua University, Beijing 100084, P. R. China}

\author{Liu Dinghao}
\affiliation{School of Science, China Pharmaceutical University, Nanjing 211198, P. R. China}





\date{\today}

\begin{abstract}
Proton dissociation constants (pKa) are critical for functional molecule discovery and molecular modeling. Building on iBonD, the largest experimental pKa database established, we and other researchers have developed several methods including machine-learning-based empirical prediction and high-accuracy energy calculations. Despite this foundation, the rapid augmentation of high-quality  pKa data remains fundamentally constrained. As part of this work, we performed large-scale regression-based pKa prediction on unlabeled molecular datasets using a collection of extensively optimized machine-learning models. The results indicate that, since the feature distributions of unlabeled molecular datasets, the pKa data distribution approximates normality, with extreme scarcity of tail-region samples. Although such augmentation is highly valuable for improving overall data availability and predictive modeling, it remains insufficient for efficiently discovering molecules with broad-spectrum pKa properties. To address this, we explore the targeted generation of molecules with sparse pKa properties from the vast chemical space. Given that traditional continuous latent space VAE-RNN methods for molecular generation suffer from insufficient stability and fail to demonstrate clear advantages in complementing sparse data, we design and implement a quantum-assisted sparse-pKa molecular generation. Feasibility is validated on a simulated quantum annealer, and superior extreme-value sampling is further achieved  on physical coherent Ising machines (CIMs). Combined with established high-accuracy pKa prediction models and confidence filtering, the proposed workflow yields a substantial collection of novel high-confidence molecules and broadens the accessible pKa distribution beyond what can be achieved through regression-based augmentation alone. By integrating machine-learning-driven data augmentation with quantum-assisted exploration, this work establishes a practical paradigm for high-confidence physicochemical augmentation data augmentation from limited experimental observations. More broadly, it represents the first realistic molecular discovery scenario in which quantum-inspired optimization provides a genuinely complementary advantage for extending sparse property distributions, while simultaneously offering practical insights into the deployment and optimization of physical CIMs for scientific discovery.
\end{abstract}

\maketitle
\section{INTRODUCTION}
The proton dissociation constant (pKa) fundamentally governs the ionization state of molecules in solution, exerting a decisive influence on physicochemical properties such as lipophilicity, solubility, membrane permeability, and molecular recognition \cite{1, 2, 3, 4, 5, 6, 7}. Accurate knowledge of pKa values is therefore indispensable for functional molecule discovery, drug design, and the rationalization of reaction mechanisms in both aqueous and non-aqueous media. Over decades of research, substantial efforts have been devoted to the experimental determination and computational prediction of pKa. On the experimental front, curated databases such as iBonD \cite{8}, the largest public repository of experimental pKa data established by our group \cite{9} (similar to other benchmark labeled databases like QM7 and QM9 that have profoundly advanced quantum-chemical machine learning  \cite{10, 11, 12}) have provided an invaluable foundation for data-driven modeling  \cite{13, 14, 15, 16, 17}. In parallel, computational approaches have evolved along two complementary directions: high-accuracy quantum mechanical calculations based on density functional theory (DFT) and ab initio methods, which offer physical rigor but at prohibitive computational cost  \cite{18, 19, 20, 21, 22}, and empirical machine-learning-based predictions, which trade some accuracy for throughput.

By virtue of pKa databases, especially iBonD—the largest experimental pKa database constructed — we and other researchers have established diverse approaches that combine machine learning-based empirical prediction with high-precision energy calculation methods. However, despite these advances, the rapid augmentation of high-quality experimental pKa data remains constrained. Even the recent breakthroughs in large language models, including multimodal learning and optical character recognition  \cite{23, 24, 25}, have not fundamentally overcome the bottleneck of acquiring new, reliable experimental measurements.

\begin{figure*}
\centering
\includegraphics[width=1.0\textwidth]{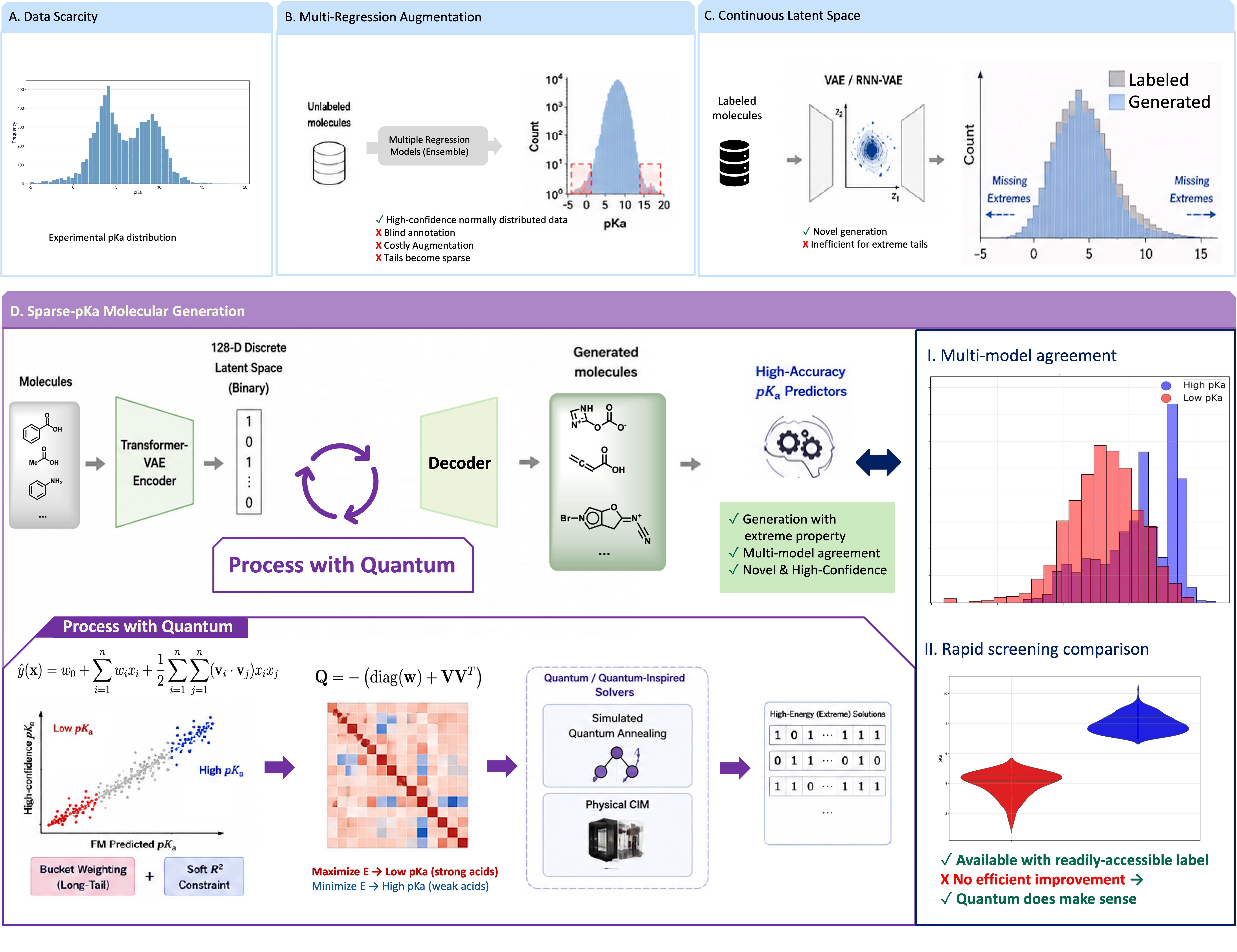}  
\caption{Broad-Spectrum and High-Confidence pKa Data Augmentation}
\label{fig1}
\end{figure*}

To alleviate data scarcity (FIG. 1A), we first pursued a large-scale data augmentation strategy based on supervised regression. By training multiple high-performance models on iBonD and applying them to extensive unlabeled molecular collections, we generated a substantially enlarged dataset of high-confidence computational pKa annotations. The combination of carefully engineered descriptors and powerful learning algorithms such as XGBoost produced strongly consistent predictions across multiple models, significantly increasing the amount of available pKa data beyond experimentally accessible scales.

Importantly, this data augmentation effectively enriched the dominant regions of the pKa distribution and laid a solid foundation for subsequent modeling and analysis. However, analysis of the final distribution revealed an inherent limitation of regression-driven data augmentation. The unlabeled molecular datasets adopted for augmentation are inherently biased toward molecules following a strong normal distribution. Since regression-based augmentation cannot actively target distribution regions, the newly generated data mostly fell within the conventional range. Consequently, while the total volume of high-confidence pKa data increased substantially, insufficient coverage arose in the tail regions corresponding to strong acids and extremely weak acids (FIG. 1B). This finding demonstrates that complementary approaches are required alongside regression-based augmentation to realize full-spectrum molecular property prediction.

To further extend the coverage of the pKa distribution, we sought methods capable of actively exploring underrepresented regions of chemical space. Such a task is fundamentally challenging because the target regions are defined by scarce and extreme property values. Conventional generative approaches are typically trained on data dominated by the central distribution and therefore tend to preferentially reproduce already well-sampled regions. Effective exploration of sparse-pKa regimes consequently requires not only generative capability but also a principled mechanism for navigating the discrete and combinatorially vast molecular landscape toward rare property objectives. This consideration motivated our integration of generative modeling with quantum-assisted optimization.

In preliminary exploration, we also investigated continuous latent-variable models for molecular generation \cite{26} (FIG. 1C). Although autoencoder-based architectures such as Variational Autoencoders (VAEs) and their recurrent variants provide an elegant framework for molecular representation learning, our experiments consistently revealed practical limitations for sparse-property exploration. Maintaining a useful balance between latent-space regularization and faithful molecular reconstruction proved difficult. Excessive regularization resulted in posterior collapse and loss of chemically meaningful information, whereas prioritizing reconstruction yielded noisy and weakly structured latent spaces. More importantly, exploration trajectories remained strongly influenced by the density structure of the training distribution, making targeted discovery of rare-property molecules inefficient. Consequently, while such models are effective for interpolation within densely sampled regions, they offered limited advantages for directed exploration of sparse pKa regimes.

To overcome these limitations, we developed a synergistic Transformer–VAE autoencoder that maps molecules into a compact 128-dimensional discrete latent space through a regularized binary bottleneck. The binary bottleneck is enforced via differentiable Gumbel–Softmax binarization  \cite{27, 28, 29}, compressing molecular representations into fixed-length binary codes while retaining disentangled structural and physicochemical features \cite{30}.

To construct an explicit energy landscape linking this discrete space to pKa, we trained a factorization machine (FM) surrogate model \cite{31, 32, 33} to predict pKa directly from binary latent codes. The FM captures pairwise interactions among latent variables and accurately models nonlinear relationships between discrete representations and molecular properties. To address severe long-tail sparsity, we incorporated pKa-bucket-specific sample weighting, a soft $R^2$ regularization term, and carefully tuned pairwise interaction parameters. Together, these strategies substantially improved both long-tail fitting performance and overall predictive accuracy. We additionally examined higher-order surrogate models amenable to QUBO extraction, including cubic FM variants and conditional Boltzmann machines, but neither provided meaningful improvements over the quadratic FM formulation.

From the optimized FM surrogate, a QUBO matrix can be extracted directly, with low-energy and high-energy solutions corresponding to low-pKa and high-pKa molecules, respectively. This transformation converts molecular design into a combinatorial optimization problem naturally compatible with quantum and quantum-inspired solvers. Molecular generation was subsequently performed using both simulated quantum annealing and physical coherent Ising machines (CIMs) \cite{34, 35, 36, 37, 38} operating under 8-bit and 14-bit integer precision modes.

The resulting QUBO formulation enables direct optimization within a discrete latent molecular representation. Annealing-based solvers are particularly attractive in this setting because they naturally support large-scale combinatorial exploration and diverse solution sampling, both of which are beneficial for discovering molecules residing in sparsely represented regions of the property distribution. Both simulated and physical solvers generated molecules with high validity and novelty, while the physical CIM consistently exhibited superior extreme-value sampling performance. For the best-fitted long-tail buckets, the Pearson correlation between FM-predicted pKa and QUBO energy remained stably above 0.9 in both low-pKa and high-pKa regions under both 8-bit and 14-bit CIM configurations, substantially exceeding the correlations obtained using simulated quantum annealing. Furthermore, comparison across multiple FM variants demonstrated that surrogate models specifically optimized for long-tail fitting produced markedly stronger sampling correlations, directly validating the importance of tailored surrogate design for rare-property discovery.

Both the 8-bit and 14-bit CIM configurations were operated exclusively in sampling mode. Compared with deterministic quota-based optimization, sampling-based operation introduced beneficial stochasticity that substantially enriched molecular diversity. With sufficiently large sample counts, the sampling mode was even capable of discovering lower-energy solutions than those obtained through deterministic optimization. A notable distinction emerged between the two precision modes. Although both achieved strong Pearson correlations ($> 0.9$), the 14-bit mode provided substantially higher Spearman rank correlations, indicating more faithful preservation of the property ranking encoded in the energy landscape. In contrast, the 8-bit mode generated greater molecular diversity and more effectively populated sparse pKa regions. These observations reveal a practical exploration–exploitation trade-off, where lower precision sacrifices ranking fidelity but introduces additional stochasticity that enhances coverage of long-tail chemical regions.

Multi-model evaluation confirms that CIM-generated candidates more frequently fall within high-confidence sparse-pKa regions, with low-pKa generation consistently outperforming high-pKa generation. Analysis of the extracted QUBO matrices and solution outcomes further reveals the primary limitation: despite effective tail-targeted generation, the confidence for the most extreme molecules remains moderate. This is partly due to the information loss inherent in discrete latent compression, but more importantly, it arises because pKa values cannot be adequately embedded through a latent space that is predominantly tied to molecular scaffolds. Precisely because of this, quantum-assisted optimization enables the generation of novel molecules that breaks free from the heavy reliance on high-accuracy regression fitting and expert knowledge. Building on this insight, we conduct feature importance and XGBoost Gain analyses using several newly developed high-accuracy regression models. We find that screening based on high XGBoost-Gain, readily available labels provides a viable route for rapidly pre-screening low- and high-pKa groups. Nevertheless, the results presented in this work demonstrate that, even with these limitations, direct physical-machine solving still retains clear advantages over such low-cost screening approaches. While the high XGBoost-Gain screening can serve as a promising complement, the core value of quantum-assisted generation remains evident.

Overall, this workflow (FIG. 1B\&1D, algorithm of 1D summarized in TABLE I) integrates large-scale high-confidence data augmentation with targeted quantum-assisted exploration. The resulting framework not only generates a substantial collection of novel molecules across underrepresented pKa regions but also broadens the accessible high-confidence pKa dataset derived from limited experimental observations. More generally, it establishes a transparent and potentially generalizable paradigm for combining machine-learning-driven data augmentation with quantum-assisted optimization to address challenging property-discovery problems in chemistry.
\begin{table*}[h!]
\centering
\caption{Sparse-pKa Molecular Generation \textit{via} FM-QUBO-CIM}
\label{tab:alg_workflow}
\footnotesize
\renewcommand{\arraystretch}{0.75}
\begin{tabular}{l}
\toprule
\textbf{Input:} \\
\quad Experimental pKa dataset $D_\text{exp}$ \\
\quad Unlabeled molecular library $D_\text{unlab}$ \\
\midrule
\textbf{Output:} \\
\quad High-confidence sparse-pKa molecules $M_\text{sparse}$ \\
\midrule
\textbf{Algorithm Steps:} \\
1.~~Train holistic pKa regressors on $D_\text{exp}$ \\
2.~~Generate pseudo-labels for $D_\text{unlab}$ \\
3.~~Apply cross-model agreement filtering \\
4.~~Construct high-confidence augmented dataset $D_\text{aug}$ \\
5.~~Train Transformer-VAE with SELFIES encoding \\
6.~~Encode molecules into 128-bit latent vectors $\boldsymbol{z}$ \\
7.~~Train long-tail optimized FM surrogate: $\hat{y}(\boldsymbol{z}) = \text{FM}(\boldsymbol{z})$ \\
8.~~Extract QUBO matrix $\boldsymbol{Q}$ from FM parameters \\
9.~~Solve QUBO using: \\
\qquad a) Simulated annealing \\
\qquad b) Physical CIM hardware \\
10.~Decode sampled binary solutions $\boldsymbol{z}^*$ into molecular structures \\
11.~Predict pKa using ensemble regressors \\
12.~Apply multi-model confidence filtering \\
13.~Return retained sparse-pKa candidates \\
\bottomrule
\end{tabular}
\end{table*}
\section{Regression‑Driven Data Augmentation}
\subsection{Holistic pKa Model Refinement and Paradigm Determination}
In our prior work, the decision-tree ensemble model XGBoost demonstrated superior performance in pKa prediction \cite{13, 16}. After systematically re‑evaluating multiple regression architectures in the present study—including deep neural networks, random forests, and support vector regression—we again adopted XGBoost as the core predictive engine, confirming its robustness and generalization capacity on the iBonD benchmark. Building on this foundation, we undertook a comprehensive paradigm exploration of feature selection and model combination strategies to establish a general‑purpose pKa prediction framework.

For feature engineering, we constructed a hybrid descriptor pool comprising 3,986 mixed‑cost descriptors, integrating multiple complementary molecular representations. The full descriptor set consists of 3,776 fingerprint bits, 208 RDKit physicochemical descriptors, and 2 categorical encoding features specifying solvent conditions and measurement type. The detailed fingerprint-based composition is summarized in TABLE II.

\begin{table}[htbp]
\centering
\footnotesize
\renewcommand{\arraystretch}{0.8}
\begin{threeparttable}
\caption{Fingerprint-Based Composition}
\label{tab:fingerprint_composition}
\begin{tabular}{lllcl}
\toprule
 Feature Type & Feature Range & Count & Description \\
\midrule
 Morgan fingerprint (radius 2) & fp\_0 \(\sim\) fp\_1023 & 1024 & nBits = 1024 \\
 Morgan fingerprint (radius 3) & fp\_1024 \(\sim\) fp\_2047 & 1024 & nBits = 1024 \\
 MACCS fingerprint & fp\_2048 \(\sim\) fp\_2214 & 167 & Standard 166-bit MACCS (+1 starter bit) \\
 Topological fingerprint & fp\_2215 \(\sim\) fp\_2726 & 512 & Actual 64 bits, remaining zeros padded \\
 AtomPairs fingerprint & fp\_2727 \(\sim\) fp\_3750 & 1024 & First 1024 bits used \\
 Physicochemical descriptors & fp\_3751 \(\sim\) fp\_3775 & 25 & See in Appendix TABLE A1\\
\bottomrule
\end{tabular}

\vspace{0.1em}
\flushleft
\scriptsize
\textbf{Total:} 3776 fingerprint features (fp\_0 \(\sim\) fp\_3775).
\end{threeparttable}
\end{table}

In addition to the fingerprint-based features listed above, the model further incorporated 208 RDKit descriptors, including EState indices, quantitative estimate of drug-likeness (QED), BCUT2D metrics, VSA‑series descriptors, fragment counts, and other graph‑based topological and constitutional properties. These descriptors capture essential physicochemical determinants of acid–base behavior—such as charge distribution, molecular connectivity, and solvation‑relevant surface properties—that complement the structural information encoded in the fingerprint features. The two categorical encoding features specify the solvent environment (e.g., water, ethanol, DMSO) and the measurement type (e.g., potentiometric vs. spectrophotometric), accounting for systematic variations in experimental conditions that are known to affect pKa determination.

\begin{figure*}
\centering
\includegraphics[width=0.9\textwidth]{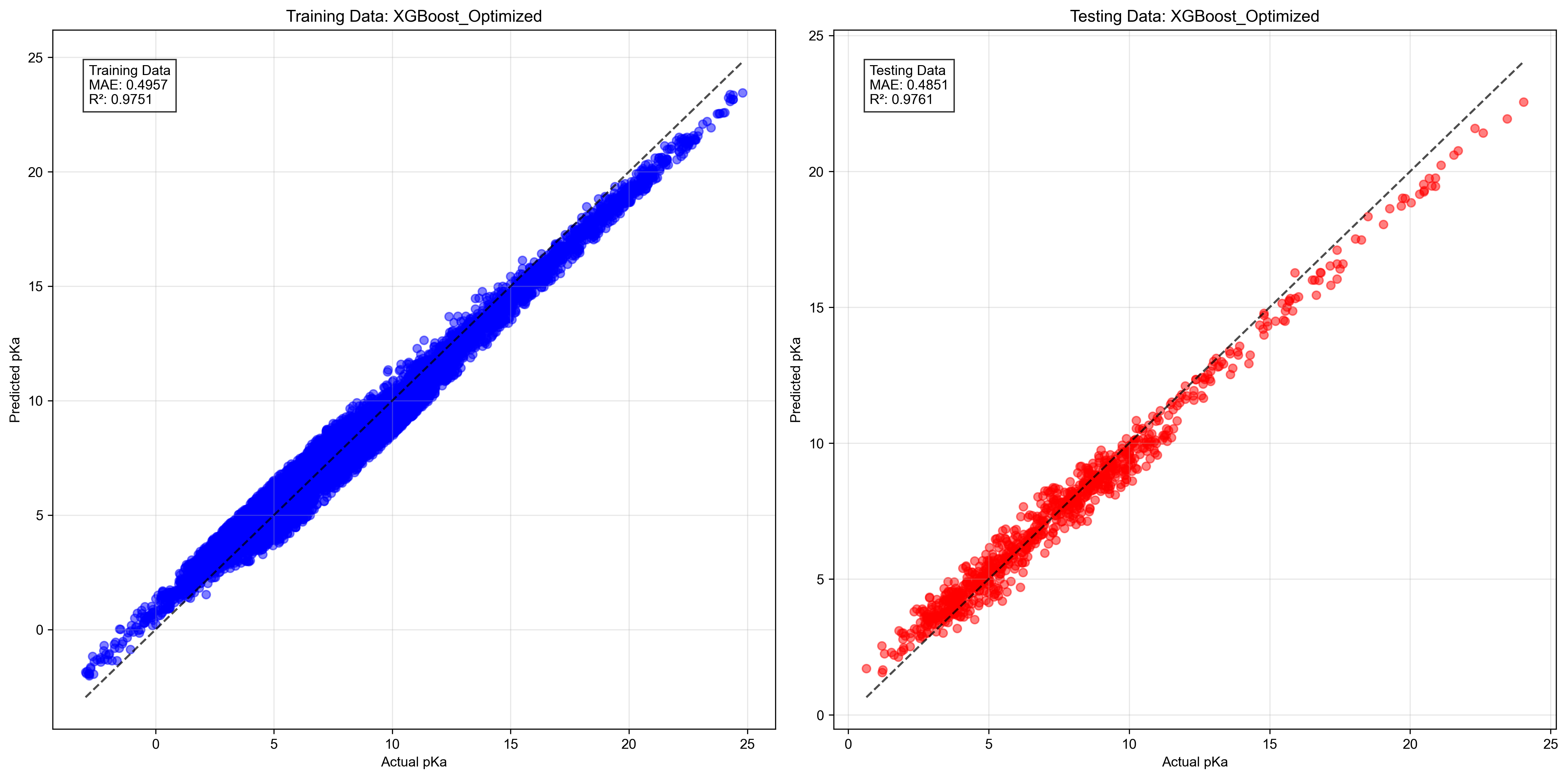}  
\caption{Final  Results of Optimized Holistic Regression Model }
\label{fig3}
\end{figure*}

To construct an optimal predictive model, we employed a multi‑stage feature selection pipeline combining K‑best univariate selection, recursive feature elimination (RFE), and mutual‑information‑based filtering. After systematic evaluation, the best‑performing feature subsets retained between 1,200 and 1,500 fingerprint bits together with the full set of 208 RDKit descriptors and the two categorical encoding features, resulting in a final dimensionality of approximately 1,410–1,710 features. Detailed feature selection outcomes and the composition of the retained subsets are provided in the Appendix A2\&A3.

With the pruned feature sets in place, we further optimized the XGBoost model itself through extensive hyperparameter tuning guided by cross‑validation. The refined models achieved a training mean absolute error (MAE) of 0.46–0.47 and a coefficient of determination ($R^2$ ) of 0.97. On independent test sets, these models exhibited strong generalization, with MAE around 0.6 and $R^2$ exceeding 0.94 (details in Appendix FIG.A1 and FIG.A2, K-best and RFE both selected exactly the same 1200 features). By rigorously employing 5‑fold cross‑validation throughout the training protocol, we obtained a final holistic pKa predictor that consistently delivers both training and test MAE below 0.5 and $R^2$ reaching 0.97, thereby preliminarily advancing the state‑of‑the‑art for holistic pKa prediction (detailed model training parameters are provided in the Appendix A2).

\subsection{Aqueous Specialized Model Construction}

\begin{table}[htbp]
\centering
\footnotesize
\renewcommand{\arraystretch}{0.8}
\begin{threeparttable}
\caption{Grouped Results with Holistic Model}
\label{tab:grouped_results}
\begin{tabular}{llccccc}
\toprule
Solvent               & Measurement & N      & MAE    & $R^2$   & Mean pKa & SD \\
\midrule
acetone (5)           & PTM (23) & 362    & 0.2400 & 0.9219  & 6.60     & 1.18 \\
acetonitrile (67)     & PTM (23) & 464    & 0.3008 & 0.9600  & 9.51     & 2.22 \\
dmso (50)             & PTM (23) & 1229   & 0.3499 & 0.9292  & 5.89     & 1.82 \\
acetonitrile (34)     & PTM (23) & 389    & 0.4352 & 0.9669  & 10.55    & 3.02 \\
dmso (50)             & UV (26)            & 257    & 0.4905 & 0.9628  & 6.54     & 3.17 \\
methanol (30)         & PTM (23) & 333    & 0.5015 & 0.9642  & 10.71    & 3.68 \\
\textbf{water (60)}            & \textbf{PTM (23)} & \textbf{5587}   & \textbf{0.5315} & \textbf{0.9339}  & \textbf{6.24}     & \textbf{2.69} \\
\textbf{water (60)}            & \textbf{NMR (19)}           & \textbf{954}    & \textbf{0.5485} & \textbf{0.9348}  & \textbf{6.38}     & \textbf{2.67} \\
\textbf{water (60)}            & \textbf{UV (26)}            & \textbf{2611}   & \textbf{0.5528} & \textbf{0.9267}  & \textbf{6.35}     & \textbf{2.68} \\
\bottomrule
\end{tabular}

\vspace{0.1em}
\flushleft
\scriptsize
\renewcommand{\arraystretch}{0.6}
\textbf{Abbreviations:} DMSO = dimethyl sulfoxide; PTM = potentiometry; UV = ultraviolet spectrophotometry; NMR = nuclear magnetic resonance; \\
N = number of data points; MAE = mean absolute error; SD = standard deviation. \\
Numbers in parentheses: solvent ID / method ID.

\end{threeparttable}
\end{table}

The model optimization paradigm established above yielded multiple high-performance regressors suitable for cross-validation. Stratified analysis by solvent–measurement type combinations revealed that, while the overall $R^2$ exceeded 0.9 for all subsets—comparable or superior to our previously reported holistic-purpose predictors—the mean absolute error varied systematically with the diversity of the underlying pKa distribution. Combinations with the narrowest pKa standard deviations (SD), such as acetone–potentiometric measurements (PTM), DMSO–PTM, and DMSO–PTM, achieved the lowest MAE (0.24–0.35). In contrast, for aqueous-phase measurements, which arguably hold great practical value \cite{39, 40}, the MAE of the holistic model remained in the range of 0.5315–0.5528.

\begin{figure}
\centering
\includegraphics[width=0.85\textwidth]{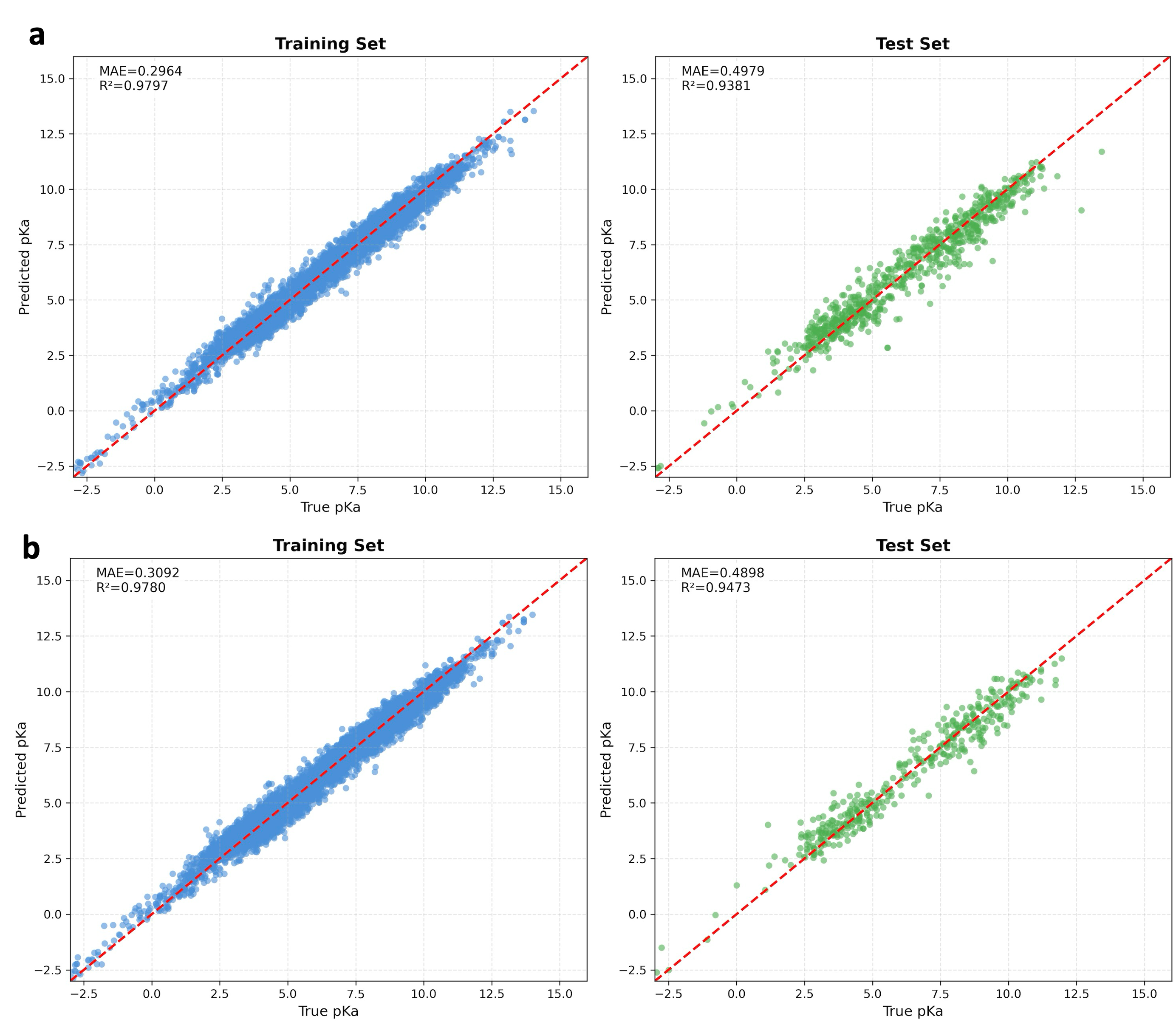}  
\caption{Aqueous Specialized Model. (a) WaterBest, (b) WaterOptimized}
\label{fig3}
\end{figure}

Motivated by this gap, we applied the identical feature-selection–model-optimization workflow to construct a specialized aqueous-phase predictor. The solvent condition was fixed to aqueous, and the solvent categorical encoding features were consequently removed from the input vector, while measurement-related features were retained. We selected the three measurement types with the largest data volumes to ensure robust regression, and extended the feature selection process to the 208 RDKit descriptors, which had been retained without pruning in the holistic model. Among the candidate models generated during this process, one variant exhibited clear overfitting (with a training MAE of 0.1074 and $R^2$ of 0.9973 but a test MAE of 0.4795 and $R^2$ of 0.9321, Appendix FIG. A3) was excluded. The two retained selected models, denoted WaterBest and WaterOptimized, demonstrated robust generalization, with training MAE around 0.3 and test MAE around 0.49 (FIG. 3, detailed model training parameters and results for different measurements are provided in the Appendix TABLE A6 \& FIG. A5).

The two aqueous models retained 89.4–93.3$\%$ of the original 208 RDKit descriptors, confirming that even carefully curated physicochemical descriptors benefit from task-specific pruning. Notably, the overlap between the two feature sets was modest: only 722 fingerprint bits were shared, corresponding to a mere 48$\%$  overlap. In contrast, 180 descriptors (86.5$\%$  of the total 208) were common to both models, with WaterBest retaining 6 unique descriptors and WaterOptimized retaining 14 unique descriptors (Appendix A3).

\subsection{Data Augmentation with Cross-Model Validation}

The two aqueous-specialized models, WaterBest and WaterOptimized, were applied to a randomly selected subset of over 100,000 molecules from the ZINC‑250K database (dictated by the computationally expensive descriptor generation required for full‑library processing and the already‑observed tendency of the augmented data toward an increasingly normal distribution).The pairwise predictions from the two models yielded 134,633 data points with a Pearson correlation of 0.9285 and a mean absolute deviation of 0.2772, indicating strong mutual consistency (FIG. 4a). To ensure annotation reliability, we discarded sample pairs with an absolute difference exceeding 0.5, resulting in a high‑consensus subset of 114,399 molecules, for which the WaterOptimized predictions were retained as the final pseudo‑labels.
\begin{figure}[H]
\centering
\includegraphics[width=0.7\textwidth]{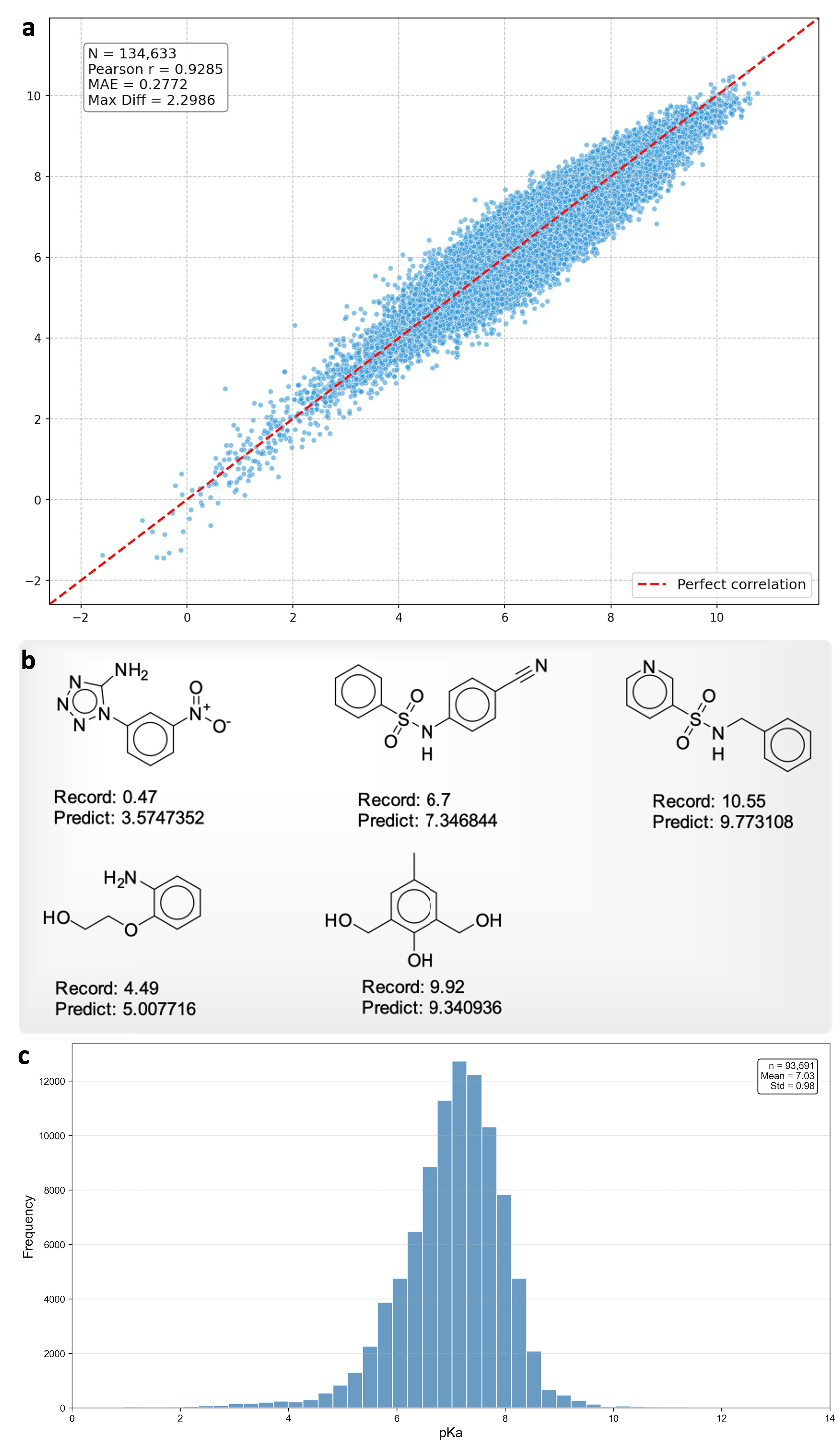}  
\caption{Regression‑Driven Data Augmentation. (a) Comparison of models, (b) Few anomalies, (c) Final distribution}
\label{fig4}
\end{figure}
To further validate this augmented dataset, we cross‑referenced it with the recently published aqueous pKa compilation by Dávid Bajusz et al \cite{9}. Among 637 overlapping molecules, only one exhibited a substantial deviation of 3.1 pKa units; the remaining cases with discrepancies above 0.5 were all below 0.7, confirming strong external agreement(FIG. 4b, The Predict data refer to the pKa values predicted by the WaterOptimized model). Additionally, we subjected the consensus predictions to a multi‑model filtering pipeline that included the updated holistic model developed in this work and the previously established SPOC predictor. Only molecules for which all independent models agreed within 0.5 pKa units were retained. This stringent multi‑model consensus procedure yielded approximately 100,000 high‑confidence aqueous pKa annotations, substantially enriching the available data for this therapeutically relevant domain and demonstrating the feasibility of large‑scale, high‑quality pseudo‑label generation through specialized regression models.

Notably, the augmented dataset presents an approximately normal distribution (FIG. 4c), which prompted us to halt further data augmentation. After continuous augmentation, molecular pKa values became densely distributed around the mean, whereas the sparse tails and extreme pKa regions suffered from data scarcity. Considering the inherent randomness and high computational cost of descriptor computation in passive pseudo-labeling, we accordingly revised our technical roadmap. We discarded the conventional passive data augmentation scheme based on blind prediction, and proposed a novel molecular generation strategy tailored for sparse property distribution. This method is dedicated to supplementing molecules with extreme pKa values in the tail regions, which cannot be effectively enriched by traditional regression-based augmentation approaches.

Another striking finding was uncovered through validated SHAP analysis. The features dominating pKa prediction are not simple functional group counts such as fr\_COO2, but BCUT2D descriptors, atom-pair fingerprints and topological torsion fingerprints (Appendix FIG. A4). All these features are closely associated with molecular skeletons and overall topological structures. This finding strongly indicates that latent representations based on molecular scaffolds contain intrinsic information relevant to pKa properties, even in the absence of fine-grained electronic descriptors.It also establishes a feasible basis for supplementing data within the sparse pKa intervals derived from regression fitting in our follow-up work.

\section{Non-Quantum Generative Approaches}
The uncovered limitation of passive regression-based data augmentation for extreme pKa regimes motivates us to explore generative models as a complementary approach toward comprehensive pKa property profiling. Before turning to quantum‑inspired discrete optimization, we systematically evaluated conventional continuous latent variable models, which represented the predominant non‑quantum paradigm for property‑guided molecular generation.
Following architectures and training protocols reported in the literature \cite{26}, we constructed variational autoencoders (VAEs) that mapped molecular string representations into continuous latent spaces. Two molecular encoding schemes—SMILES and SELFIES—were each combined with three latent dimensionalities (16, 32, and 48), yielding six model configurations in total. Critically, we explicitly incorporated pKa values as conditioning inputs during training: the molecular representation and its corresponding pKa label were jointly encoded, and the decoder was conditioned on the same pKa value, so that the latent space was directly structured by the property of interest. All models were trained on the iBonD molecular corpus, and both reconstruction fidelity and latent space regularity were monitored throughout hyperparameter optimization.
After extensive tuning, the continuous VAE models attained several capabilities consistent with prior reports. They generated chemically valid and structurally novel molecules when sampling from the learned latent distributions. Interpolation in the latent space produced chemically plausible molecules with properties that varied smoothly between endpoints, and arithmetic operations enabled property‑guided editing of molecular scaffolds. In densely populated pKa regions, the models performed well: randomly generated molecular distributions broadly resembled the original training data distribution, representing a clear improvement over the regression‑based pseudo‑labeling approach.
However, all six configurations consistently failed when tasked with targeted generation toward the extreme pKa tails. Even the conditionally trained variants, where pKa was explicitly fed as a conditioning signal, could not steer the decoder toward the sparse extreme regimes. Optimization in the latent space—whether driven by gradient‑based traversal along a property predictor or by conditioning on desired pKa ranges—invariably either collapsed the output toward the densely populated central region or produced invalid, undecodable strings. This behavior proved independent of latent dimensionality, string representation, and regularization strength, confirming the fundamental dilemma outlined earlier: stronger regularization induced posterior collapse and discarded chemically meaningful variation, while weaker regularization preserved reconstruction but yielded a noisy, unstructured latent space in which exploration remained effectively blind.
These systematic failures across multiple non‑quantum generative configurations underscored a critical limitation of continuous latent representations: without a principled combinatorial search mechanism, they could not escape the dominant central distribution to reach the sparse property extremes. This interim conclusion directly motivated the discrete binary bottleneck and the quantum‑inspired optimization framework developed in the following section, where the generative task was recast as a combinatorial optimization problem amenable to coherent Ising machines. Detailed training configurations, hyperparameter settings, and full generative performance metrics for all six continuous VAE models were provided in the Appendix A4.

\section{Engineered Models-Quantum Optimization}

\subsection{Discrete Latent Space Construction \textit{via} SELFIES Vocabulary Encoding}
To circumvent the inherent limitations of continuous latent generative models in sampling molecules with sparse tail pKa, we constructed an engineered generative framework tailored for quantum and quantum-inspired hardware. The core design adopted a discrete latent paradigm that standardizes molecular representation into fixed-bit binary codes, enabling direct compatibility with physical quantum annealers and coherent Ising machines (CIMs).
Herein, we employed SELFIES as the molecular string grammar to ensure structural validity and chemical robustness. All molecular graphs were first converted into standardized SELFIES sequences, followed by vocabulary-based token embedding and fixed-length padding to unify sequence dimensionality for batch-wise Transformer training.
The padded token sequences integrated with positional encoding were fed into a symmetric Transformer encoder to extract hierarchical structural and physicochemical features. To convert continuous sequence embeddings into quantum-ready representations, we adopt a differentiable Gumbel-Softmax binary bottleneck, encoding high-dimensional semantic features into a compact  128-dimensional binary latent spacedenoted as $\bm{x}\in\{0,1\}^{128}$. During training, the Gumbel-Softmax temperature is linearly annealed 
to balance gradient estimability and binary constraints; at inference, one-hot argmax binarization yields deterministic 128-bit latent codes, where each bit encodes disentangled molecular substructures and property contributions.

The Transformer-VAE is optimized with a composite loss consisting of reconstruction loss and KL divergence regularization toward an isotropic Bernoulli prior:
\begin{equation}
\mathcal{L}_{\text{total}} = \mathcal{L}_{\text{recon}} + \beta \cdot \max\left(0, \mathcal{L}_{\text{KL}} - \lambda_{\text{fb}}\right)
\end{equation}

A $\beta$-annealing strategy is applied to gradually strengthen latent regularization, while a free-bit threshold $\lambda_{\text{fb}}$ avoids over-regularization and chemical semantic erosion. Trained on unlabeled molecular datasets including iBonD,  ZINC and Enamine, the model achieves superior molecular reconstruction accuracy and valid novel molecule generation performance. The strong linear correlation between the latent space and the synthetic accessibility (SA) score of molecular additive properties \cite{41, 42, 43} further verifies the physical interpretability of the 128-dimensional binary latent space (Appendix FIG. A13). Detailed training specifications are provided in the appendix A5.

\paragraph*{Pseudocode.}
Algorithm summarizes the entire encoding and decoding process for a single molecule during training.

{\footnotesize
\setlength{\parskip}{0pt}
\setlength{\parindent}{0pt}
\renewcommand{\baselinestretch}{0.92}\selectfont

\noindent\rule{\linewidth}{0.8pt}

\vspace{2pt}
\textbf{Binary SELFIES Autoencoder Training and Reconstruction}
\vspace{2pt}

\noindent\rule{\linewidth}{0.4pt}

\textbf{Input:}
SELFIES string $S$;
maximum sequence length $L_{\max}$;
temperature $\tau$;
free-bit threshold $\lambda_{\mathrm{fb}}$;
KL weight $\beta$.

\textbf{Output:}
Reconstructed SELFIES $\hat{S}$;
total loss $\mathcal{L}_{\mathrm{total}}$.

\vspace{2pt}
\noindent\rule{\linewidth}{0.4pt}

\textbf{Encoding}

1. Tokenize $S$ into $(s_1,\ldots,s_L)$ and pad to length $L_{\max}$ using [PAD].

2.
\[
H^{(0)}
=
\mathrm{Embed}(\tilde{s})
+
\mathrm{PE}
\]

3. \textbf{for} $\ell=1$ to $N$ \textbf{do}

\hspace*{1em}
$H' \leftarrow \mathrm{LayerNorm}(H^{(\ell-1)})$

\hspace*{1em}
$A \leftarrow \mathrm{MultiHeadAttention}(H',H',H')$

\hspace*{1em}
$H_{\mathrm{interim}}
\leftarrow
H^{(\ell-1)}
+
\mathrm{Dropout}(A)$

\hspace*{1em}
$H_{\mathrm{ffn}}
\leftarrow
\mathrm{SwiGLU}
(
\mathrm{LayerNorm}
(H_{\mathrm{interim}})
)$

\hspace*{1em}
$H^{(\ell)}
\leftarrow
H_{\mathrm{interim}}
+
\mathrm{Dropout}(H_{\mathrm{ffn}})$

4. \textbf{end for}

5. $c \leftarrow \mathrm{AttentionPooling}(H^{(N)})$

6. $q_{\mathrm{logits}}
\leftarrow
cW_{\mathrm{to\_latent}}$

\textbf{Binarization}

7. \textbf{for} $j=1$ to $D_L$ \textbf{do}

\hspace*{1em}
$p_{j,0}\leftarrow1,\;
p_{j,1}\leftarrow\exp(q_{\mathrm{logits},j})$

\hspace*{1em}
Sample
$g_0,g_1\sim\mathrm{Gumbel}(0,1)$

\hspace*{1em}
\[
\pi_{j,k}
=
\frac{
\exp((\log p_{j,k}+g_k)/\tau)
}{
\sum_m
\exp((\log p_{j,m}+g_m)/\tau)
}
\]

\hspace*{1em}
Forward:
$x_j\leftarrow\mathrm{onehot\_argmax}(\pi_j)$

\hspace*{1em}
Backward:
use $\pi_j$ as continuous relaxation

8. \textbf{end for}

9. $x=(x_1,\ldots,x_{D_L})$

\textbf{Decoding}

10. $z\leftarrow xW_{\mathrm{from\_latent}}$

11. $M_x\leftarrow \mathrm{Repeat}(z,L_{\max})$

12. Initialize $y_1=[\mathrm{SOS}]$

13. \textbf{for} $t=2$ to $L_{\max}$ \textbf{do}

\hspace*{1em}
Run decoder layers with masked self-attention and cross-attention to $M_x$

\hspace*{1em}
$l_t\leftarrow\mathrm{Linear}(h_{\mathrm{dec},t})$

\hspace*{1em}
$\hat{s}_t\leftarrow\arg\max(l_t)$

\hspace*{1em}
\textbf{if}
$\hat{s}_t=[\mathrm{EOS}]$
\textbf{then break}

14. \textbf{end for}

\textbf{Loss Computation}

15.
\[
\mathcal{L}_{\mathrm{recon}}
=
-\sum_t
\log p(s_t^{\mathrm{true}}
\mid
\hat{s}_{<t},x)
\]

16.
\[
\mathcal{L}_{\mathrm{KL}}
=
\sum_{j=1}^{D_L}
\left[
\sigma(q_j)
\log\frac{\sigma(q_j)}{0.5}
+
(1-\sigma(q_j))
\log\frac{1-\sigma(q_j)}{0.5}
\right]
\]

17.
\[
\mathcal{L}_{\mathrm{total}}
=
\mathcal{L}_{\mathrm{recon}}
+
\beta
\max
\left(
0,
\mathcal{L}_{\mathrm{KL}}
-
\lambda_{\mathrm{fb}}
\right)
\]

18. Return $(\hat{S},\mathcal{L}_{\mathrm{total}})$

\vspace{2pt}
\noindent\rule{\linewidth}{0.8pt}
}

Notably, the 128-bit fixed binary design is purposely optimized for physical quantum hardware deployment. Binary latent variables naturally align with qubit states, Ising spins and QUBO binary variables, enabling direct one-to-one mapping between latent bits and computing units of quantum annealers or physical CIMs. This design allowed seamless migration from classical generative modeling to physical quantum sampling.

Overall, the integrated pipeline unifies SELFIES tokenization, vocabulary encoding, sequence padding, Transformer feature learning and Gumbel-Softmax discretization into an end-to-end module. It avoids the posterior collapse of generative models, and provides a hardware-friendly latent foundation for subsequent factorization machine surrogate fitting, QUBO matrix construction, and quantum-driven sparse-pKa molecular generation.

\subsection{Factorization Machine Surrogate Fitting with Long-Tail Optimization}
To construct an explicit energy landscape linking the discrete latent space to pKa, we trained a factorization machine (FM) as a surrogate predictor. The FM models pairwise feature interactions between binary latent bits, enabling fitting of nonlinear relationships between discrete representations and pKa values. 

The FM directly predicts the pKa value from the binary latent code $\mathbf{x} \in \{0,1\}^{128}$ as a sum of linear and pairwise interaction terms:
\begin{equation}
\hat{y}(\mathbf{x}) = w_0 + \sum_{i=1}^n w_i x_i + \sum_{i=1}^n \sum_{j=i+1}^n (\mathbf{v}_i \cdot \mathbf{v}_j)\, x_i x_j,
\label{eq:fm_ij}
\end{equation}
where $\hat{y}(\mathbf{x})$ is the predicted pKa, $w_0$ is the bias, $\mathbf{w} \in \mathbb{R}^n$ are linear weights, and $\mathbf{V} \in \mathbb{R}^{n \times k}$ is the latent factor matrix with each feature $i$ corresponding to a latent vector $\mathbf{v}_i$.

For later algebraic manipulation, it is convenient to symmetrize the pairwise sum. Because each pair $(i,j)$ with $i<j$ appears once, writing it as a double sum over all ordered pairs would count each interaction twice; this double‑counting is corrected by a factor $1/2$ and an explicit exclusion of self‑interactions ($i \neq j$):
\begin{equation}
\hat{y}(\mathbf{x}) = w_0 + \sum_{i=1}^n w_i x_i + \frac{1}{2} \sum_{i=1}^n \sum_{\substack{j=1 \\ j \neq i}}^n (\mathbf{v}_i \cdot \mathbf{v}_j) x_i x_j.
\label{eq:fm_pred}
\end{equation}

For efficient training, we use an algebraically equivalent reformulation that reduces the complexity from $O(n^2)$ to $O(nk)$. The equivalence relies on expanding the sum over all $i,j$ (including $j=i$) and then subtracting the self‑interaction terms:
\begin{equation}
\frac{1}{2} \sum_{i=1}^n \sum_{\substack{j=1 \\ j \neq i}}^n (\mathbf{v}_i \cdot \mathbf{v}_j) x_i x_j 
= \frac{1}{2} \sum_{i=1}^n \sum_{j=1}^n (\mathbf{v}_i \cdot \mathbf{v}_j) x_i x_j - \frac{1}{2} \sum_{i=1}^n (\mathbf{v}_i \cdot \mathbf{v}_i) x_i^2.
\label{eq:fm_expand}
\end{equation}

Using the identity $\sum_{i=1}^n\sum_{j=1}^n (\mathbf{v}_i \cdot \mathbf{v}_j) x_i x_j = \sum_{f=1}^k \big[(\sum_{i=1}^n v_{i,f} x_i)^2 - \sum_{i=1}^n v_{i,f}^2 x_i^2 \big]$, we obtain the linear‑complexity form:
\begin{equation}
\hat{y}(\mathbf{x}) = w_0 + \sum_{i=1}^n w_i x_i + \frac{1}{2} \sum_{f=1}^k \left[ \Bigl(\sum_{i=1}^n v_{i,f} x_i\Bigr)^2 - \sum_{i=1}^n v_{i,f}^2 x_i^2 \right].
\label{eq:fm_linear}
\end{equation}

Note that the $\sum_{i=1}^n (\mathbf{v}_i \cdot \mathbf{v}_i) x_i^2$ term in Eq.~\eqref{eq:fm_expand} corresponds exactly to $\sum_{i=1}^n \sum_{f=1}^k v_{i,f}^2 x_i^2$, so the subtraction in Eq.~\eqref{eq:fm_linear} correctly removes the self‑interactions. Thus, Eqs.~\eqref{eq:fm_pred} and \eqref{eq:fm_linear} are mathematically identical; we adopt Eq.~\eqref{eq:fm_linear} for training and inference to ensure efficiency without approximation error.

We used a hybrid training corpus combining the high-quality aqueous potentiometric measurement (PTM) subset from the iBonD experimental database and the high-confidence aqueous pKa pseudo-labels generated in Section II.

Initial evaluations of standard quadratic FMs with factorization dimensions k = 8-128 yielded test set \(R^2 < 0.4\) and severe underfitting in extreme pKa tails. Higher-order surrogates amenable to QUBO extraction, including cubic FM and conditional Boltzmann machines, showed no improvement over quadratic FMs and exhibited worse tail generalization.

To address long-tail sparsity, we first implemented pKa-bucket-specific sample weighting, which improved tail prediction but degraded overall \(R^2\) to below 0.2. We then introduced a balanced loss function with a soft \(R^2\) constraint to resolve the trade-off between tail performance and global fit:
\begin{equation}
\mathcal{L} = \alpha \cdot \frac{1}{N_{\text{total}}} \sum_{i=1}^{N_{\text{total}}} w_i \big(y_i - \hat{y}_i\big)^2 - (1-\alpha) R^2,
\end{equation}

Here, $y_i$ and $\hat{y}_i$ denote experimental and predicted pKa values, respectively; $w_i$ is the bucket-specific sample weight defined as 
\[
w_i = \sqrt{\frac{N_{\text{total}}}{n_{\text{bucket},i} \times N_{\text{buckets}}}}
\]

where $N_{\text{total}}$ is the total number of training samples, $n_{\text{bucket},i}$ is the sample count in the $i$-th pKa bucket, and $N_{\text{buckets}}$ is the total number of buckets). Attempts with weaker reweighting, such as cube-root scaling (i.e., $w_i \propto n_{\text{bucket},i}^{-1/3}$), failed to correct the extreme tail underfitting. $R^2$ is the coefficient of determination, and $\alpha \in [0, 1]$ controls the trade-off between weighted prediction error and global fitting accuracy. After systematic hyperparameter tuning, we identified the optimal configuration for our FM surrogate: the trade-off parameter $\alpha = 0.3$ and factorization inner dimension $k = 64$ yielded the best predictive performance (full pipeline and representative results are summarized in FIG.5, the full training algorithm is summarized in Table~\ref{tab:alg_fm_train}).

\begin{figure}[H]
\centering
\includegraphics[width=0.98\textwidth]{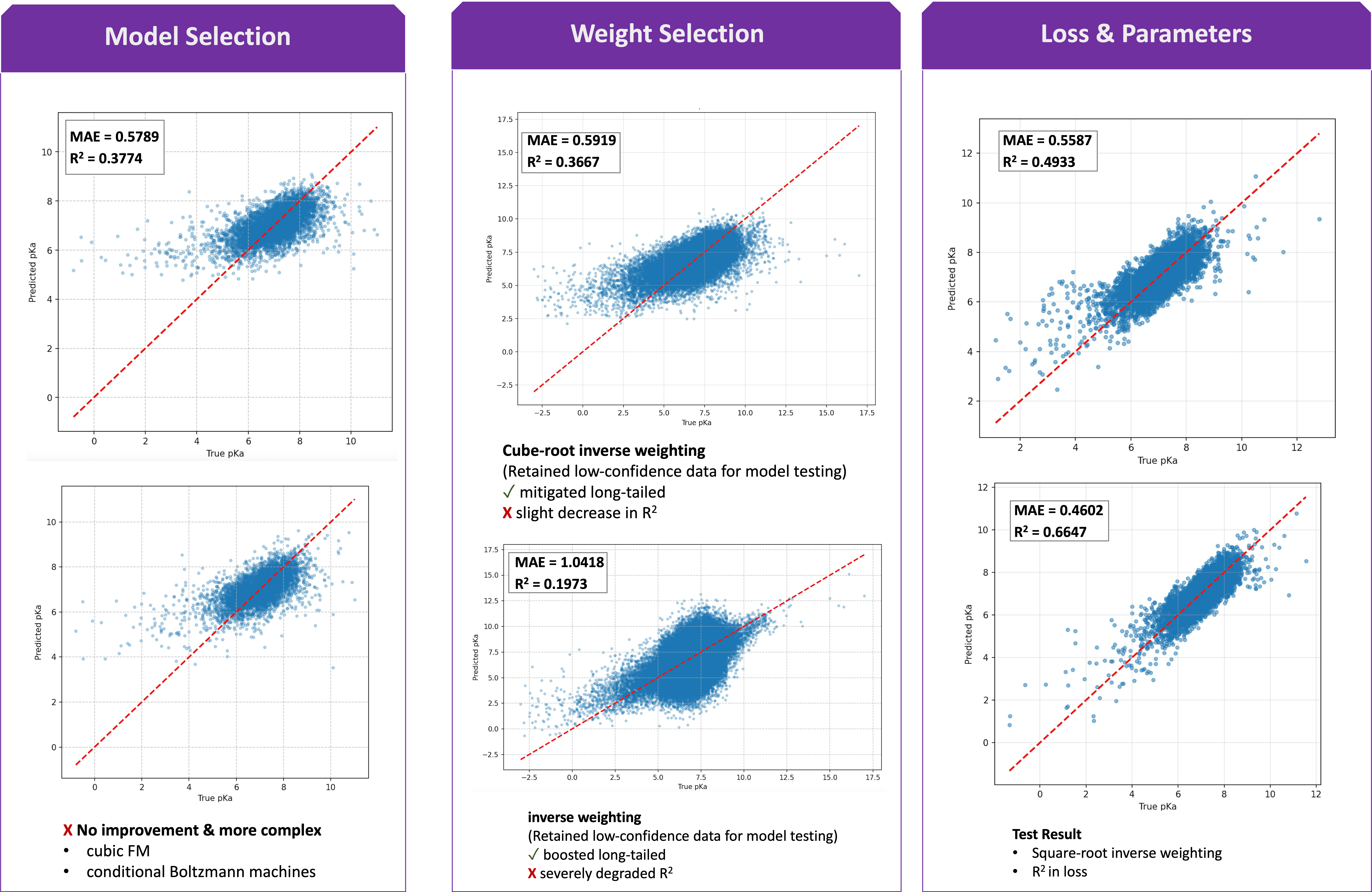}  
\caption{Pipeline \& Representative Results}
\label{fig5}
\end{figure}

For engineering efficiency, we reimplemented this complex loss pipeline in PyTorch with the Adam optimizer, achieving a significant speedup over NumPy. To enable rapid post-hoc evaluation of quantum sampling outputs, we additionally trained a suite of regression models that directly map binary latent codes to pKa values for downstream engineering use. Among all tested architectures, extensively optimized XGBoost regressors achieved the highest performance, yet even the best variant only yielded a test set \(R^2\)  just above 0.7 (detailed results in Appendix A6), a result that reflects the enbedded capacity limit of the 128‑dimensional discrete latent space. Operating under the same constraint, our FM model was pushed to its performance limits through extensive model tuning, and subsequent quantum sampling experiments confirmed that this level of surrogate fidelity was both indispensable and adequate for our target task. This engineering reality also underscores that the current end‑to‑end workflow still relies on high‑cost descriptor‑based models for final validation.

\begin{table}[t]
\centering
\caption{Long-tail Optimized FM Training}
\label{tab:alg_fm_train}
\small
\renewcommand{\arraystretch}{0.8} 
\begin{tabular}{l}
\toprule
\textbf{Input:} \\
\quad Binary latent vectors $\boldsymbol{z}$ \\
\quad Experimental/pseudo-label pKa values $y$ \\
\midrule
\textbf{Output:} \\
\quad Optimized FM parameters \\
\midrule
\textbf{Algorithm Steps:} \\
1.~~Partition pKa values into buckets \\
2.~~Compute bucket weights: $w_i \propto 1 / n_{\text{bucket}}^\alpha$ \\
3.~~Initialize FM parameters: \\
\qquad linear weights $\boldsymbol{w}$ \\
\qquad latent vectors $\boldsymbol{V}$ \\
4.~~\textbf{while} not converged \textbf{do} \\
5.~~\quad Compute FM prediction $\hat{y}$ \\
6.~~\quad Compute weighted MSE \\
7.~~\quad Compute soft $R^2$ regularization \\
8.~~\quad Update parameters using Adam \\
9.~~\textbf{end while} \\
10.~Return optimized FM model \\
\bottomrule
\end{tabular}
\end{table}

\subsection{From FM Surrogate to Quantum-Inspired Solvers}

The quadratic structure of the optimized FM model enabled direct, lossless extraction of a quadratic unconstrained binary optimization (QUBO) matrix, which served as the input to both classical simulated quantum annealers and physical coherent Ising machines. Using the FM prediction $\hat{y}(\mathbf{x})$ defined in \eqref{eq:fm_pred}, we cast molecular generation as a minimization problem compatible with quantum optimizers by defining the QUBO energy function as the negative of the FM prediction shifted by the bias:

\begin{equation}
E(\mathbf{x}) = \mathbf{x}^T \mathbf{Q} \mathbf{x} = -\hat{y}(\mathbf{x}) + w_0
\end{equation}
The QUBO matrix $\mathbf{Q}$ was constructed directly from the trained FM parameters:
\begin{equation}
\mathbf{Q} = -\left( \mathrm{diag}(\mathbf{w}) + \mathbf{V} \mathbf{V}^T \right)
\end{equation}
where diagonal elements $Q_{ii} = -w_i$ correspond to linear terms, and off-diagonal elements $Q_{ij} = -\mathbf{v}_i \cdot \mathbf{v}_j$ (for $i < j$) capture pairwise feature interactions. This transformation establishes a monotonic inverse relationship between QUBO energy and pKa: minimizing $E(\mathbf{x})$ drives the search toward high pKa values (very weak acids), while maximizing $E(\mathbf{x})$ drives the search toward low pKa values (strong acids).

\begin{figure}[H]
\centering
\includegraphics[width=0.98\textwidth]{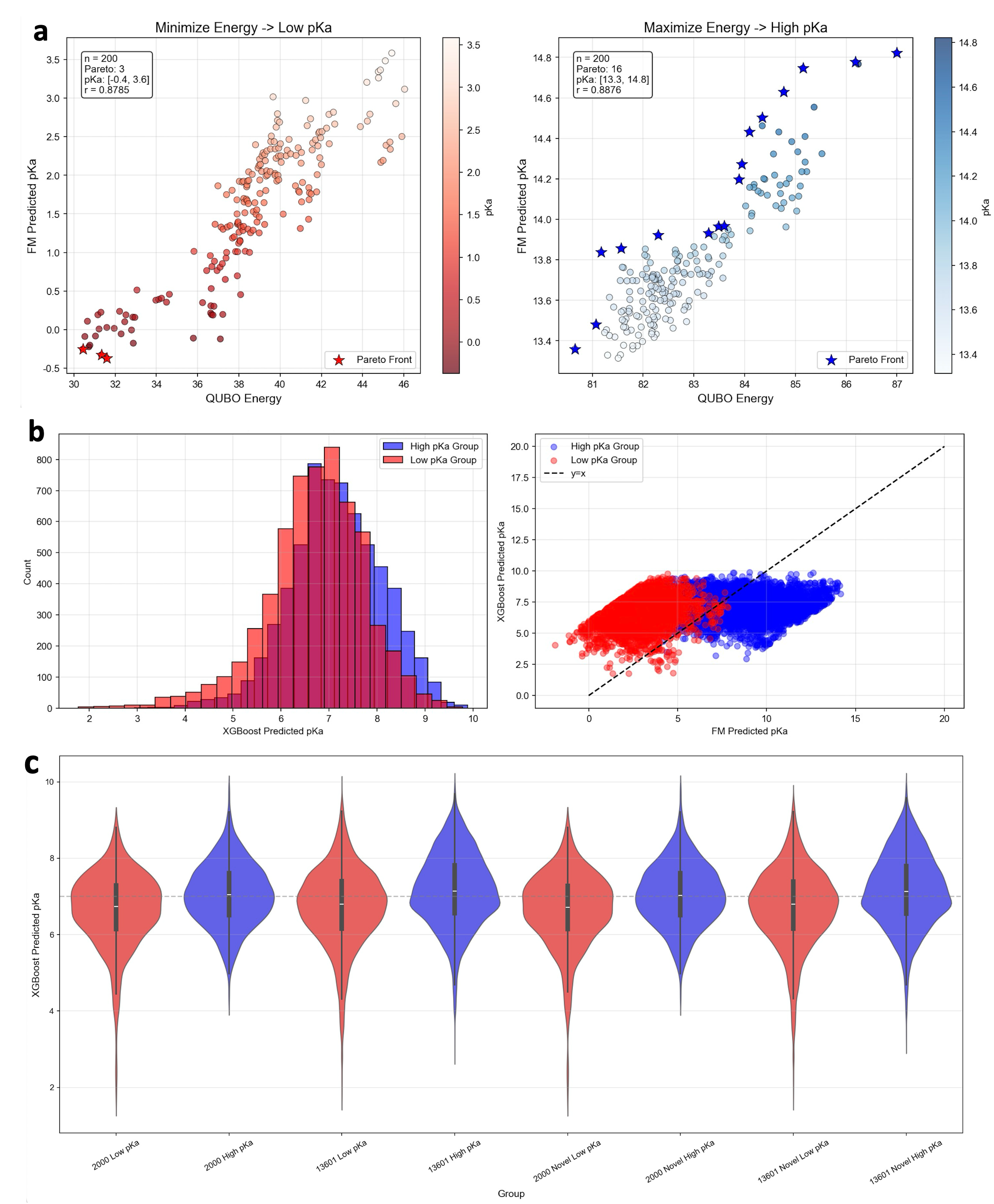}  
\caption{Preliminary Results Using SA Solution}
\label{fig6}
\end{figure}

We first validated the pipeline using a classical simulated quantum annealer implemented \textit{via} the Kaiwu SDK. For the long-tail-optimized FM model, we observed strong linear correlations between QUBO energy and FM-predicted pKa values across both extreme regions: a Pearson correlation coefficient of $r = 0.8785$ for the low-pKa region ($[-0.4, 3.6]$) and $r = 0.8876$ for the high-pKa region ($[13.3, 14.8]$) when sampling 200 solutions (FIG. 6a). These correlations improved further with increased sampling size, exceeding 0.92 for 10,000+ samples. In contrast, QUBO matrices extracted from FM variants without proper long-tail optimization exhibited negligible correlations in extreme regions, confirming the critical importance of our tailored surrogate design (detailed parameters and results provided in the Appendix A7).

Solutions obtained from the annealer were decoded back to molecular structures via the pre-trained SELFIES decoder. Notably, the fraction of samples possessing novel latent representations relative to the full train-test dataset together with valid final SMILES exceeds 90 \% for the simulated quantum annealer (distinct latent representations do not always yield unique molecules, but this metric remains a reasonable indicator of molecular novelty; details are discussed later). This substantially outperforms the novel-and-valid generation ratio from RNN-VAE in Section III. 

For each generated molecule, we computed the full set of molecular descriptors and evaluated its pKa using our ensemble of high-accuracy regression models (WaterOptimized and WaterBest). A multi-stage confidence filtering pipeline was applied: only molecules with inter-model agreement within 1.0 pKa unit were retained as high-confidence candidates. This workflow yielded a substantial collection of novel molecules in the sparse pKa tails, achieving moderate improvements over both the regression-based pseudo-labeling strategy and conditional RNN-VAE generative model, which struggle with labor-intensive sampling to populate these sparse regions. However, the distribution of high-confidence candidates still exhibited a residual normal tendency around $\mathrm{p}K_\mathrm{a}=7$ (FIG. 6 b\&c). On one hand, scaffold-dependent prediction accuracy is limited. Although sampled points agree closely with FM predictions and cluster within FM-targeted extreme regions, most candidates cannot pass joint validation from multiple high-precision regression models. On the other hand, this feature plausibly stems from relatively uniform sampling of the simulated quantum annealer. In view of the obtained results and existing limitations, we further investigate sampling performance using physical quantum CIM hardware.

\subsection{CIM Optimization with 8-bit and 14-bit Precision}

\begin{figure}[H]
\centering
\includegraphics[width=0.98\textwidth]{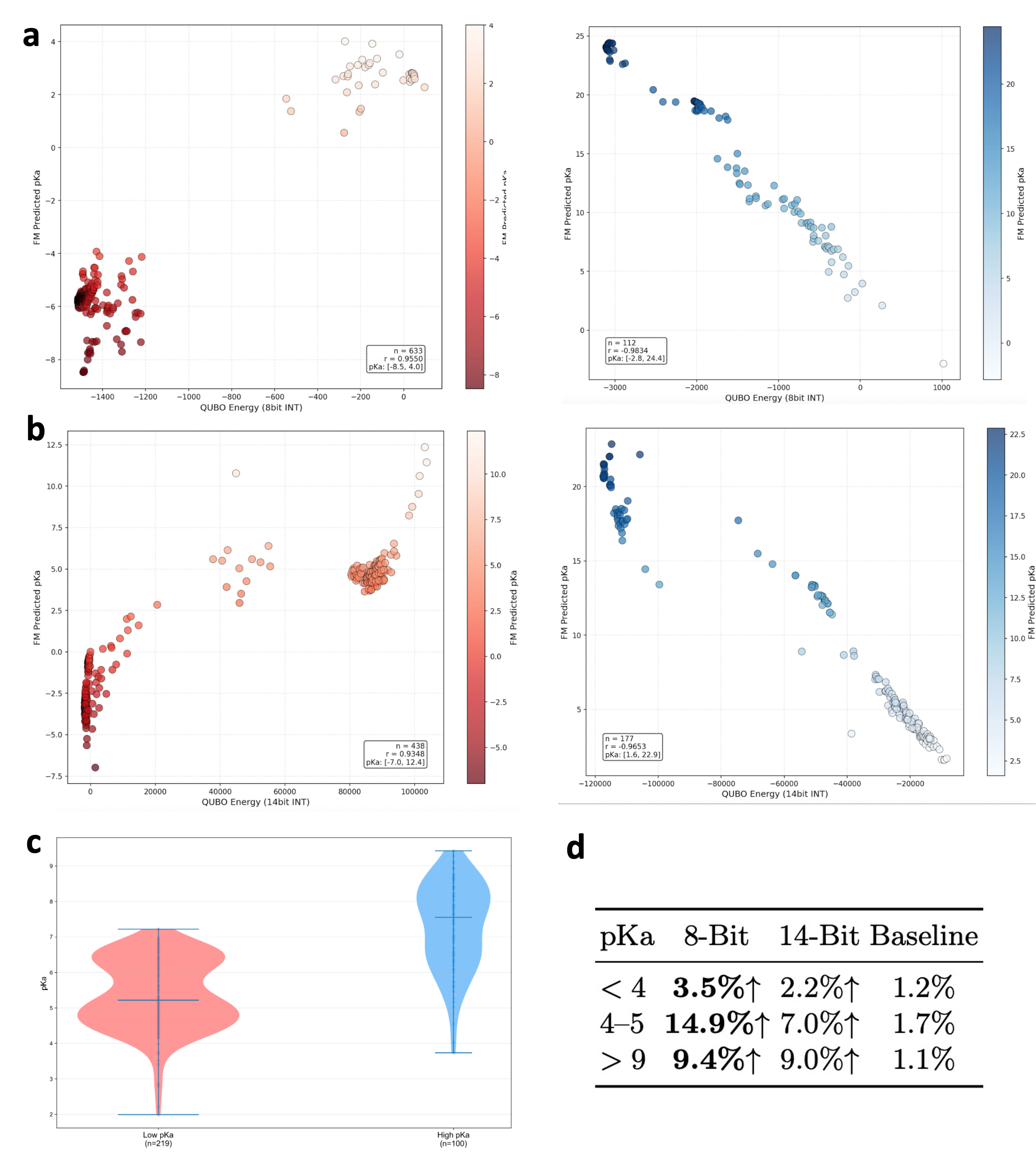}  
\caption{Results with CIM}
\label{fig7}
\end{figure}

We deployed the QUBO matrices extracted from the best-performing FM surrogate on a physical coherent Ising machine (CIM) operating at both 8-bit and 14-bit integer precision. Since the CIM hardware natively solves only minimization problems, we implemented a matrix inversion strategy to enable bidirectional extreme-value sampling: the original QUBO matrix was used to search for high-pKa molecules (minimum energy solutions), while the negated matrix $-\mathbf{Q}$ was used to search for low-pKa molecules (equivalent to maximizing the original energy). Both configurations were operated exclusively in sampling mode rather than deterministic Quota mode, as the stochastic nature of sampling substantially enriched the diversity of latent-space-to-molecule mappings and enabled discovery of more extreme energy solutions.

The physical CIM delivered orders-of-magnitude faster sampling than classical simulated annealing, acquiring 1,000 candidate solutions in under 30 ms across all precision settings. For the long-tail-optimized FM model, the Pearson correlation between FM-predicted pKa and QUBO energy remained stably above 0.9 for both low-pKa and high-pKa regions under both precision modes. The sampled results are mainly concentrated at extreme energy points, and part of these extreme values originate from out-of-distribution (OOD) samples introduced by reverse data, which will be elaborated in subsequent ablation experiments (FIG. 7a and FIG. 7b present the results under 8-bit and 14-bit precision, respectively).

\begin{table}[h!]
\centering
\caption{CIM-based Sparse-pKa Molecular Sampling}
\label{tab:alg_cim_sampling}
\small
\renewcommand{\arraystretch}{0.75}
\begin{tabular}{l}
\toprule
\textbf{Input:} \\
\quad QUBO matrix $\boldsymbol{Q}$ \\
\quad Precision mode $p$ \\
\quad Number of samples $N$ \\
\midrule
\textbf{Output:} \\
\quad High-confidence molecules $\boldsymbol{M}$ \\
\midrule
\textbf{Algorithm Steps:} \\
1.~~Configure CIM sampler with precision $p$ \\
2.~~\textbf{for} $k = 1$ to $N$ \textbf{do} \\
3.~~\quad Sample binary solution $\boldsymbol{z}_k$ \\
4.~~\quad Decode $\boldsymbol{z}_k$ into molecule $\boldsymbol{m}_k$ \\
5.~~\quad Validate molecular structure \\
6.~~\textbf{end for} \\
7.~~Predict pKa using: \\
\qquad a) FM surrogate \\
\qquad b) WaterBest \\
\qquad c) WaterOptimized \\
8.~~Retain molecules satisfying: $|\text{pKa}_i - \text{pKa}_j| < \text{threshold}$ \\
9.~~Remove duplicates and invalid structures \\
10.~Return filtered candidate set $\boldsymbol{M}$ \\
\bottomrule
\end{tabular}
\end{table}

A clear trade-off emerged between the two precision configurations. The 14-bit mode provided substantially superior Spearman rank correlation (0.887 versus 0.380 for 8-bit), faithfully preserving the relative ordering of pKa values in the energy landscape, with FM-predicted pKa ranges spanning $[-7.0, 12.41]$ for low-pKa sampling and $[1.6, 22.9]$ for high-pKa sampling across 1,000 solutions. In contrast, the 8-bit mode, despite lower rank fidelity, exhibited higher stochasticity and broader extreme-value coverage, generating solutions spanning $[-8.5, 4.0]$ for low-pKa and $[-2.8, 24.4]$ for high-pKa. This demonstrates that lower precision sacrifices ranking reliability but enhances exploration, enabling more effective probing of the edges of the FM-predicted property landscape. Further analysis revealed that 8-bit quantization acts as an implicit hard-threshold sparsification operator, naturally filtering out approximately 34.5\% of weak coupling edges from the original 406-edge QUBO matrix and resulting in a 266-edge sparse matrix. This natural sparsification may provided favorable regularization compared to both full-precision operation and active edge-selection methods. Notably, the 14-bit configuration retaining 400 edges exhibited slightly lower performance than the 8-bit 266-edge version, suggesting that preserving all weak interactions may introduce more noise than useful information for extreme-value sampling in this task (detailed spherical coupling topology was shown in FIG. 8).

All CIM-generated solutions were decoded to molecular structures via the SELFIES decoder and evaluated using our ensemble of high-accuracy regression models. Multi-model evaluation confirmed that CIM-generated candidates outperformed those from classical simulated annealing in populating high-confidence sparse-pKa regions: the number of high-confidence molecules with pKa $<$ 5 was significantly increased, and a substantial number of candidates with pKa $>$ 9 were also obtained. Although the fraction of genuinely novel molecules within extreme pKa ranges remains constrained, the corresponding proportion has been markedly elevated. The violin plots summarizing final high-confidence screening results are provided in FIG. 7c (full algorithm is summarized in Table~\ref{tab:alg_cim_sampling}), the resulting performance markedly surpasses that from high-gain-based filtering (detailed comparisons provided in Appendix A8). Notably, our core focus lies on the practical cost of discovering extreme-pKa molecules \textit{via} the proposed pipeline. We therefore further quantified the fraction of extreme-range samples among total generated candidates, as all such specimens require costly descriptor extraction in practical engineering workflows. The computed metrics demonstrate clear superiority over all prior generation strategies (FIG. 7d).

\subsection{In-Depth Research \& Future Prospects}

\subsubsection{Absolute Deviation Regression Trials}
We further conducted comparative regression experiments in which the target response was the absolute deviation of the experimental $\mathrm{p}K_\mathrm{a}$ from a fixed reference value of 3:
\[
d_i = |\,y_i - 3\,|,
\]

where $y_i$ is the experimentally measured $\mathrm{p}K_\mathrm{a}$ of molecule $i$. A separate factorization machine was trained to predict $d$ directly from the binary latent code $\mathbf{x}$:
\begin{equation}
\hat{d}(\mathbf{x}) = w_0 + \sum_{i=1}^n w_i x_i + \frac{1}{2} \sum_{i=1}^n \sum_{j=1}^n (\mathbf{v}_i \cdot \mathbf{v}_j) x_i x_j.
\label{eq:fm_abs}
\end{equation}

Note that $\hat{d}(\mathbf{x})$ is the output of a standard quadratic FM fitted to the absolute deviation values; it is not obtained by post‑processing the $\mathrm{p}K_\mathrm{a}$ predictor, and therefore retains the exact quadratic form. Training with this absolute‑deviation response did not significantly optimize the property distribution of high‑confidence molecules obtained \textit{via} quantum solving and generation pipelines, despite yielding better statistical metrics; the apparent improvement originated from the numerical folding of the distribution around $\mathrm{p}K_\mathrm{a} = 3$ rather than from a genuine structure–property mapping (detailed results in Appendix A9). Regression trials on a target that contains a quadratic term in $\mathrm{p}K_\mathrm{a}$ also failed to produce any meaningful fit.

\subsubsection{Negative Data Ablation Experiments}

To further verify the contribution of high-pKa negative samples to latent-space diversity, we performed an ablation test by removing all samples with $\mathrm{p}K_\mathrm{a}>8$ before FM training. After refitting the factorization machine and extracting updated QUBO matrix for CIM sampling, the resulting sparse-pKa molecular distribution was markedly degraded compared with the full-dataset training scheme (details in Appendix A10).

This observation demonstrates that high-pKa extreme samples provide complementary, asymmetric latent-space features for low-pKa counterparts; the latent embedding space does not follow symmetric interaction rules across the full property range. The counterintuitive out-of-distribution (ODD) extreme values on the FM energy landscape essentially originate from antagonistic latent constraints introduced by these high-pKa negative samples. Although retaining full-range data expands the energy landscape beyond initial statistical confidence bounds and yields apparent ODD outliers, removing high-pKa data truncates latent-bit flipping diversity in discrete latent space. Restricted single-bit perturbation destroys complementary opposite-direction constraints and impairs downstream quantum sampling performance. Collectively, joint training on full-spectrum pKa data benefits latent feature complementarity and stabilizes the generative surrogate model.

\subsubsection{Proposed Unified Insight}
\begin{figure}[H]
\centering
\includegraphics[width=1\textwidth]{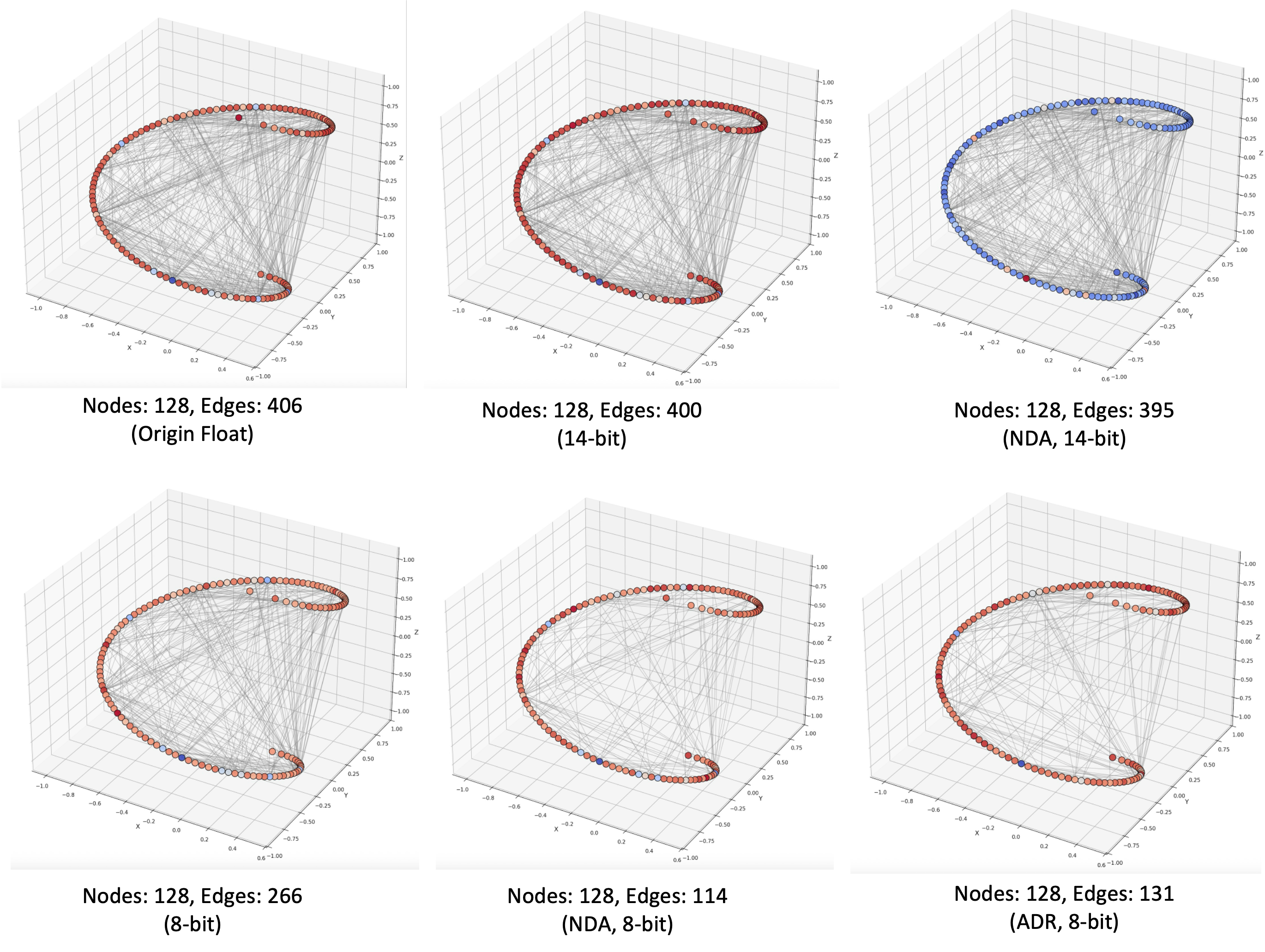}  
\caption{Spherical Coupling Topology of Different QUBO Matrices}
\label{fig8}
\end{figure}

We summarize the experimental results of two previous strategies, namely negative-data ablation (NDA) and Absolute Deviation Regression (ADR) Trials , and rationalize all observed performance discrepancies from the viewpoint of spherical coupling topology. Under 8-bit quantization, both pruning schemes compress the QUBO coupling edges down to 114 and 131. Excessive pruning fractures the intrinsic spherical connectivity skeleton and ruins the global correlation constructed by weak couplings, leading to deteriorated molecular sampling performance.
At the 14-bit quantization setting, nearly 400 coupling edges are retained after pruning, yet no performance gain is achieved relative to the original 14-bit baseline. Analogous to the trend observed at 8 bit, the inherent diversity of the energy landscape is already compromised, and the reserved 395 edges mainly consist of redundant noisy couplings that distort the well-defined spherical topology.
Based on these comparative findings, we draw a tentative unified conclusion: aggressive artificial pruning inevitably corrupts spherical coupling architectures, manifested as excessive sparsification at 8 bit. By contrast, the spontaneous sparsification emerging naturally from 8-bit quantization preserves intact spherical coupling topology and delivers the optimal configuration tailored for sparse-pKa molecular generation.

\subsubsection{Chemical Analysis of Latent Representations}

As mentioned earlier, all latent representations were decoded into actual molecules for multi-model evaluation. The top 100 molecules with low and high pKa values in aqueous phase are presented in the Appendix A12.
In further chemical analysis, we observed that distinct latent codes may correspond to identical molecular structures. Although SELFIES, as a directed molecular encoding scheme, greatly reduces such duplication compared with SMILES, a small number of repeated structures still exist when decoded into undirected molecular graphs. Representative examples are illustrated in FIG.9.

\begin{figure}[H]
\centering
\includegraphics[width=0.9 \textwidth]{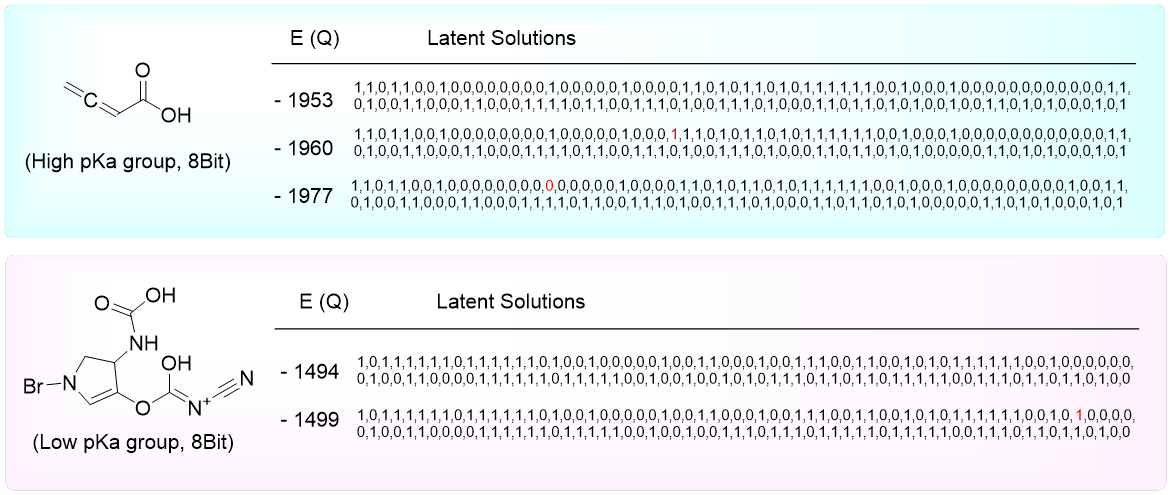}  
\caption{Multiple Latent Codes for Identical Molecules}
\label{fig9}
\end{figure}

Nevertheless, molecules generated from different latent representations exhibit nearly identical energy values throughout our computational workflow. This consistent energy performance strongly verifies the robustness of the proposed pipeline. Specifically, it demonstrates that our strategy of mapping molecules into a 128-dimensional discrete latent space, combined with the long-tail optimization of the factorization machine (FM), can stably characterize the intrinsic molecular properties. Notably, this phenomenon is almost exclusively observed under 8-bit precision, which is closely associated with the unique sampling mechanism and natural sparsification effect of 8-bit quantization. Even when distinct latent vectors are generated during sampling, the core property information of the target molecules remains highly consistent, further proving that our framework is well adapted for sparse-pKa molecular generation tasks.

\subsubsection{Future Prospects}

While the proposed VAE-FM-QUBO-CIM workflow demonstrates advantages in generating sparse-pKa molecules compared to conventional generative approaches, several important limitations remain that warrant further investigation. Most notably, the physical CIM solver successfully discovered a large number of extreme-energy solutions corresponding to the outermost edges of the FM-predicted pKa landscape. However, many of these candidate molecules fall outside the high-confidence evaluation range of our regression-based pKa predictors, as they represent chemical space regions that are vastly underrepresented in all existing experimental datasets. These extreme out-of-distribution candidates have been systematically documented for future reference.

To better harness the physical CIM's capability to abundantly sample extreme-energy solutions and to bring more of these candidates into the high-confidence evaluation window, we attempted to adjust the QUBO energy landscape by translating the linear terms of the trained FM model, which shares the same research purpose as the modification attempt shown in Eq.~\eqref{eq:fm_abs}. This approach was implemented and tested using classical simulated quantum annealing, but it failed to produce a significant increase in the number of high-confidence extreme-pKa molecules. Detailed results of these landscape modification experiments are provided in the Appendix A14.

Given the inherent limitations of computational evaluation for extremely out-of-distribution molecules, we are planning to experimentally synthesize and characterize a subset of these extreme candidates in future work. To prioritize synthetically accessible molecules for experimental validation, we explored the possibility of incorporating the synthetic accessibility (SA) score into the quantum optimization framework. Our previous analysis demonstrated that the 128-dimensional discrete latent space exhibits a strong linear correlation with SA scores, while quadratic interaction terms introduce unnecessary noise and degrade performance. We further rigorously derived the mathematical incompatibility of jointly training an FM model for both long-tail pKa prediction and uniform SA score prediction, as detailed in the Appendix A15. Moreover, we found that directly adding a linear SA term to the QUBO matrix is mathematically equivalent to the linear term translation strategy described above for reshaping the energy landscape. This equivalence indicates that, beyond the expressivity loss caused by discrete latent compression, the QUBO matrix extraction process itself imposes additional representational constraints, and further optimization of this model--quantum coupled generation pipeline is warranted in future work. Within the current framework, the low-cost SA score can still serve as a rapid post-hoc filter for generated candidates. Looking ahead, what may ultimately be required is a fundamentally new, non-additive synthetic feasibility metric---one that is both chemically meaningful and, like high-quality pKa data, exceedingly difficult to acquire at scale.

The current 128‑dimensional discrete latent space, while providing a practical bottleneck for QUBO formulation, inherently suffers from a strong dependence on molecular scaffolds rather than capturing detailed physicochemical information. This scaffold‑biased representation makes precise pKa fitting fundamentally difficult, as subtle electronic and steric effects that govern acid–base equilibria are only indirectly encoded. Nevertheless, this very limitation may become an advantage for exploring novel chemotypes: the coarser, scaffold‑level latent space could promote higher diversity in the generated molecular backbones when performing quantum‑assisted sampling. Consequently, increasing the dimensionality of the discrete latent space (e.g., to 256 or 512 bits) is a logical next step. Such augmentation would allow more fine‑grained physicochemical information to be preserved, potentially improving pKa fidelity, while simultaneously demanding a proportional increase in the number of qubits or spins supported by the quantum hardware. The entire engineering pipeline—from discrete representation learning to QUBO matrix construction and physical CIM execution—will therefore directly benefit from advancements in quantum hardware, especially from the availability of larger‑scale coherent Ising machines with higher bit precision and greater connectivity. Our ongoing work includes systematic exploration of higher‑dimensional latent spaces, with the goal of balancing generative diversity, property predictability, and hardware feasibility.

\section{Conclusion}
The augmentation of high-quality pKa data remains a challenge. To address the limited availability of experimental measurements, large-scale regression-based prediction provides a practical route for constructing substantially enlarged high-confidence pKa datasets from extensive unlabeled molecular libraries. Beyond increasing data volume, such augmentation offers an unprecedented opportunity to examine the large-scale distribution characteristics of molecular pKa values. Our analyses reveal that, although regression-driven augmentation successfully enriches the dominant regions of the pKa distribution and generates abundant high-confidence annotations, sparsely represented extreme-pKa regions emerge with inadequate coverage. This observation suggests that further improvements in pKa data coverage require not only passive prediction-based augmentation but also active exploration strategies capable of targeting rare property regimes. High-fidelity pKa prediction remains essential throughout this process, as accurate property estimation is required for reliable downstream selection, and molecular representation alone is insufficient to guarantee targeted property control.

In this work, we present an engineered models–quantum optimization paradigm that builds upon large-scale high-confidence pKa data augmentation and integrates discrete generative modeling, long-tail surrogate learning, and quantum-inspired combinatorial optimization. In addition to the generation framework, we construct multiple high-accuracy pKa prediction models that provide both the foundation for large-scale data augmentation and a rigorous evaluation backbone for generated candidates.

Our approach overcomes the critical limitations of continuous latent variable models through a synergistic Transformer-VAE architecture with a Gumbel-Softmax binary bottleneck, which maps molecules into a 128-dimensional discrete latent space that avoids posterior collapse and enables direct hardware compatibility. To bridge this discrete space with pKa properties, we develop a tailored factorization machine surrogate optimized specifically for long-tail fitting, resolving the inherent trade-off between tail performance and global accuracy through a combination of bucket-specific sample weighting and a soft \(R^2\)-constrained loss function. This design allows lossless conversion of the molecular generation task into a quadratic unconstrained binary optimization (QUBO) problem, establishing a seamless interface between generative AI and quantum computing.

We demonstrate that physical coherent Ising machines (CIMs) deliver orders-of-magnitude faster sampling and superior extreme-value performance compared to classical simulated quantum annealing, achieving stable Pearson correlations above 0.9 between QUBO energy and predicted pKa in both low- and high-pKa tail regions. Our systematic evaluation of 8-bit and 14-bit precision modes reveals a fundamental trade-off between ranking fidelity and exploration capability, providing critical practical insights for the deployment of quantum-inspired hardware in molecular discovery. 

Notably, this work identifies the first application in generative AI for chemistry that genuinely necessitates quantum-inspired computing, as the rugged energy landscape of sparse property generation plays directly to the unique strengths of quantum annealers and CIMs. As a practically grounded model for generating sparsely labeled molecular properties, the proposed workflow carries meaningful guiding significance for related studies. 

The demonstrated workflow not only produces a substantial collection of novel, high-confidence sparse-pKa molecules, but also complements regression-based data augmentation by extending coverage toward sparsely represented property regions. More broadly, it establishes a generalizable paradigm that combines machine-learning-driven high-confidence data augmentation, distribution-aware property analysis, and quantum-assisted combinatorial optimization, and can be extended to other inherently hard-to-generate molecular properties. By integrating realistic chemical scenarios with advanced machine learning and physical quantum hardware, this work opens new avenues for AI-driven functional molecule discovery and provides a blueprint for the development of next-generation quantum-accelerated scientific computing platforms.

\begin{acknowledgments}
We sincerely thank Academician Hong Ding (Fudan University) for his insightful guidance and inspiration. We are grateful to QBoson Quantum Technology Co., Ltd. for providing the Coherent Ising Machine (CIM) hardware and the Kailua SDK, with special thanks to Liye Miao. We also thank Professor Dunnan Lu and Professor Long Zhang for constructive discussions and generous assistance. We acknowledge Professor Jindong Yang and Academician Jin-Pei Cheng for their contributions to the iBond database,as well as to Professor Sanzhong Luo, Professor Long Zhang, Qi Yang, and Siyuan Liu for their exploratory work on pKa prediction. This work was supported by multiple research practice programs at Tsinghua University.
\end{acknowledgments}

\subsection*{Declaration of competing interest}

The authors have no competing interests to declare that are relevant to the content of this article.

\subsection*{Data Availability}
Given the substantial workload of this work, we will continuously update the code repository and available shared datasets at https://github.com/wrlengo/QUBO-VAE. We welcome requests for specific materials to further enrich this open-source project.

\subsection*{Ethical approval and consent to participate}
Not applicable.
\subsection*{Consent for publication}
Not applicable.


\newpage
\appendix
\renewcommand{\thefigure}{A\arabic{figure}}
\setcounter{figure}{0}

\renewcommand{\thetable}{A\arabic{table}}
\setcounter{table}{0}

\section*{appendix}

\subsection*{A1. General Information}

Except noted, all feature extraction, training, and execution were performed on a Mac Pro with a built-in Apple M4 Pro chip. The multi-dimensional interaction model and the conditional Boltzmann model were executed on an NVIDIA 3090 GPU. Large-scale feature extraction was accelerated using an NVIDIA 3090 GPU, and the discrete latent space training was performed using two NVIDIA H100 GPUs in tandem. Portions of the work were assisted by vibe coding tools such as Trae and Codex with manual verification. Some mathematical derivations were supported by the Lean tool. Molecular visualization was performed using Chemdraw.

\subsection*{A2. Details of Holistic pKa Regression Models}

\paragraph{Detailed Features.}
\needspace{8\baselineskip}
We generated a sufficient number of features, as shown in Table A1.

\begin{table}[htbp]
\centering
\small
\renewcommand{\arraystretch}{0.75}
\begin{threeparttable}
\caption{List of 25 Physicochemical Descriptors (fp\_3751 $\sim$ fp\_3775)}
\label{tab:25_physicochemical_descriptors}
\begin{tabular}{c l l l}
\toprule
No. & Descriptor Name & Definition & Feature Index \\
\midrule
1  & MolLogP               & Octanol-water partition coefficient          & fp\_3751 \\
2  & MolWt                 & Molecular weight                             & fp\_3752 \\
3  & NumHDonors            & Number of hydrogen bond donors               & fp\_3753 \\
4  & NumHAcceptors         & Number of hydrogen bond acceptors            & fp\_3754 \\
5  & TPSA                  & Topological polar surface area               & fp\_3755 \\
6  & HeavyAtomCount        & Number of heavy atoms                        & fp\_3756 \\
7  & NumAtoms              & Total number of atoms                        & fp\_3757 \\
8  & NumBonds              & Total number of bonds                        & fp\_3758 \\
9  & NumRotatableBonds     & Number of rotatable bonds                    & fp\_3759 \\
10 & RingCount             & Number of rings                              & fp\_3760 \\
11 & FractionCSP3          & Fraction of $sp^3$-hybridized carbon atoms    & fp\_3761 \\
12 & NumAromaticRings      & Number of aromatic rings                     & fp\_3762 \\
13 & NumAliphaticRings     & Number of aliphatic rings                    & fp\_3763 \\
14 & NumSaturatedRings     & Number of saturated rings                    & fp\_3764 \\
15 & NumHeteroatoms        & Number of heteroatoms                        & fp\_3765 \\
16 & MolMR                 & Molar refractivity                           & fp\_3766 \\
17 & BalabanJ              & Balaban J index                              & fp\_3767 \\
18 & Ipc                   & Information content index (Ipc)              & fp\_3768 \\
19 & Kappa1                & Kappa shape index 1                          & fp\_3769 \\
20 & Kappa2                & Kappa shape index 2                          & fp\_3770 \\
21 & Kappa3                & Kappa shape index 3                          & fp\_3771 \\
22 & MaxPartialCharge      & Maximum partial charge                       & fp\_3772 \\
23 & MinPartialCharge      & Minimum partial charge                       & fp\_3773 \\
24 & MaxAbsPartialCharge   & Maximum absolute partial charge              & fp\_3774 \\
25 & MinAbsPartialCharge   & Minimum absolute partial charge              & fp\_3775 \\
\bottomrule
\end{tabular}
\vspace{0.1em}
\flushleft
\footnotesize
\textbf{Feature Summary:} 3,776 fingerprint bits + 208 RDKit descriptors + 2 categorical features (solvent, measurement) = \textbf{Total 3,986 features}.
\end{threeparttable}
\end{table}
\FloatBarrier

\paragraph{Feature Selection.}

The initial attempt was based on our previously reported well-performing SPOC XGBoost model, with feature screening results shown in TABLE A2. On this basis, we exhaustively examined models and feature selection, with results presented in FIG A1.

\begin{table}[H]
\centering
\small
\renewcommand{\arraystretch}{0.75}
\begin{threeparttable}
\caption{Initial Feature Selection Attempts}
\label{tab:initial_feature_selection}
\begin{tabular}{l l c c c c c c}
\toprule
Model Type & Selection Method & k Value & Test MAE & Test $R^2$ & Train MAE & Train $R^2$ & Overfitting Index \\
\midrule
XGBoost   & kbest\_mutual & 1200 & 0.8196 & 0.9250 & 0.5184 & 0.9740 & 0.301 \\
XGBoost   & kbest\_mutual & 1500 & 0.9882 & 0.9013 & 0.7594 & 0.9447 & 0.228 \\
XGBoost   & RFE           & 1500 & 1.0772 & 0.8858 & 0.8843 & 0.9237 & 0.192 \\
XGBoost   & kbest         & 1500 & 1.0772 & 0.8858 & 0.8843 & 0.9237 & 0.192 \\
XGBoost   & kbest\_mutual & 2000 & 1.1897 & 0.8631 & 1.0101 & 0.8999 & 0.179 \\
LightGBM  & kbest\_mutual & 1500 & 1.3801 & 0.8141 & 1.2660 & 0.8346 & 0.114 \\
XGBoost   & kbest\_mutual & 2500 & 1.6848 & 0.7289 & 1.5512 & 0.7553 & 0.133 \\
CatBoost  & kbest\_mutual & 1500 & 1.8556 & 0.6728 & 1.7551 & 0.6817 & 0.100 \\
\bottomrule
\end{tabular}
\vspace{0.1em}
\flushleft
\footnotesize
\textbf{Notes:} The "Overfitting Index" is calculated as (Test MAE - Train MAE)/Test MAE, reflecting the generalization gap between training and test sets. No separate CSV files were saved for the training/test splits during these trials.
\end{threeparttable}
\end{table}

\begin{figure}[H]
\centering
\includegraphics[width=0.85\textwidth]{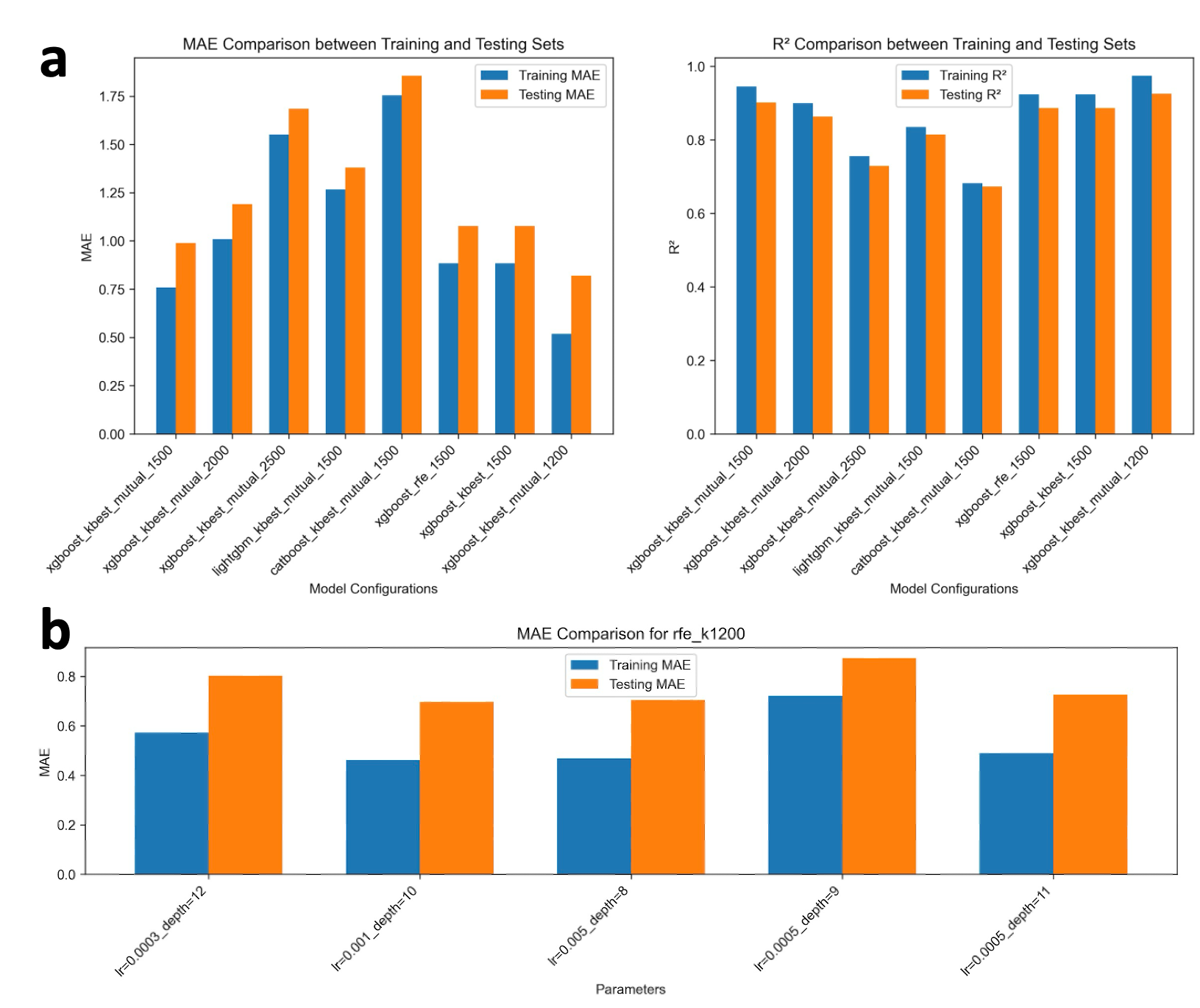}  
\caption{Selected Regression Train Results}
\label{fig2}
\end{figure}

\paragraph{Sub-optimized Results}
Since the features selected by K-best-1200 and rfe-1200 are completely identical, only the two suboptimal models, rfe-1200 and k-best-mutant-1200, combined with model optimization are shown here.
\begin{figure}[H]
\centering
\includegraphics[width=0.85\textwidth]{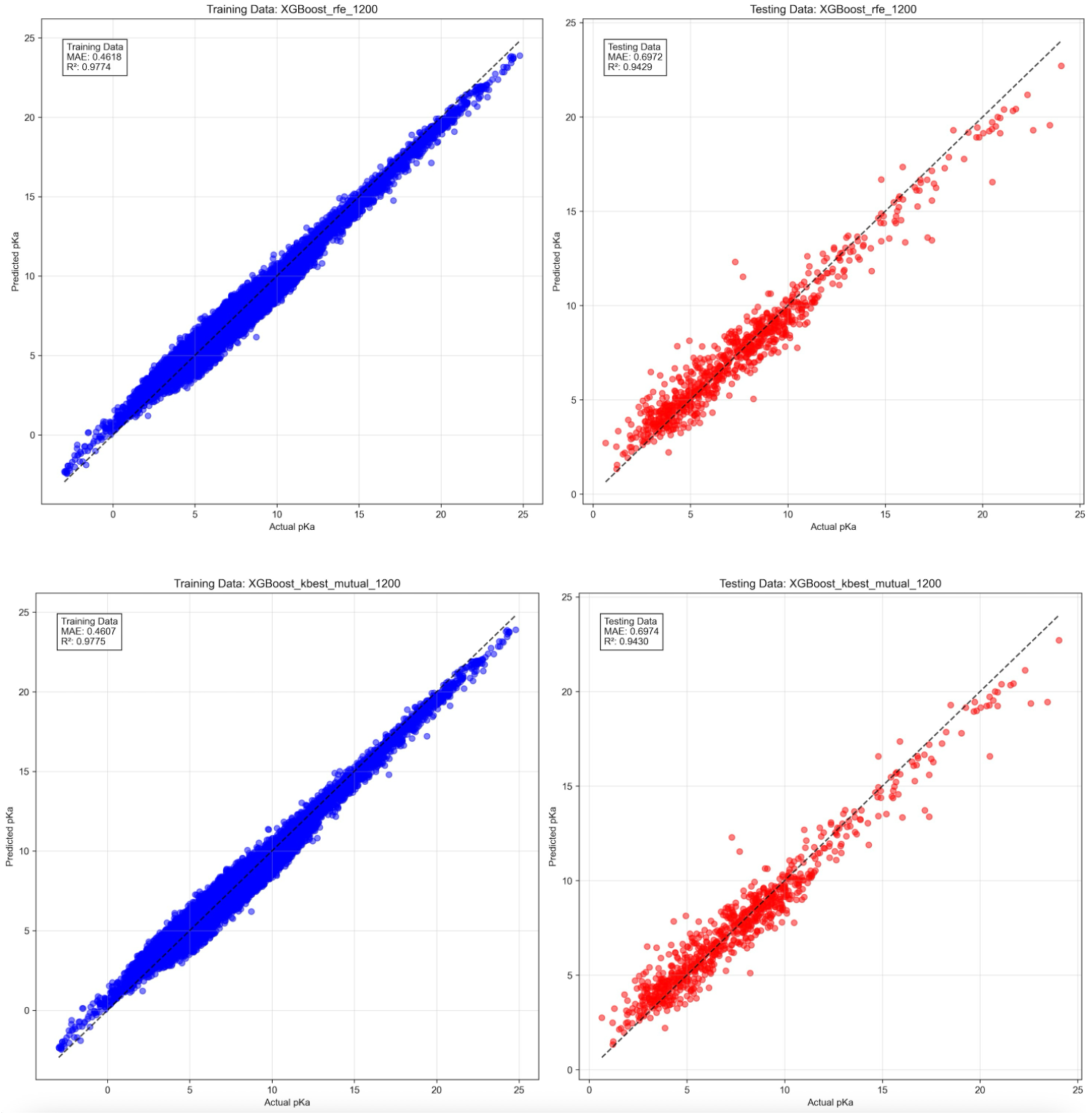}  
\caption{Sub-optimized Results}
\label{fig2}
\end{figure}

\paragraph{Final Model Details.}
From 3,776 fingerprint features, 1,200 were selected (selecting 30.1\%). The 208 RDKit complementary descriptors and the two solvent and measurement features were all retained without participating in feature screening. In total, 1,410 features were used in the model. Parameters: objective = reg:squarederror, learning\_rate = 0.0005, max\_depth = 11, min\_child\_weight = 2, gamma = 0.03, subsample = 0.85, colsample\_bytree = 0.85, reg\_alpha (L1) = 0.03, reg\_lambda (L2) = 0.2, n\_estimators = 6000, random\_state = 42.

\subsection*{A3. Aqueous Specialized Model}
\setcounter{paragraph}{0}

\paragraph{Excluded Model.}
One candidate model exhibited clear overfitting is shown in FIG.A3.

\begin{figure}[H]
\centering
\includegraphics[width=0.85\textwidth]{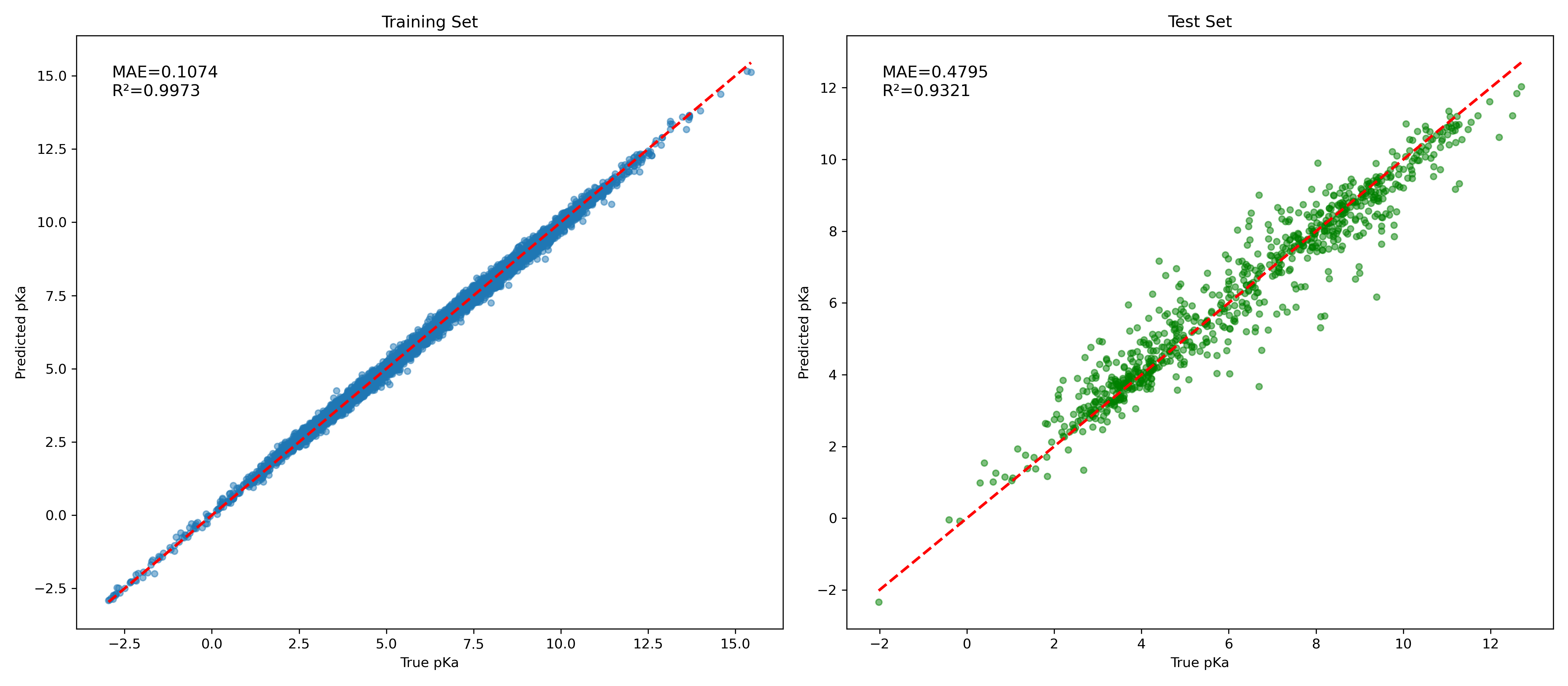}  
\caption{Overfitting Result}
\label{figA21}
\end{figure}

\paragraph{Feature Composition and Statistics}
The overall feature set consists of 3,984 features, including 208 RDKit descriptors and 3,776 fingerprint bits. Table \ref{tab:feature_overview} summarizes the feature counts and proportions.

\begin{table}[H]
\centering
\small
\renewcommand{\arraystretch}{0.75}
\begin{threeparttable}
\caption{Overview of Features}
\label{tab:feature_overview}
\begin{tabular}{lcc}
\toprule
Feature Type & Count & Proportion \\
\midrule
\textbf{Total Descriptors} & \textbf{208} & \textbf{5.22\%} \\
Total Fingerprints & 3,776 & 94.78\% \\
Total Features & 3,984 & 100\% \\
\bottomrule
\end{tabular}
\end{threeparttable}
\end{table}

\footnotesize
\noindent \textbf{Descriptor Category Distribution}:
\begin{itemize}
\setlength{\itemsep}{0pt}       
  \setlength{\parsep}{0pt}       
  \setlength{\parskip}{0pt} 
    \item Physicochemical properties: 61 (29.3\%)
    \item Count descriptors: 19 (9.1\%)
    \item EState descriptors: 4 (1.9\%)
    \item QED: 1 (0.5\%)
    \item Others: 122 (58.7\%)
\end{itemize}
\normalsize 

\paragraph{Feature Selection Comparison of Three Models}

Table \ref{tab:model_feature_comparison} summarizes the selected feature statistics for the two aqueous models and the holistic optimized model.

\begin{table}[H]
\centering
\small
\renewcommand{\arraystretch}{0.75}
\begin{threeparttable}
\caption{Feature Selection Comparison Among Models}
\label{tab:model_feature_comparison}
\begin{tabular}{lccc}
\toprule
Metric & WaterBest & WaterOptimized & XGBoost\_optimized \\
\midrule
Selected Features & 1,500 (37.7\%) & 1,800 (45.2\%) & 1,200 (30.1\%) \\
\textbf{Selected Descriptors} & \textbf{186} & \textbf{194} & \textbf{208} \\
\textbf{Descriptor Retention Rate} & \textbf{89.4\%} & \textbf{93.3\%} & \textbf{100.0\%} \\
Descriptor Ratio in Selected Features & 12.4\% & 10.8\% & 17.3\% \\
\bottomrule
\end{tabular}
\end{threeparttable}
\end{table}

\footnotesize
\noindent \textbf{Key Differences}:
\begin{itemize}
\setlength{\itemsep}{0pt}       
  \setlength{\parsep}{0pt}       
  \setlength{\parskip}{0pt} 
    \item The general model retains all 208 descriptors (100\%), while aqueous models retain 89.4\%--93.3\%.
    \item The general model shows the highest descriptor ratio (17.3\%), indicating a stronger dependence on descriptor features.
\end{itemize}
\normalsize

\paragraph{Overlapping Analysis Between WaterBest and WaterOptimized}
The overlap of selected features between the two aqueous models is relatively low, while the overlap of descriptors is high.

\footnotesize
\begin{itemize}
\setlength{\itemsep}{0pt}       
  \setlength{\parsep}{0pt}       
  \setlength{\parskip}{0pt} 
    \item Total selected features: 1,500 (Best) vs 1,800 (Optimized); overlap = 722 (48\%)
    \item Total descriptors: 208; overlapping descriptors = 180 (86.5\%)
    \item Unique descriptors in Best: 6
    \item Unique descriptors in Optimized: 14
\end{itemize}

\noindent \textbf{Unique Descriptors in WaterBest (6)}:
\begin{itemize}
\setlength{\itemsep}{0pt}       
  \setlength{\parsep}{0pt}       
  \setlength{\parskip}{0pt} 
    \item desc\_NumRadicalElectrons
    \item desc\_fr\_alkyl\_carbamate
    \item desc\_fr\_barbitur
    \item desc\_fr\_thiazole
    \item desc\_fr\_thiophene
    \item desc\_fr\_urea
\end{itemize}

\noindent \textbf{Unique Descriptors in WaterOptimized (14)}:
\begin{itemize}
\setlength{\itemsep}{0pt}       
  \setlength{\parsep}{0pt}       
  \setlength{\parskip}{0pt} 
    \item desc\_EState\_VSA11
    \item desc\_fr\_HOCCN
    \item desc\_fr\_aldehyde
    \item desc\_fr\_azide
    \item desc\_fr\_diazo
    \item desc\_fr\_imidazole
    \item desc\_fr\_imide
    \item desc\_fr\_isocyan
    \item desc\_fr\_isothiocyan
    \item desc\_fr\_lactone
\end{itemize}
\normalsize

\paragraph{Unique Characteristics of the Holistic Model}
The general model differs significantly from the aqueous models as summarized in Table \ref{tab:aq_vs_general}.

\begin{table}[H]
\centering
\small
\renewcommand{\arraystretch}{0.75}
\begin{threeparttable}
\caption{Comparison Between Aqueous Models and Holistic Model}
\label{tab:aq_vs_general}
\begin{tabular}{lcc}
\toprule
Property & Aqueous Models & Holistic Model \\
\midrule
Training Samples & 7,006 & 16,126 \\
Additional Features & Measurement & Solvent + Measurement encoding \\
Descriptor Usage & Selective (89--93\%) & Full set (100\%) \\
Feature Selection Ratio & 37.7--45.2\% & 30.1\% (stricter) \\
\bottomrule
\end{tabular}
\end{threeparttable}
\end{table}

\paragraph{Final Model Details.}
Test MAE decreased progressively through WaterOptimized model optimization: initial model (pre-CV) MAE = 0.5257, post-cross-validation MAE = 0.5028, final optimized MAE = 0.4898.
\begin{table}[H]
\centering
\small
\renewcommand{\arraystretch}{0.75}
\begin{threeparttable}
\caption{Final XGBoost Hyperparameters for Aqueous Models}
\label{tab:water_params}
\begin{tabular}{lcc}
\toprule
Parameter &  WaterBest &  WaterOptimized \\
\midrule
Model Type                & XGBoost Booster & XGBoost Booster \\
n\_estimators             & 5882            & 3990 \\
learning\_rate            & 0.001           & 0.006 \\
max\_depth                & 10              & 6 \\
min\_child\_weight         & 2               & 12 \\
gamma                     & 0.1             & 0.24 \\
subsample                 & 0.8             & default \\
colsample\_bytree         & 0.8             & default \\
reg\_alpha                & 0.1             & 0.28 \\
reg\_lambda               & 1.0             & 2.8 \\
\bottomrule
\end{tabular}
\end{threeparttable}
\end{table}
\normalsize

\paragraph{Feature Importance Analysis.}
\mbox{}\par
\begin{table}[H]
\centering
\footnotesize
\renewcommand{\arraystretch}{0.7}
\begin{threeparttable}
\caption{Top 10 Descriptor Importance for Two Aqueous Models}
\label{tab:feature_importance_both}
\begin{tabular}{l c l c}
\toprule
\textbf{WaterBest}                & \textbf{Imp.} & \textbf{WaterOptimized}          & \textbf{Imp.} \\
\midrule
desc\_BCUT2D\_MWHI                & 2.47\%  & desc\_BCUT2D\_MWHI              & 2.41\% \\
desc\_BCUT2D\_MRLOW               & 2.46\%  & desc\_BCUT2D\_MRLOW             & 2.26\% \\
desc\_MinPartialCharge            & 2.45\%  & desc\_MinPartialCharge          & 2.25\% \\
desc\_BCUT2D\_LOGPLOW             & 1.96\%  & desc\_VSA\_EState2              & 2.02\% \\
desc\_SlogP\_VSA2                 & 1.93\%  & desc\_VSA\_EState5              & 1.90\% \\
desc\_VSA\_EState3                & 1.93\%  & desc\_SlogP\_VSA2               & 1.89\% \\
desc\_VSA\_EState5                & 1.80\%  & desc\_MaxPartialCharge          & 1.88\% \\
desc\_Kappa3                      & 1.78\%  & desc\_MinEStateIndex            & 1.87\% \\
desc\_BCUT2D\_MRHI                & 1.78\%  & desc\_BCUT2D\_MRHI              & 1.85\% \\
desc\_MaxPartialCharge            & 1.73\%  & desc\_BCUT2D\_LOGPLOW           & 1.83\% \\
\midrule
\textbf{Top10 Cumulative}         & \textbf{20.30\%} & \textbf{Top10 Cumulative}      & \textbf{20.16\%} \\
\bottomrule
\end{tabular}
\end{threeparttable}
\end{table}
\footnotesize
\noindent \textbf{Model Comparison}:
\begin{itemize}
\setlength{\itemsep}{0pt}       
  \setlength{\parsep}{0pt}       
  \setlength{\parskip}{0pt} 
    \item Total selected features: 1500 (WaterBest) vs. 1800 (WaterOptimized).
    \item Shared important descriptors in Top 10: 8.
    \item Common key features: desc\_BCUT2D\_MWHI, desc\_BCUT2D\_MRLOW, desc\_MinPartialCharge, desc\_SlogP\_VSA2, desc\_VSA\_EState5, desc\_BCUT2D\_MRHI, desc\_MaxPartialCharge, desc\_BCUT2D\_LOGPLOW.
\end{itemize}
\normalsize

\paragraph{Positive and Negative Top 10 Feature Analysis for Two XGBoost Models.}

Feature direction was determined by correlation between features and predicted pKa, while XGBoost Gain was used to quantify feature importance.

\subparagraph{Model 1: WaterBest}
\noindent \textbf{Top 10 Features by Gain}
\begin{table}[H]
\centering
\footnotesize
\renewcommand{\arraystretch}{0.7}
\begin{threeparttable}
\caption{Top 10 Features by Gain (WaterBest)}
\label{tab:gain1_best}
\begin{tabular}{c l c l}
\toprule
Rank & Feature & Gain (\%) & Description \\
\midrule
1 & desc\_fr\_COO2 & 36.79 & Carboxylate/carboxylic acid count \\
2 & desc\_fr\_ArN & 7.29 & Aromatic nitrogen count \\
3 & desc\_NHOHCount & 4.54 & N-O-H group count \\
4 & desc\_fr\_Ar\_N & 4.07 & Aromatic nitrogen count \\
5 & desc\_FractionCSP3 & 2.48 & Fraction of sp3-hybridized C \\
6 & desc\_VSA\_EState2 & 1.91 & EState VSA type 2 \\
7 & desc\_NumAromaticHeterocycles & 1.90 & Aromatic heterocycle count \\
8 & desc\_NumAromaticRings & 1.25 & Aromatic ring count \\
9 & desc\_fr\_aniline & 1.13 & Aniline group count \\
10 & desc\_MaxAbsPartialCharge & 1.12 & Max absolute partial charge \\
\midrule
\multicolumn{3}{l}{\textbf{Top 10 cumulative contribution}} & \textbf{62.48\%} \\
\bottomrule
\end{tabular}
\end{threeparttable}
\end{table}

\noindent \textbf{Top 10 Positive Correlation (Higher Value → Higher pKa)}
\begin{table}[H]
\centering
\footnotesize
\renewcommand{\arraystretch}{0.7}
\begin{threeparttable}
\caption{Top 10 Positive Correlation Features (WaterBest)}
\label{tab:pos1_best}
\begin{tabular}{c l c c l}
\toprule
Rank & Feature & Correlation & Importance (\%) & Description \\
\midrule
1 & desc\_FractionCSP3 & +0.3943 & 2.48 & sp3 C fraction \\
2 & desc\_VSA\_EState8 & +0.3486 & 0.05 & VSA EState8 \\
3 & desc\_MinEStateIndex & +0.3320 & 0.18 & Min EState index \\
4 & desc\_SMR\_VSA3 & +0.3173 & 0.36 & SMR VSA3 \\
5 & desc\_MinPartialCharge & +0.3116 & 0.39 & Min partial charge \\
6 & desc\_SMR\_VSA4 & +0.2913 & 0.95 & SMR VSA4 \\
7 & desc\_HallKierAlpha & +0.2845 & 0.31 & Hall-Kier alpha \\
8 & desc\_fr\_NH1 & +0.2641 & 0.17 & Primary amine \\
9 & desc\_SMR\_VSA5 & +0.2536 & 0.17 & SMR VSA5 \\
10 & desc\_VSA\_EState5 & +0.2421 & 0.20 & VSA EState5 \\
\bottomrule
\end{tabular}
\end{threeparttable}
\end{table}

\noindent \textbf{Top 10 Negative Correlation (Higher Value → Lower pKa)}
\begin{table}[H]
\centering
\footnotesize
\renewcommand{\arraystretch}{0.7}
\begin{threeparttable}
\caption{Top 10 Negative Correlation Features (WaterBest)}
\label{tab:neg1_best}
\begin{tabular}{c l c c l}
\toprule
Rank & Feature & Correlation & Importance (\%) & Description \\
\midrule
1 & desc\_MinAbsPartialCharge & $-$0.5790 & 0.53 & Min abs partial charge \\
2 & desc\_MaxPartialCharge & $-$0.5656 & 0.41 & Max partial charge \\
3 & desc\_fr\_COO2 & $-$0.5485 & 36.79 & Carboxylate/carboxylic acid \\
4 & desc\_fr\_COO & $-$0.5473 & 0.31 & Carboxylic acid group \\
5 & desc\_fr\_C\_O & $-$0.5045 & 0.15 & Carbonyl group count \\
6 & desc\_fr\_Al\_COO & $-$0.4589 & 0.05 & Aliphatic carboxylate \\
7 & desc\_VSA\_EState2 & $-$0.4397 & 1.91 & VSA EState2 \\
8 & desc\_SMR\_VSA10 & $-$0.4253 & 0.12 & SMR VSA10 \\
9 & desc\_SMR\_VSA1 & $-$0.3791 & 0.30 & SMR VSA1 \\
10 & desc\_MaxAbsEStateIndex & $-$0.3672 & 0.00 & Max abs EState index \\
\bottomrule
\end{tabular}
\end{threeparttable}
\end{table}

\subparagraph{Model 2: WaterOptimized}
\noindent \textbf{Top 10 Features by Gain}
\begin{table}[H]
\centering
\footnotesize
\renewcommand{\arraystretch}{0.7}
\begin{threeparttable}
\caption{Top 10 Features by Gain (WaterOptimized)}
\label{tab:gain1_opt}
\begin{tabular}{c l c l}
\toprule
Rank & Feature & Gain (\%) & Description \\
\midrule
1 & desc\_fr\_COO2 & 35.26 & Carboxylate/carboxylic acid count \\
2 & desc\_fr\_ArN & 5.70 & Aromatic nitrogen count \\
3 & desc\_NHOHCount & 4.83 & N-O-H group count \\
4 & desc\_fr\_Ar\_N & 3.64 & Aromatic nitrogen count \\
5 & desc\_FractionCSP3 & 3.27 & sp3 C fraction \\
6 & desc\_fr\_phenol & 2.19 & Phenol group count \\
7 & desc\_NumAromaticHeterocycles & 1.71 & Aromatic heterocycle count \\
8 & desc\_NumHDonors & 1.50 & H-bond donor count \\
9 & desc\_fr\_allylic\_oxid & 1.46 & Allylic oxidation site count \\
10 & desc\_fr\_COO & 1.19 & Carboxylic acid group \\
\midrule
\multicolumn{3}{l}{\textbf{Top 10 cumulative contribution}} & \textbf{59.75\%} \\
\bottomrule
\end{tabular}
\end{threeparttable}
\end{table}

\noindent \textbf{Top 10 Positive Correlation (Higher Value → Higher pKa)}
\begin{table}[H]
\centering
\footnotesize
\renewcommand{\arraystretch}{0.7}
\begin{threeparttable}
\caption{Top 10 Positive Correlation Features (WaterOptimized)}
\label{tab:pos1_opt}
\begin{tabular}{c l c c l}
\toprule
Rank & Feature & Correlation & Importance (\%) & Description \\
\midrule
1 & desc\_FractionCSP3 & +0.3885 & 3.27 & sp3 C fraction \\
2 & desc\_VSA\_EState8 & +0.3418 & 0.05 & VSA EState8 \\
3 & desc\_MinEStateIndex & +0.3270 & 0.19 & Min EState index \\
4 & desc\_SMR\_VSA3 & +0.3137 & 0.32 & SMR VSA3 \\
5 & desc\_MinPartialCharge & +0.2959 & 0.53 & Min partial charge \\
6 & desc\_SMR\_VSA4 & +0.2901 & 1.19 & SMR VSA4 \\
7 & desc\_HallKierAlpha & +0.2740 & 0.25 & Hall-Kier alpha \\
8 & desc\_fr\_NH1 & +0.2660 & 0.08 & Primary amine \\
9 & desc\_SMR\_VSA5 & +0.2533 & 0.09 & SMR VSA5 \\
10 & desc\_VSA\_EState5 & +0.2385 & 0.21 & VSA EState5 \\
\bottomrule
\end{tabular}
\end{threeparttable}
\end{table}

\noindent \textbf{Top 10 Negative Correlation (Higher Value → Lower pKa)}
\begin{table}[H]
\centering
\footnotesize
\renewcommand{\arraystretch}{0.7}
\begin{threeparttable}
\caption{Top 10 Negative Correlation Features (WaterOptimized)}
\label{tab:neg1_opt}
\begin{tabular}{c l c c l}
\toprule
Rank & Feature & Correlation & Importance (\%) & Description \\
\midrule
1 & desc\_MinAbsPartialCharge & $-$0.5649 & 0.77 & Min abs partial charge \\
2 & desc\_MaxPartialCharge & $-$0.5519 & 0.27 & Max partial charge \\
3 & desc\_fr\_COO2 & $-$0.5375 & 35.26 & Carboxylate/carboxylic acid \\
4 & desc\_fr\_COO & $-$0.5363 & 1.19 & Carboxylic acid group \\
5 & desc\_fr\_C\_O & $-$0.4949 & 0.23 & Carbonyl group count \\
6 & desc\_fr\_Al\_COO & $-$0.4543 & 0.05 & Aliphatic carboxylate \\
7 & desc\_VSA\_EState2 & $-$0.4280 & 1.14 & VSA EState2 \\
8 & desc\_SMR\_VSA10 & $-$0.4196 & 0.15 & SMR VSA10 \\
9 & desc\_SMR\_VSA1 & $-$0.3694 & 0.37 & SMR VSA1 \\
10 & desc\_MaxAbsEStateIndex & $-$0.3558 & 0.00 & Max abs EState index \\
\bottomrule
\end{tabular}
\end{threeparttable}
\end{table}

\footnotesize
\noindent \textbf{Key Points}:
\begin{itemize}
\setlength{\itemsep}{0pt}       
  \setlength{\parsep}{0pt}       
  \setlength{\parskip}{0pt} 
    \item The dominant feature in both models is \textbf{desc\_fr\_COO2} (carboxylate/carboxylic acid), contributing 35–37\% with strong negative correlation.
    \item Key features increasing pKa: \textbf{desc\_FractionCSP3} (most positive, $r\approx+0.39$), EState and SMR–VSA descriptors.
    \item Key features decreasing pKa: partial charge descriptors (\texttt{MinAbsPartialCharge}, \texttt{MaxPartialCharge}) and carbonyl/carboxyl groups (\texttt{fr\_COO2}, \texttt{fr\_COO}, \texttt{fr\_C\_O}).
\end{itemize}
\normalsize

\paragraph{SHAP Feature Analysis.}
Fingerprint features (Morgan, atom pairs, topological torsion) dominate the top SHAP contributions in both models. Descriptor features exhibit negligible contribution in the SHAP-based ranking, which is inconsistent with the dominant role of descriptors observed in the XGBoost gain-based feature importance analysis.
\begin{figure}[H]
\centering
\includegraphics[width=0.85\textwidth]{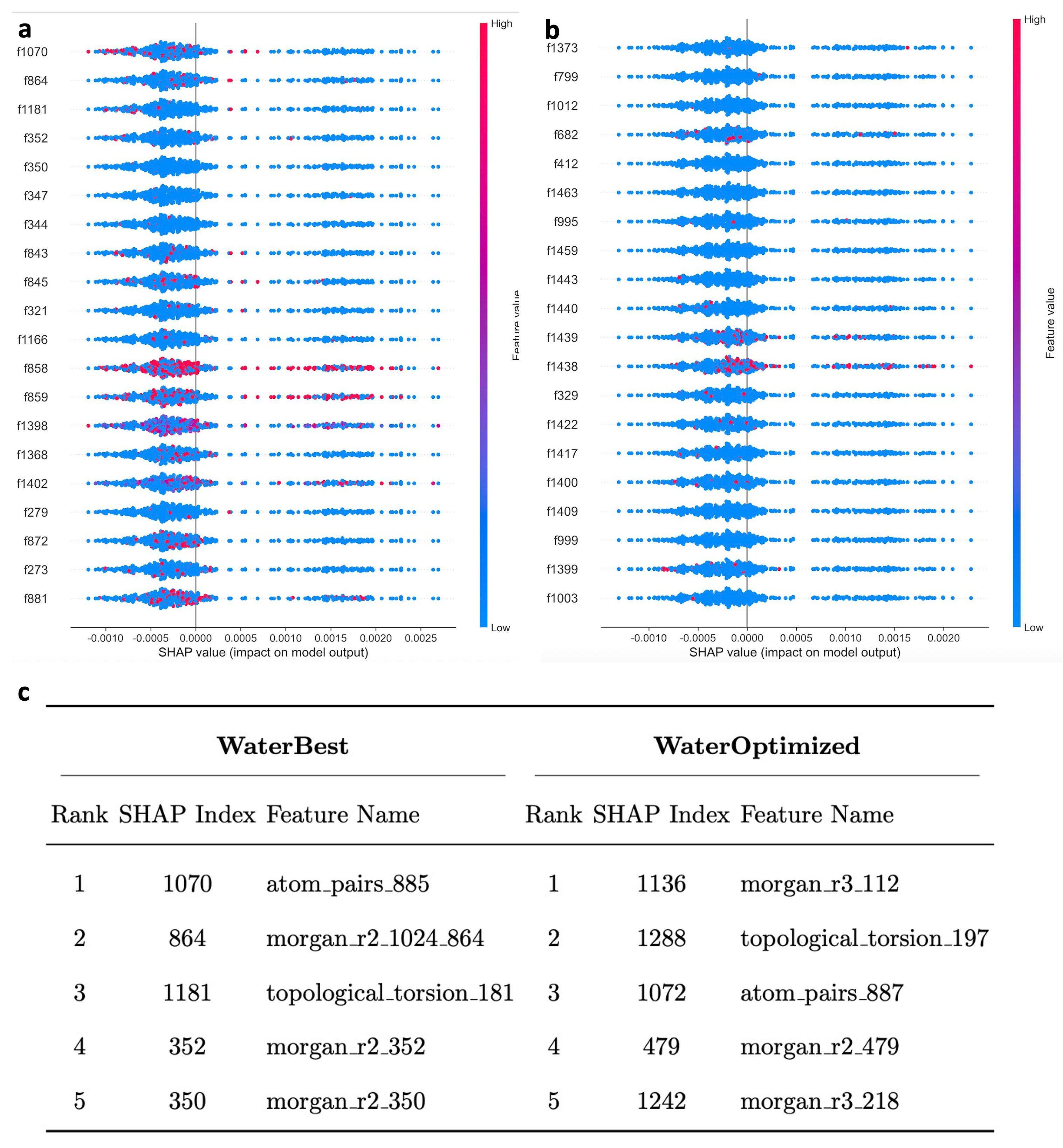}  
\caption{SHAP Analysis. (a) WaterBest, (b) WaterOptimized, (c) Comparisons}
\label{fig3}
\end{figure}

\paragraph{Results of Different Measurements.}
\mbox{}\par
\begin{figure}[H]
\centering
\includegraphics[width=0.95\textwidth]{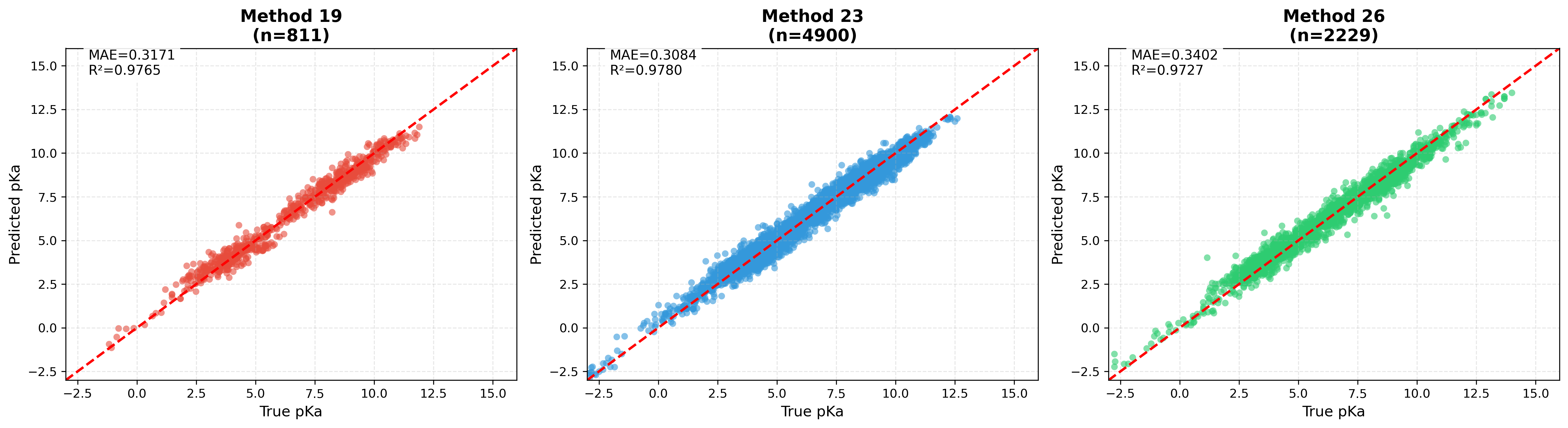}  
\caption{Results of Different Measurements}
\label{figA5}
\end{figure}

\subsection*{A4. RNN-VAE Generative Approches}
\subsubsection*{RNN Architecture and Sequence Parameters}
\begin{itemize}
\setlength{\itemsep}{0pt}       
  \setlength{\parsep}{0pt}       
  \setlength{\parskip}{0pt} 
    \item \textbf{Latent dimension}: 16/32/48 \\
    The latent vector $z$ compressed by the RNN encoder is fixed at 16/32/48 dimensions, which is the core dimension of the entire model.
    \item \textbf{Batch size}: 200 \\
    The RNN is fed with 200 molecular sequences per training step.
    \item \textbf{Total training epochs}: 300 \\
    The RNN-VAE converges after 300 complete training epochs.
    \item \textbf{Network unit}: Basic recurrent neural network (RNN) cells \\
    As explicitly stated, standard RNN cells are used instead of LSTM/GRU for simplified adaptation to molecular strings.
    \item \textbf{Input format}: Tokenized molecular strings (SMILES/SELFIES) are sequentially fed into the RNN.
    \item \textbf{Embedding dimension}: The dimension of vectors converted from molecular tokens by the RNN, designed to match the llatent space.
    \item \textbf{Sequence length}: Molecular strings are uniformly padded or truncated to fit the equal-length sequence input requirement of RNNs.
\end{itemize}

\subsubsection*{RNN-Tied Anti-Collapse and KL Control Parameters}
\begin{itemize}
\setlength{\itemsep}{0pt}       
  \setlength{\parsep}{0pt}       
  \setlength{\parskip}{0pt} 
    \item \textbf{$\beta$ annealing strategy}: \\
    The KL divergence weight $\beta$ increases linearly from 0 to 0.1 over 100 epochs \\
    $\rightarrow$ Prevents KL divergence from dropping directly to 0 and avoids posterior collapse of the latent space encoded by the RNN.
    \item \textbf{Joint loss constraint}: \\
    RNN reconstruction loss + $\beta \cdot$ KL divergence + prediction MSE are used together to update RNN weights \\
    $\rightarrow$ Ensures accurate molecular reconstruction while preserving the validity of the latent space.
\end{itemize}

\subsubsection*{General Results}
For each model configuration, the following pipeline is executed:
Fixed condition $\rightarrow$ Random sampling of 1000 $z$ vectors $\rightarrow$ Reparameterization trick $\rightarrow$ Condition concatenation $\rightarrow$ SimpleRNN decoding $\rightarrow$ Post-processing $\rightarrow$ Diversity metrics calculation $\rightarrow$ Plotting $\rightarrow$ In-situ saving.
The optimal conditional RNN-VAE is selected based on validity rate, novelty rate, scaffold diversity, and condition compliance rate.
Only the optimal model is subjected to latent space interpolation to verify local diversity and latent space smoothness, along with other evaluations. Since the performance differences between models are negligible, the results are summarized in Table \ref{tab:rnn_vae_metrics}.

\begin{table}[H]
  \centering
  \caption{Generative Performance Metrics of RNN-VAE Models}
  \label{tab:rnn_vae_metrics}
  \scriptsize          
  \renewcommand{\arraystretch}{0.7}  
    \begin{tabular}{llll}
    \toprule
    Evaluation Dimension & Metric & Value Range & Description \\
    \midrule
    \multirow{6}{*}{Random Generation}
    & Validity Rate & 0.41-0.45 & Proportion of successfully generated valid molecules \\
    & Error Rate & 0.55-0.59 & Proportion of generated invalid molecules \\
    & Average Tanimoto Similarity & 0.216-0.243 & Structural similarity between generated molecules \\
    & Unique Scaffold Count & 5 & Number of distinct Murcko scaffolds \\
    & Average Novelty & 0.338-0.343 & Difference between generated and training set \\
    & Novelty Rate & 74.4\%-75.5\% & Molecules not present in the training set \\
    \midrule
    \multirow{2}{*}{Latent Space Interpolation}
    & Interpolation Validity Rate & 1.00 & Validity rate of interpolated molecules \\
    & Average Smoothness & 0.818-0.952 & Smoothness of latent space transition \\
    \midrule
    \multirow{3}{*}{Gradient Optimization}
    & Optimization Success Rate & 0.70 & Optimized pKa within target range \\
    & Optimization Accuracy & 0.300 & Mean absolute deviation of pKa \\
    & pKa Stability & 0.072-0.080 & Std of repeated optimizations \\
    \bottomrule
  \end{tabular}
\end{table}

\begin{figure}[H]
\centering
\includegraphics[width=0.85\textwidth]{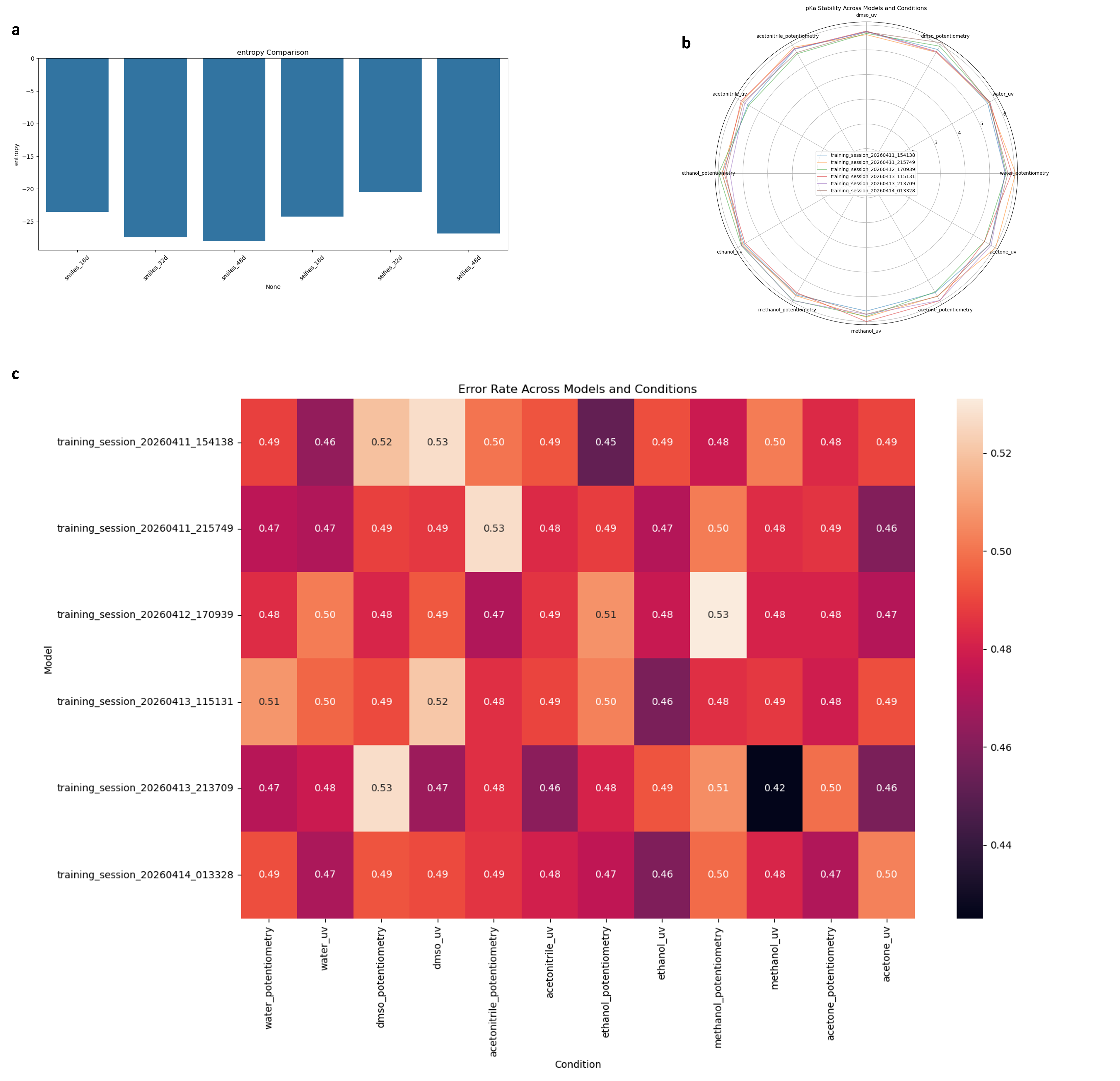}  
\caption{(a) Entropy comparison,  (b) pKa Stability, (c) Detailed error rate.}
\label{figA6}
\end{figure}

\subsubsection*{Detailed Results}
The detailed results of the six models are summarized in FIG. A6. All attempts to fit toward the extreme pKa (preset at 4) failed, and these unsuccessful results are omitted to avoid redundant presentation. 

\begin{figure}[H]
\centering
\includegraphics[width=0.95\textwidth]{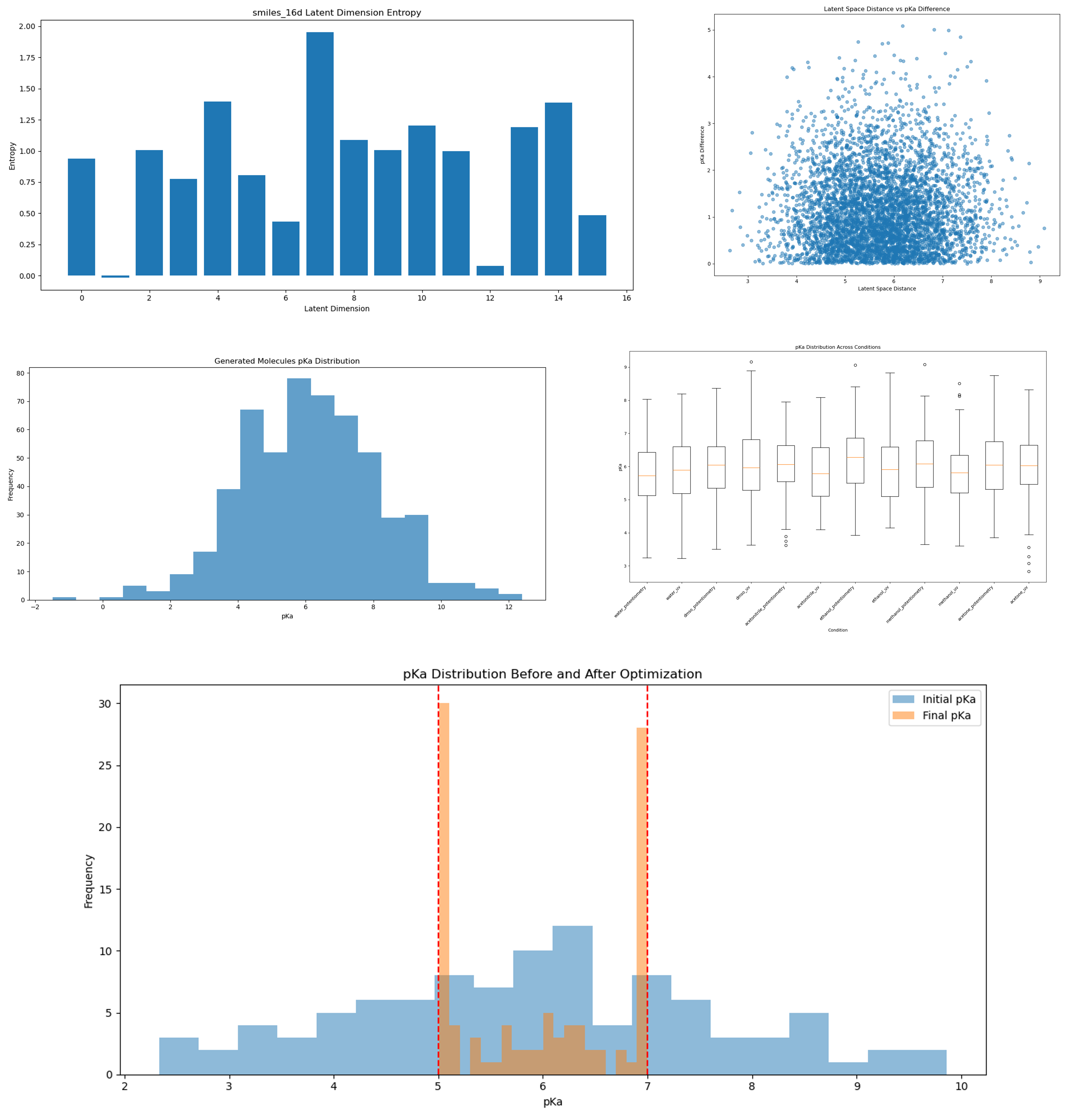}  
\caption{SMILES-16D}
\label{figA71}
\end{figure}

\begin{figure}[H]
\centering
\includegraphics[width=0.95\textwidth]{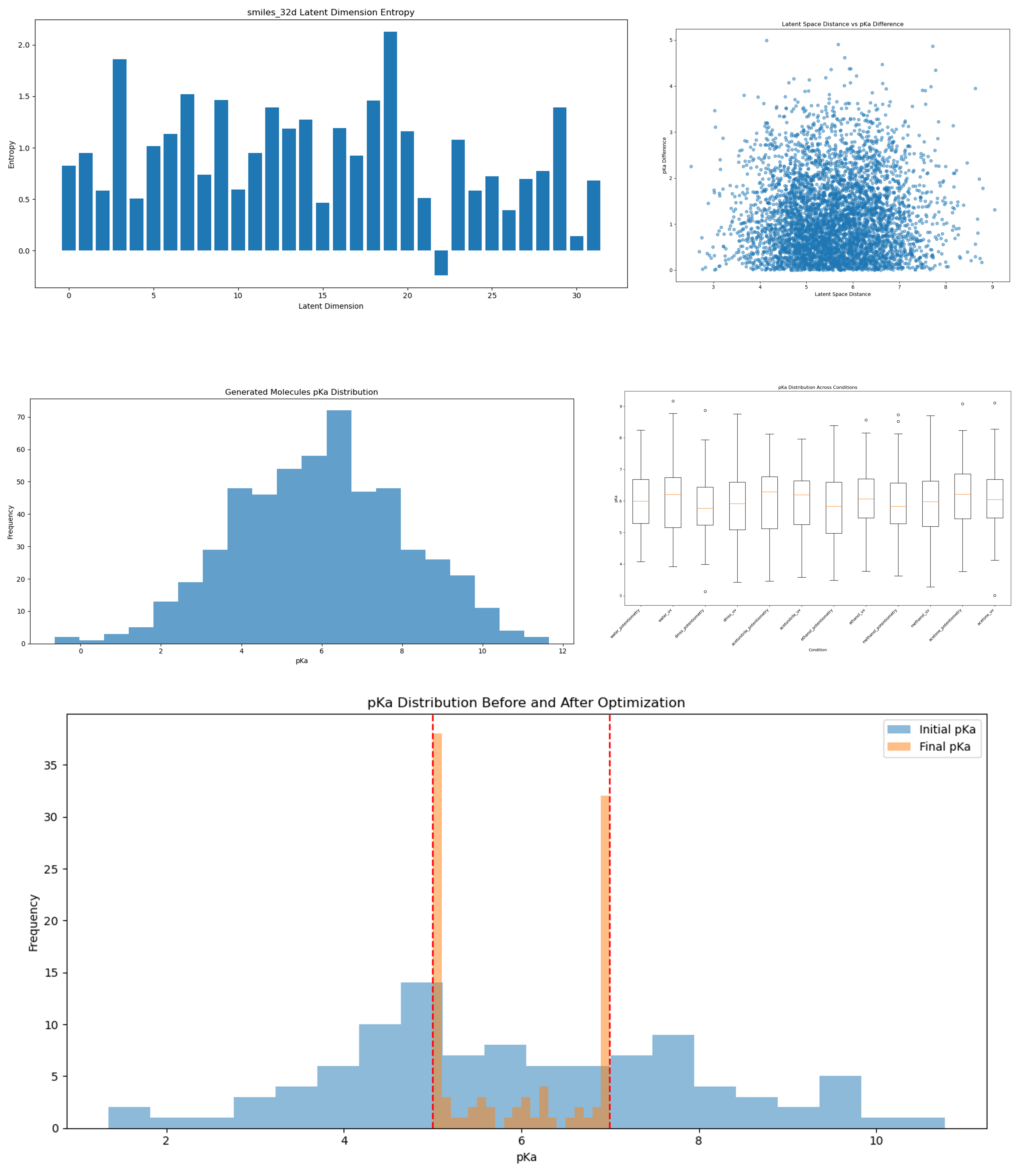}  
\caption{SMILES-32D}
\label{figA72}
\end{figure}

\begin{figure}[H]
\centering
\includegraphics[width=0.95\textwidth]{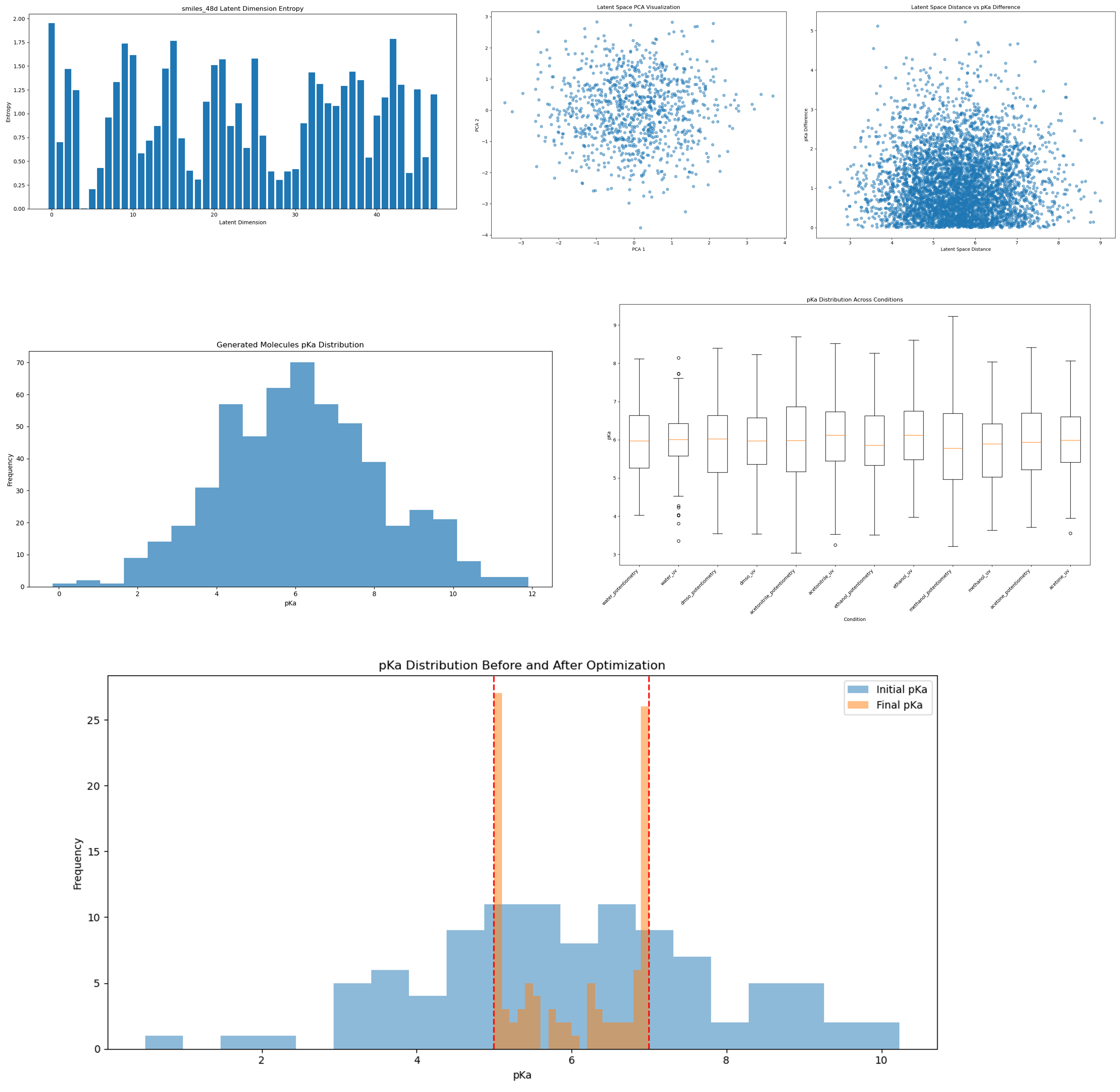}  
\caption{SMILES-48D}
\label{figA73}
\end{figure}

\begin{figure}[H]
\centering
\includegraphics[width=0.95\textwidth]{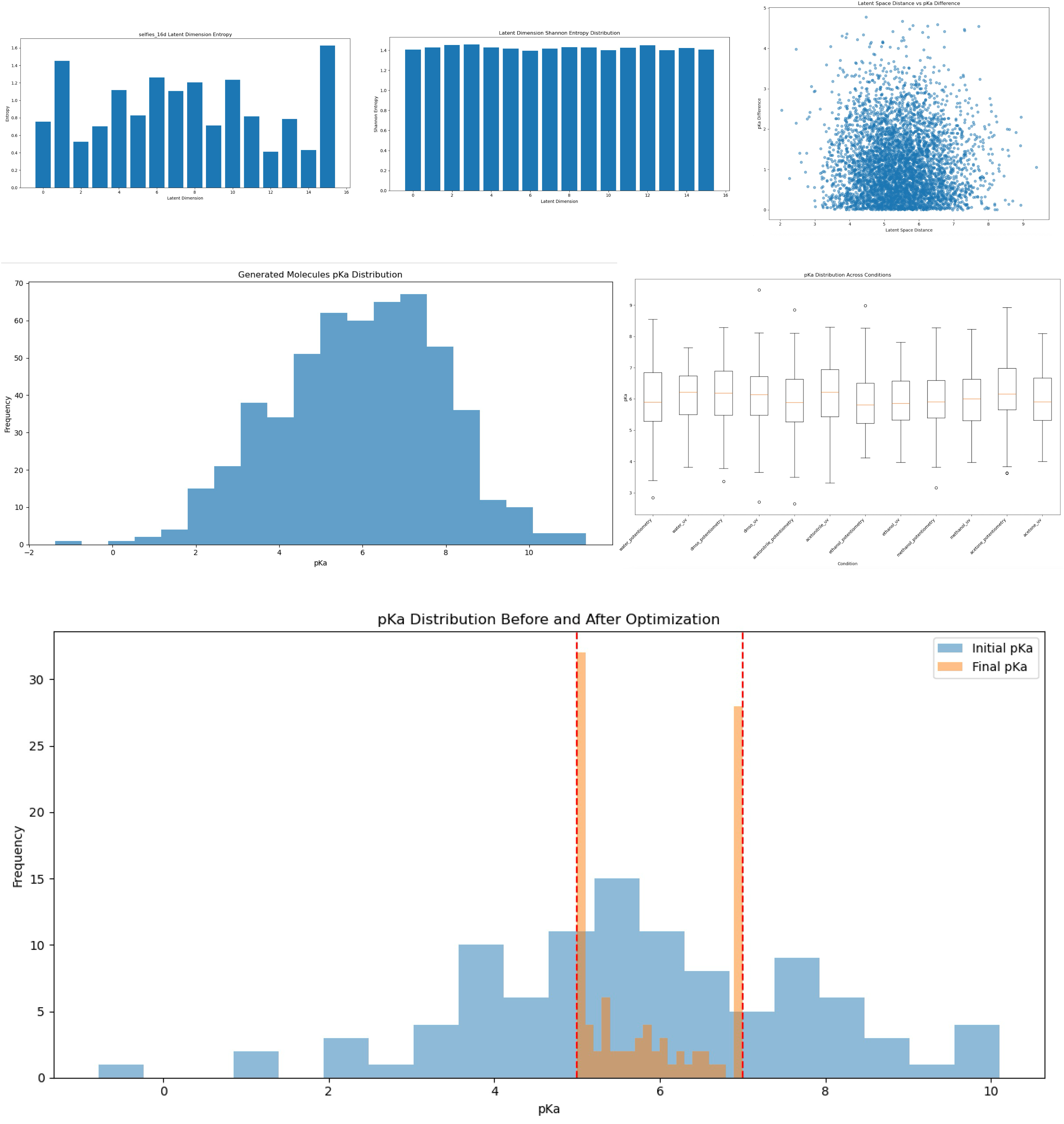}  
\caption{SELFIES-16D}
\label{figA74}
\end{figure}

\begin{figure}[H]
\centering
\includegraphics[width=0.95\textwidth]{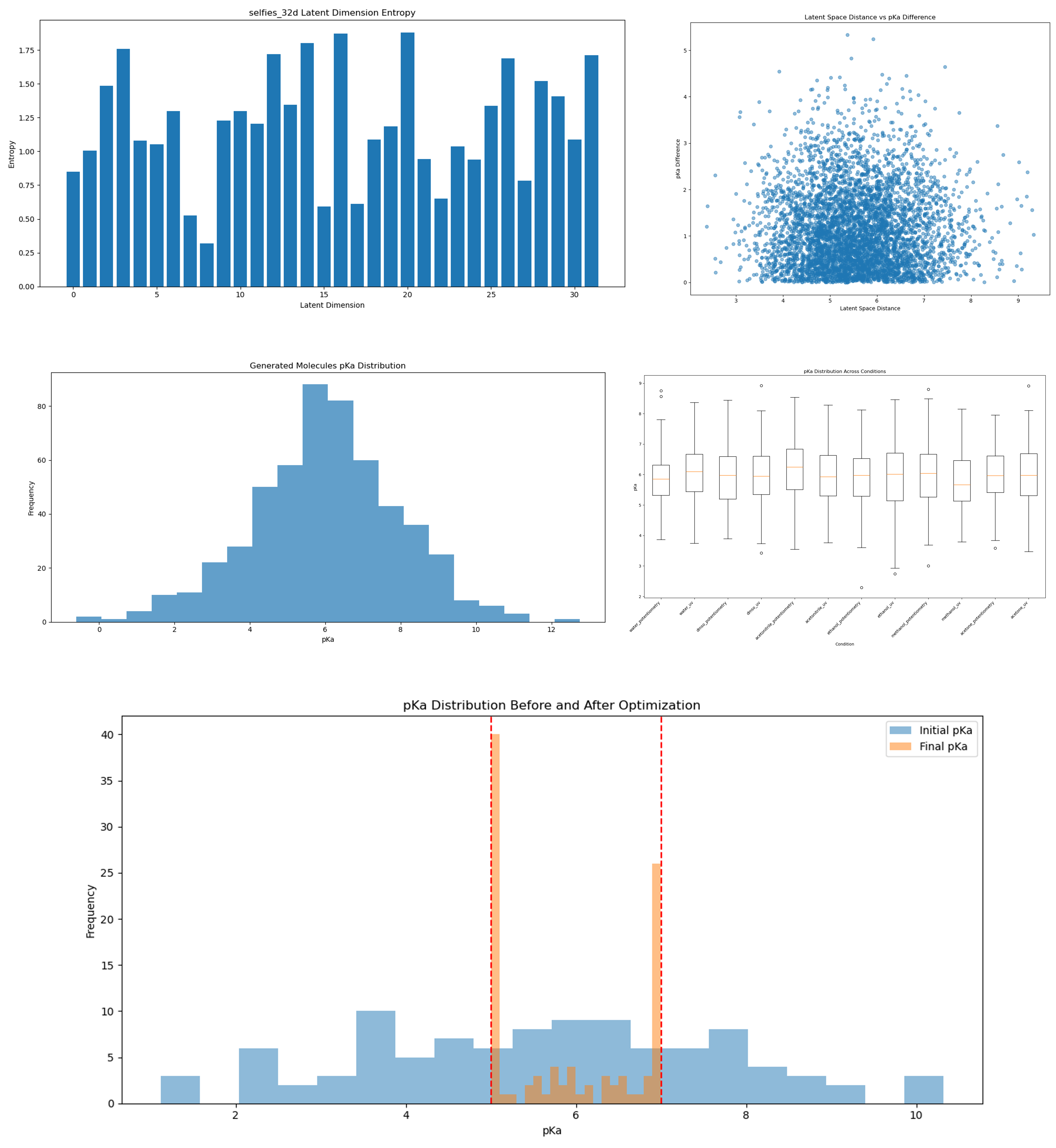}  
\caption{SELFIES-32D}
\label{figA75}

\end{figure}
\begin{figure}[H]
\centering
\includegraphics[width=0.95\textwidth]{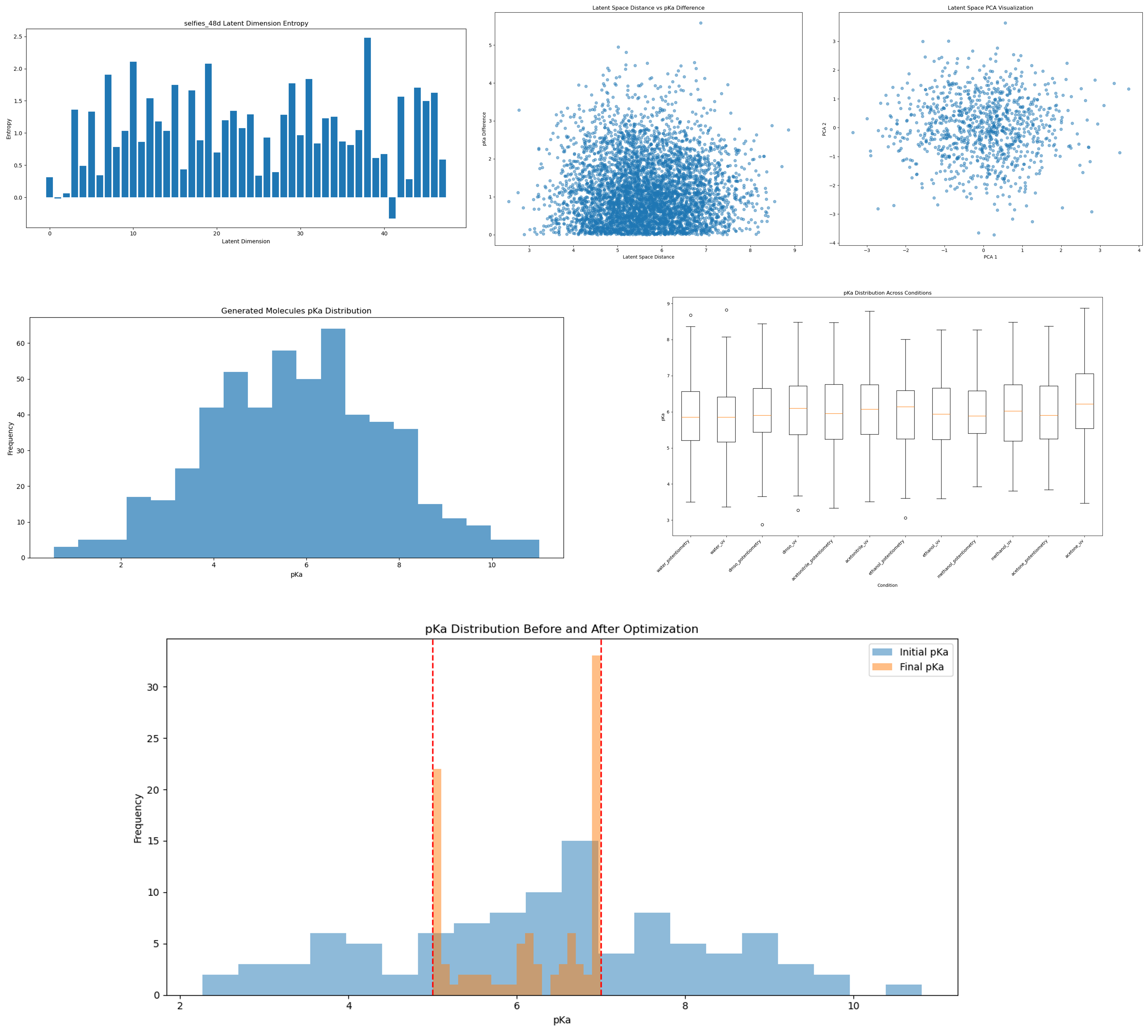}  
\caption{SELFIES-48D}
\label{figA76}
\end{figure}

\subsection*{A5. Pipeline for Binary-Latent Molecular Encoding and Decoding}
\setcounter{paragraph}{0}
This section provides a detailed, step-by-step exposition of the binary discrete latent space construction employed in Janus. In contrast to the high-level overview in the main text, we present the complete algorithmic workflow, including data preprocessing, Transformer-based encoding, Gumbel-Softmax binarization, autoregressive decoding, and the training objective with free-bit KL regularization. A pseudocode summary is provided at the end of the subsection.

\paragraph{Step 1: SELFIES Tokenization and Vocabulary Encoding.}
Given a molecule, we first generate its SELFIES string \(S\) (a robust, 100\% valid molecular grammar). The string is tokenized into a sequence \(S = (s_1, s_2, \dots, s_L)\) where \(L\) is the original length. A fixed vocabulary \(\mathcal{V}\) of all possible SELFIES tokens is precomputed. Each token \(s_i\) is mapped to a \(d_{\text{model}}\)-dimensional embedding via a learnable embedding matrix \(E_{\text{emb}} \in \mathbb{R}^{|\mathcal{V}| \times d_{\text{model}}}\). To handle variable sequence lengths in batch training, we pad every sequence to a maximum length \(L_{\text{max}}\) using a special \([\text{PAD}]\) token, producing a padded sequence \(\tilde{S}\) of length \(L_{\text{max}}\). Positional information is added via sinusoidal positional encodings \(\text{PE}(i) \in \mathbb{R}^{d_{\text{model}}}\):
\[
\mathbf{h}_i^{(0)} = E_{\text{emb}}(\tilde{s}_i) + \text{PE}(i), \quad i = 1,\dots,L_{\text{max}}.
\]

The input tensor to the encoder is \(\mathbf{H}^{(0)} = [\mathbf{h}_1^{(0)}; \dots; \mathbf{h}_{L_{\text{max}}}^{(0)}] \in \mathbb{R}^{L_{\text{max}} \times d_{\text{model}}}\).

\paragraph{Step 2: SwiGLU Transformer Encoder with Pre-LN.}

The encoder consists of \(N\) identical layers, each employing a Pre-Layer Normalization (Pre-LN) scheme and a SwiGLU-based feed-forward network (FFN). For layer \(\ell = 1\) to \(N\):
\[
\begin{aligned}
\mathbf{H}' &= \text{LayerNorm}(\mathbf{H}^{(\ell-1)}), \\
\mathbf{A} &= \text{MultiHeadAttention}(\mathbf{H}', \mathbf{H}', \mathbf{H}'), \\
\mathbf{H}_{\text{interim}} &= \mathbf{H}^{(\ell-1)} + \text{Dropout}(\mathbf{A}), \\
\mathbf{H}_{\text{ffn}} &= \text{SwiGLU}(\text{LayerNorm}(\mathbf{H}_{\text{interim}})), \\
\mathbf{H}^{(\ell)} &= \mathbf{H}_{\text{interim}} + \text{Dropout}(\mathbf{H}_{\text{ffn}}).
\end{aligned}
\]

The SwiGLU operation is defined as:
\[
\text{SwiGLU}(\mathbf{X}) = (\text{SiLU}(\mathbf{X}\mathbf{W}_1) \odot (\mathbf{X}\mathbf{W}_3)) \mathbf{W}_2,
\]

where \(\mathbf{W}_1, \mathbf{W}_3 \in \mathbb{R}^{d_{\text{model}} \times d_{\text{ff}}}\), \(\mathbf{W}_2 \in \mathbb{R}^{d_{\text{ff}} \times d_{\text{model}}}\), and \(\odot\) denotes elementwise multiplication. After \(N\) layers, we obtain \(\mathbf{H}^{(N)} \in \mathbb{R}^{L_{\text{max}} \times d_{\text{model}}}\).

\paragraph{Step 3: Attention Pooling to a Fixed-Length Vector.}

To aggregate the variable-length sequence into a single global representation, we use an attention pooling layer. A learnable query vector \(\mathbf{q} \in \mathbb{R}^{d_{\text{model}}}\) interacts with the sequence through key projections:
\[
\mathbf{K} = \mathbf{H}^{(N)} \mathbf{W}_K, \quad \mathbf{W}_K \in \mathbb{R}^{d_{\text{model}} \times d_k},
\]
\[
\alpha = \text{softmax}\left( \frac{\mathbf{q} \mathbf{K}^T}{\sqrt{d_k}} \right) \in \mathbb{R}^{L_{\text{max}}}, \quad
\mathbf{c} = \alpha \mathbf{H}^{(N)} \in \mathbb{R}^{d_{\text{model}}}.
\]

The vector \(\mathbf{c}\) serves as the continuous latent embedding of the molecule.

\paragraph{Step 4: Binary Discretization via Gumbel-Softmax.}

We project \(\mathbf{c}\) to a \(D_L\)-dimensional logits vector (with \(D_L = 128\)):
\[
\mathbf{q}_{\text{logits}} = \mathbf{c} \mathbf{W}_{\text{to\_latent}} + \mathbf{b}_{\text{to\_latent}}, \quad \mathbf{q}_{\text{logits}} \in \mathbb{R}^{D_L}.
\]

For each dimension \(j\), we define unnormalized probabilities for bit \(0\) and bit \(1\) as \(p_{j,0}=1\) and \(p_{j,1} = \exp(\mathbf{q}_{\text{logits},j})\). During training, we sample from the Gumbel-Softmax distribution with temperature \(\tau\):
\[
\pi_{j,k} = \frac{\exp\left((\log p_{j,k} + g_k)/\tau\right)}{\sum_{k'=0}^1 \exp\left((\log p_{j,k'} + g_{k'})/\tau\right)}, \quad k \in \{0,1\},
\]

where \(g_k \sim \text{Gumbel}(0,1)\) is independent Gumbel noise. In the forward pass, we use the hard (straight-through) estimator: \(x_j = \text{onehot\_argmax}(\pi_{j,\cdot})\) as the discrete output, while in the backward pass the continuous softmax probabilities are used to compute gradients. During inference, we set \(x_j = \mathbf{1}[\mathbf{q}_{\text{logits},j} > 0]\) (deterministic thresholding). The final binary latent code is \(\mathbf{x} = (x_1, \dots, x_{D_L}) \in \{0,1\}^{D_L}\).

\paragraph{Step 5: Autoregressive Decoder.}

The decoder reconstructs the original SELFIES sequence from the binary code \(\mathbf{x}\). First, \(\mathbf{x}\) is projected to the model dimension:
\[
\mathbf{z} = \mathbf{x} \mathbf{W}_{\text{from\_latent}} + \mathbf{b}_{\text{from\_latent}} \in \mathbb{R}^{d_{\text{model}}}.
\]

This vector is repeated \(L_{\text{max}}\) times to form the memory tensor \(\mathbf{M}_{\mathbf{x}} \in \mathbb{R}^{L_{\text{max}} \times d_{\text{model}}}\). The decoder is a stack of \(M\) Transformer decoder layers, each containing:
\begin{itemize}
\setlength{\itemsep}{0pt}       
  \setlength{\parsep}{0pt}       
  \setlength{\parskip}{0pt} 
    \item A masked multi-head self-attention over the previously generated tokens (to preserve causality).
    \item A cross-attention layer where the query comes from the decoder state and the key/value come from \(\mathbf{M}_{\mathbf{x}}\).
    \item A SwiGLU FFN identical to the encoder's.
\end{itemize}

All sub-layers use Pre-LN and residual connections. Starting from a start-of-sequence token \([\text{SOS}]\), the decoder generates tokens autoregressively until an end-of-sequence token \([\text{EOS}]\) is produced or the maximum length is reached. At each step \(t\), the output probability distribution over the vocabulary is:
\[
\mathbf{o}_t = \text{softmax}\left( \text{Linear}(\mathbf{h}_{\text{dec},t}^{(M)}) \right),
\]

where \(\mathbf{h}_{\text{dec},t}^{(M)}\) is the decoder's final hidden state at position \(t\).

\paragraph{Step 6: Training Objective.}

The model is trained end-to-end by minimizing a composite loss:
\[
\mathcal{L}_{\text{total}} = \mathcal{L}_{\text{recon}} + \beta \cdot \max\left(0, \mathcal{L}_{\text{KL}} - \lambda_{\text{fb}}\right),
\]

where:
\begin{itemize}
\setlength{\itemsep}{0pt}       
  \setlength{\parsep}{0pt}       
  \setlength{\parskip}{0pt} 
    \item \(\mathcal{L}_{\text{recon}} = -\sum_{t=1}^{L_{\text{target}}} \log p(s_t^{\text{true}} \mid \mathbf{x}, s_{<t}^{\text{true}})\) is the cross-entropy reconstruction loss.
    \item \(\mathcal{L}_{\text{KL}} = D_{\text{KL}}\left( q(\mathbf{x} \mid S) \parallel \prod_{j=1}^{D_L} \text{Bernoulli}(0.5) \right)\) is the KL divergence between the learned posterior and an isotropic Bernoulli prior. For each dimension, \(q(x_j=1 \mid S) = \sigma(\mathbf{q}_{\text{logits},j})\).
    \item \(\beta\) is annealed linearly from 0 to a final value \(\beta_{\text{max}}\) over the first \(T_{\beta}\) steps.
    \item \(\lambda_{\text{fb}}\) is the free-bit threshold (e.g., 0.1 nats), which prevents penalizing KL values below this threshold, encouraging the model to use all latent bits without over-regularizing.
\end{itemize}

\paragraph{Validation of Latent Space Interpretability.}
After training, we validated the physical interpretability of the learned 128-dimensional binary latent space by examining its relationship with molecular synthetic accessibility (SA) score. Specifically, we encoded a held-out set of molecules into their 128-bit binary vectors and performed linear regression between the full binary vector (as input) and the SA score (as target). A strong linear correlation ($R^2$ = 0.8048, see in FIG. A ) was observed, indicating that the binary latent space captures overall structural scaffold dependencies and additive property trends. This finding confirms that the latent space is not an arbitrary embedding but encodes chemically meaningful, scaffold-related information, supporting the subsequent QUBO-based optimization that relies on a structured and interpretable landscape.

\begin{figure}[H]
\centering
\includegraphics[width=0.5\textwidth]{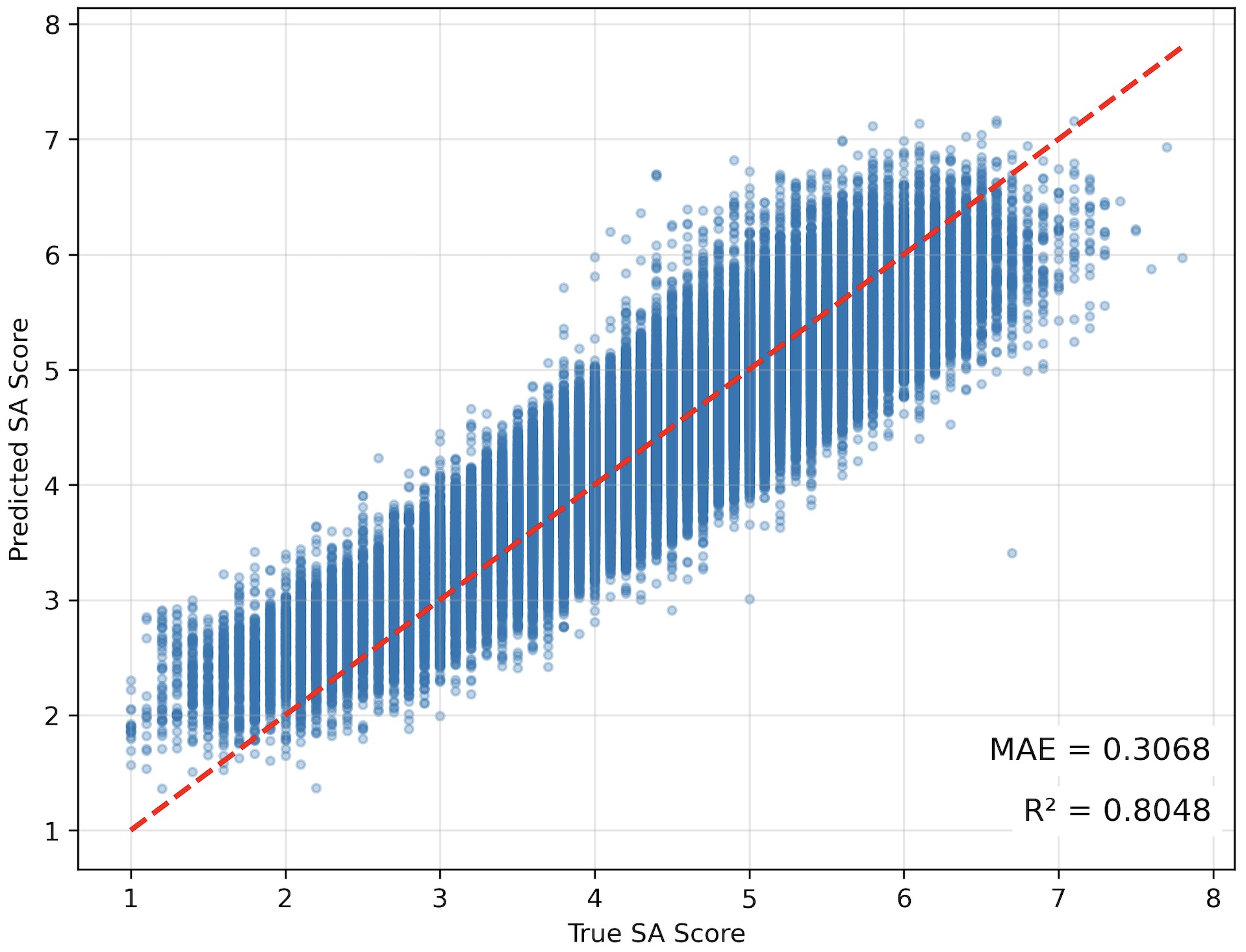}
\caption{True SA Score vs Predicted SA Score}
\label{fig:A1} 
\end{figure}

\paragraph{Remarks on Implementation Details.}
- The temperature \(\tau\) is annealed from an initial value (e.g., 2.0) to a final value (e.g., 0.5) over 10,000 training steps, allowing smooth transition from continuous to discrete sampling.
- The \(\beta\) annealing schedule starts from 0 and increases linearly to \(\beta_{\text{max}}=0.01\) over 50,000 steps, then remains constant.
- The free-bit threshold \(\lambda_{\text{fb}}\) is set to 0.1 nats (or 0.05 per bit on average) to avoid penalizing bits that have already aligned with the prior.
- The model is trained on a large unlabeled dataset of 10 million molecules sampled from Enamine REAL, ZINC, and iBonD using Morgan fingerprint proportional sampling to ensure chemical diversity.
- Post-training, the encoder and decoder are used separately: the encoder produces binary latent codes for any input molecule; the decoder generates molecules from any binary code, enabling property-guided optimization via QUBO or heuristic search.

This algorithmic pipeline establishes a direct, hardware-friendly mapping between chemical structures and fixed-length binary strings, forming the foundation for the subsequent QUBO-based molecular optimization described in the main text.

\subsection*{A6. Regression Results of pKa Prediction from Discrete Latent Space}

\begin{figure}[H]
\centering
\includegraphics[width=0.95\textwidth]{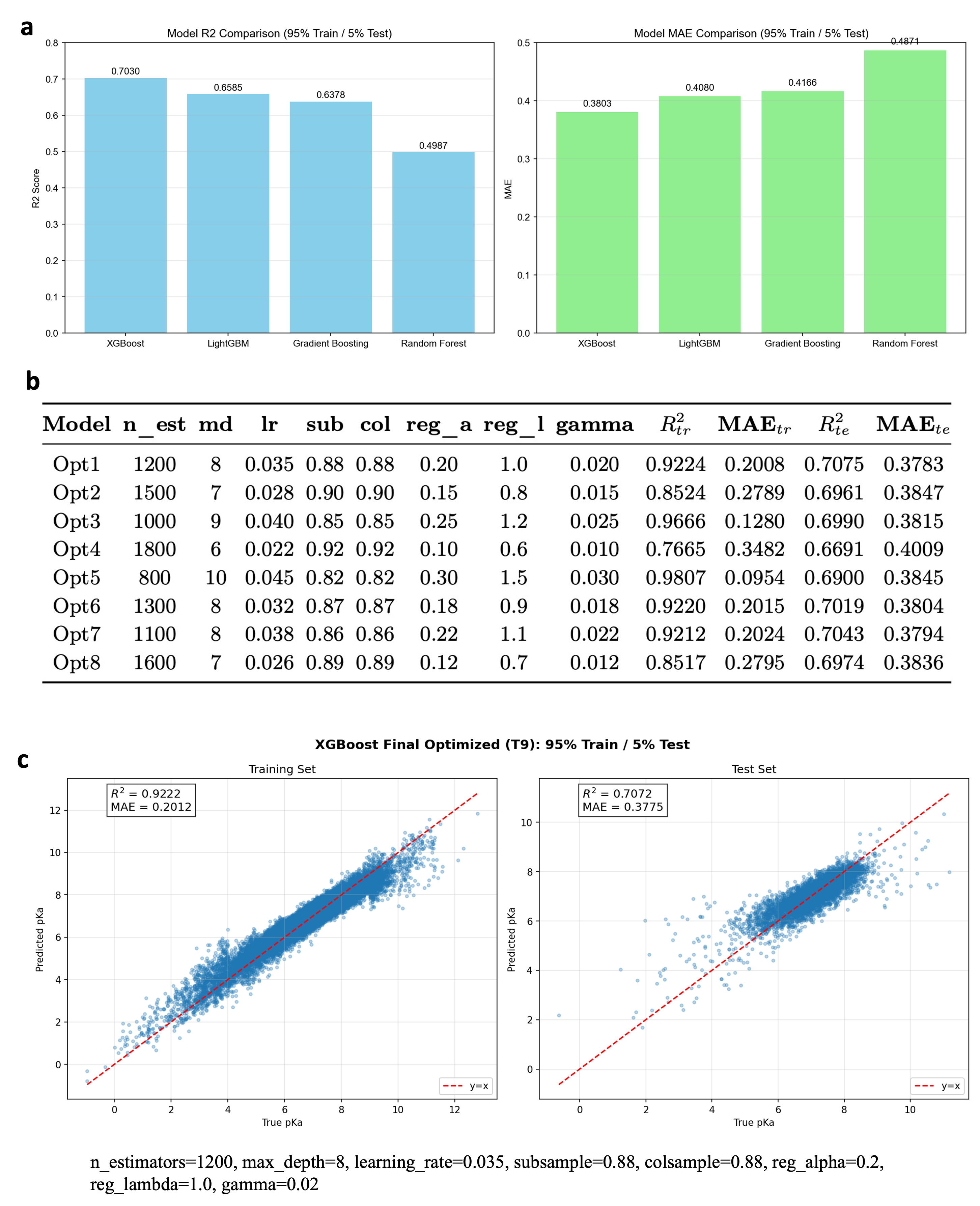}
\caption{pKa Regression Results of Latent Space}
\label{128} 
\end{figure}

\subsection*{A7. Detailed SimulatedAnnealingOptimizer}

The SimulatedAnnealingOptimizer was configured with the following parameters: initial temperature of 1000.0, cooling coefficient of 0.9999, cutoff temperature of 0.0001, 5000 iterations per temperature step, and a solution size limit of 500 per run.

In addition to the optimal results illustrated above, the sampling results of 200 solutions from QUBO matrices extracted from different FM variants are shown in the figure below. Evidently, models with poor long-tail fitting or low $R^2$ scores fail to objectively reflect the valid correlation between QUBO energy and FM-predicted values through QUBO extraction and solving.

\begin{figure}[H]
\centering
\includegraphics[width=0.95\textwidth]{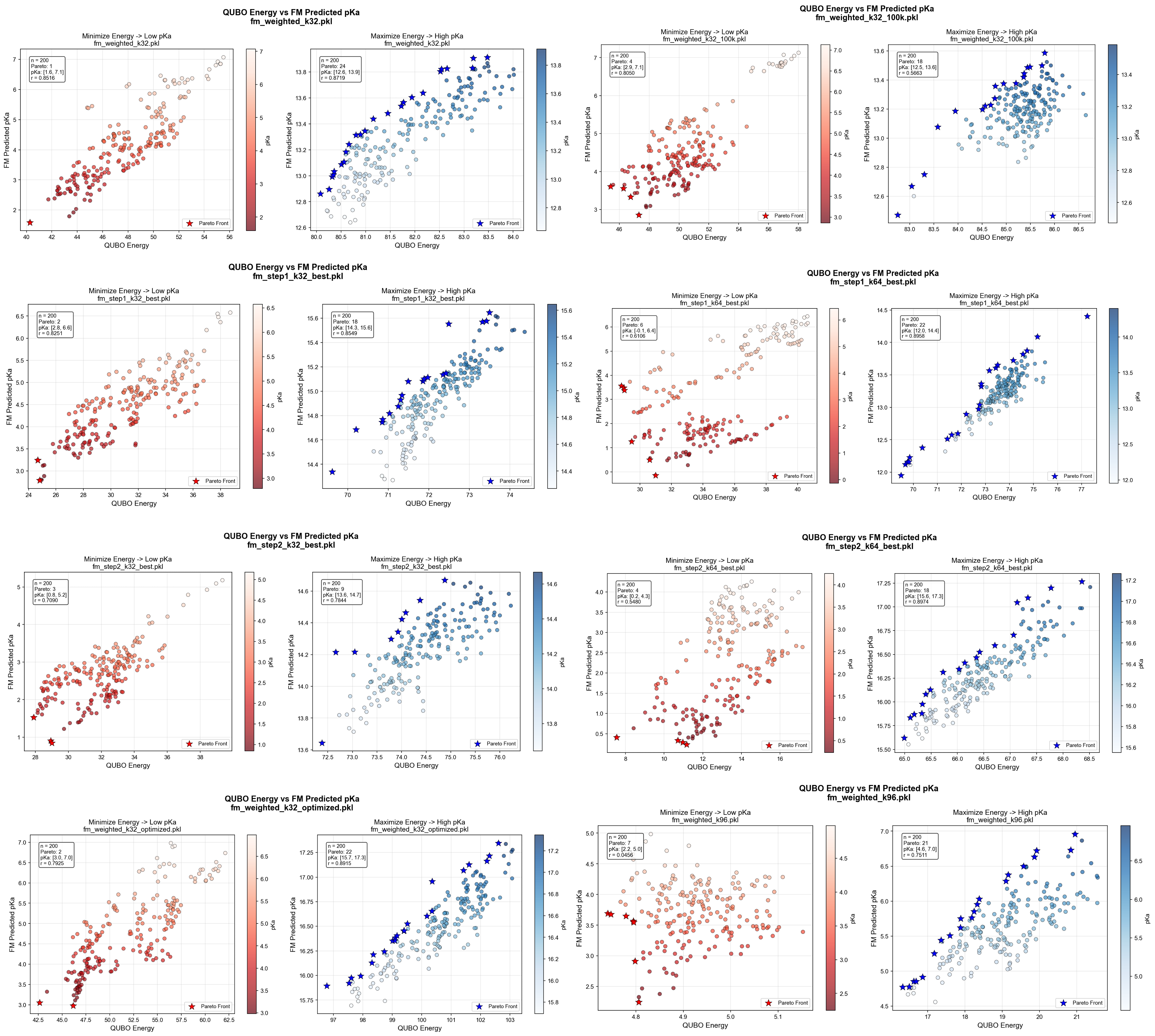}
\caption{SA Results 1}
\label{SA1} 
\end{figure}

\begin{figure}[H]
\centering
\includegraphics[width=0.95\textwidth]{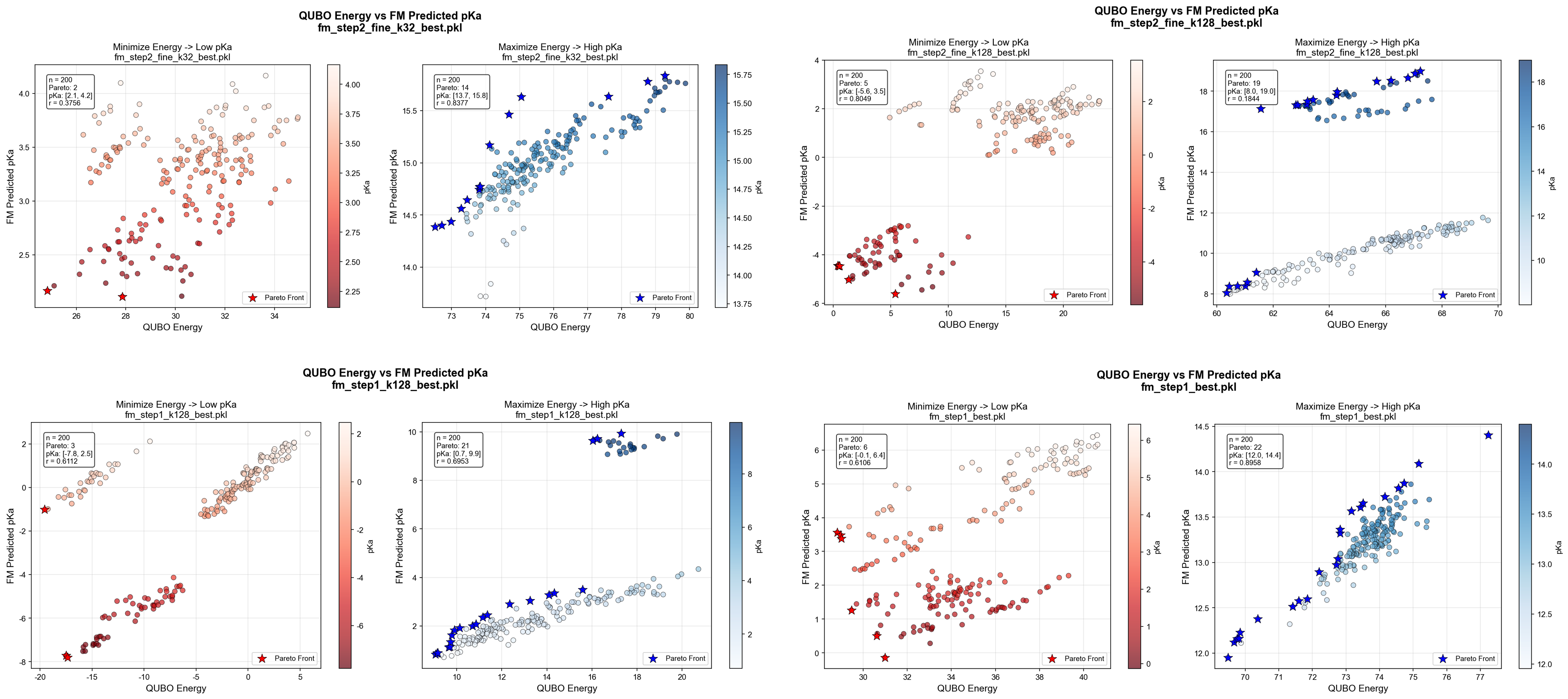}
\caption{SA Results 2}
\label{SA2} 
\end{figure}

\subsection*{A8. Detailed High-gain-based Filtering}

The CIM sampling produced numerous extreme FM energy values that exceeded the statistical confidence range. After removing low-confidence samples, as shown in Figure A16a, the predictions from FM and XGBoost no longer formed naturally separated distributions as observed in simulated quantum predictions; instead, the two sets of results were thoroughly mixed.
On this basis, we merged all valid data (excluding out-of-confidence-range samples) and performed filtering trials using the high-gain parameters derived from the XGBoost model. We first split the full dataset into two equal subsets (top 50\% and bottom 50\%). In each subsequent step, the newly introduced parameters were adopted to evaluate whether the data from the top or bottom half of the original dataset yielded better performance. The results indicate that this screening strategy, built upon the high-gain features of XGBoost, failed to outperform the quantum CIM sampling approach. Further details regarding more complex filtering schemes are available from the authors upon request.

\begin{figure}[H]
\centering
\includegraphics[width=0.95\textwidth]{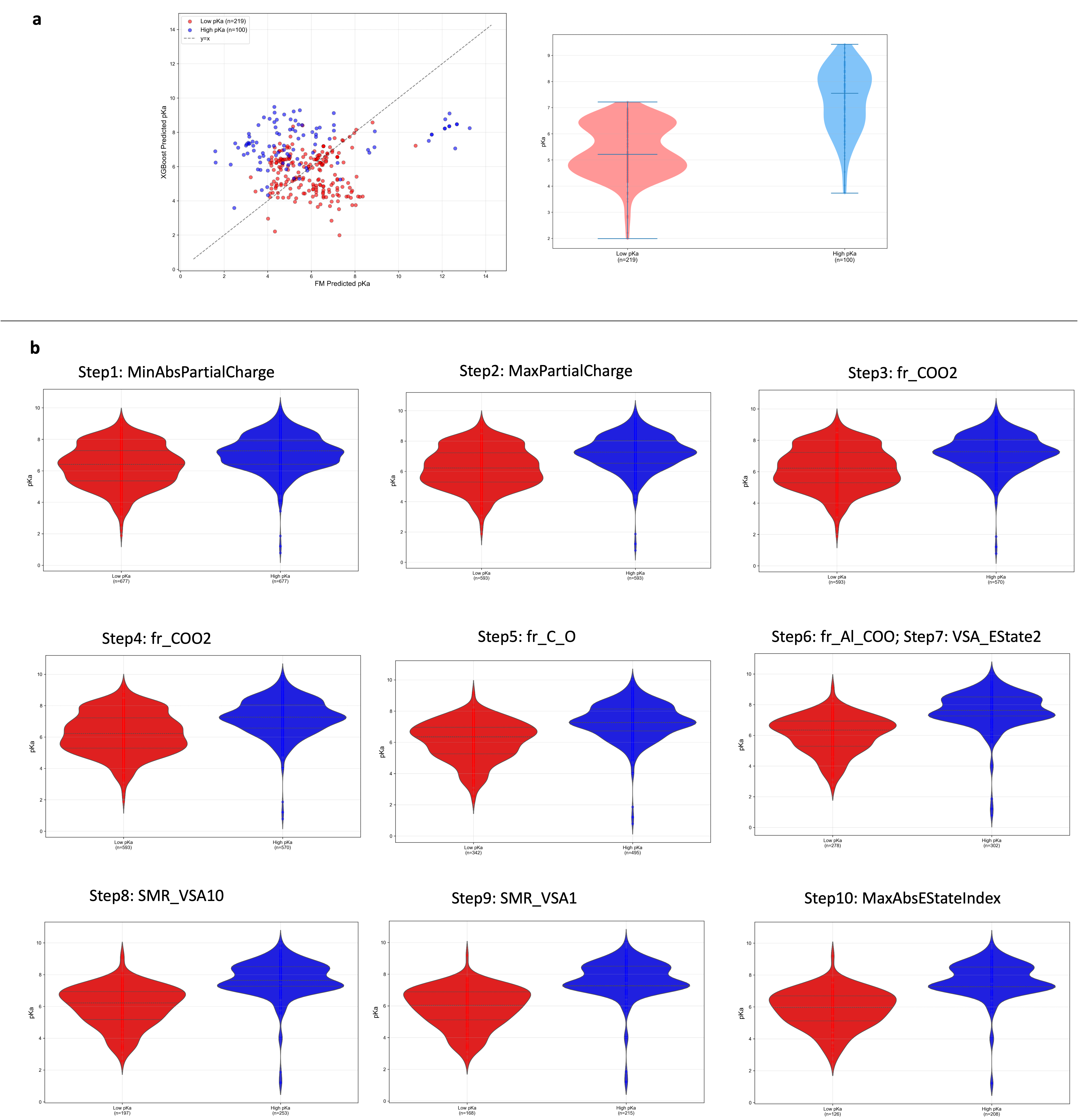}
\caption{a) Quantum Result, b) High-gain-based Filtering}
\label{ob} 
\end{figure}

\subsection*{A9. Detailed ADR Results}

\begin{figure}[H]
\centering
\includegraphics[width=0.9\textwidth]{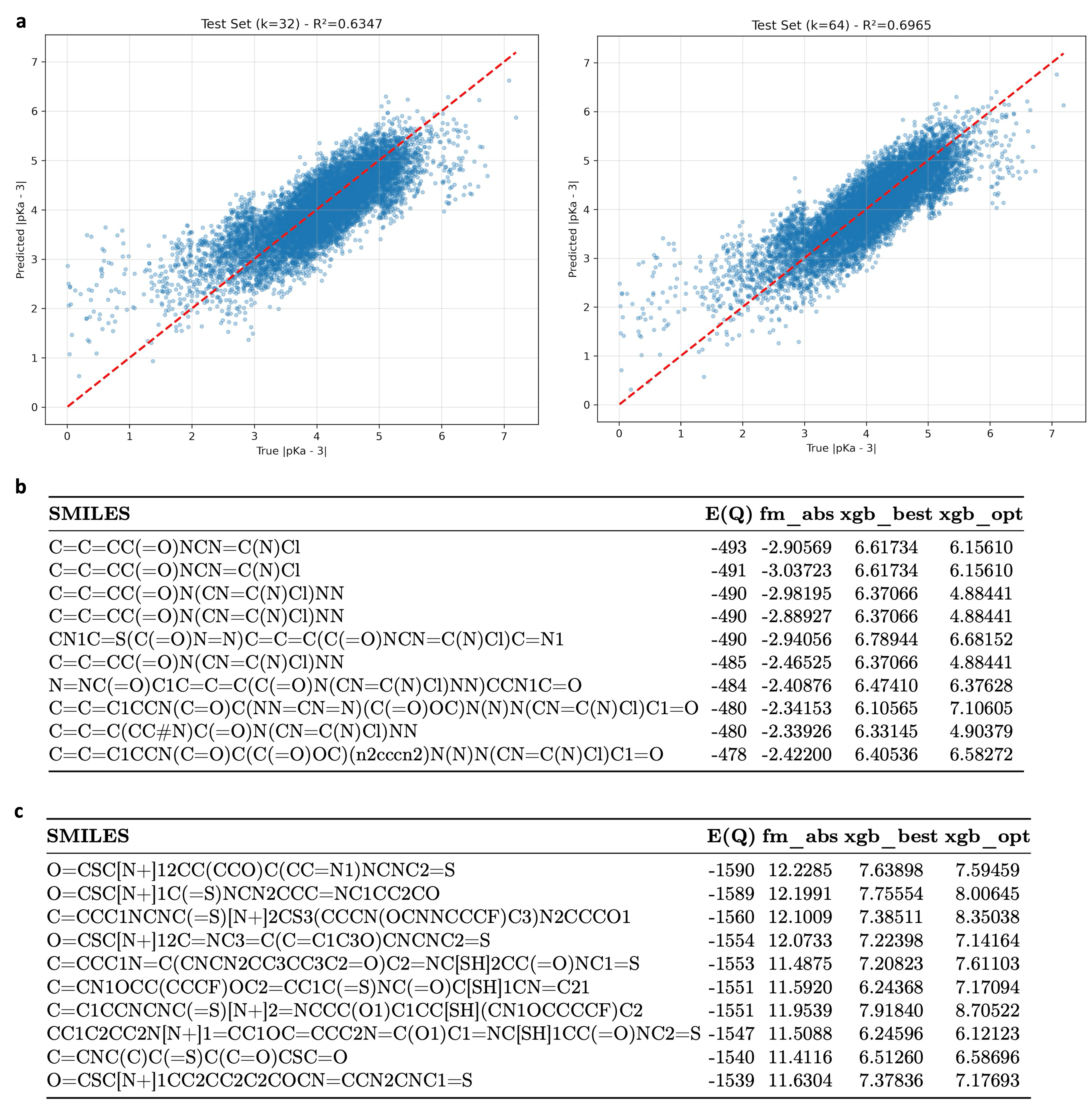}
\caption{ADR Results a) FM Results, b) 8-Bit Quota (low abs group), c) 8-Bit Quota (high abs group)}
\label{ADR} 
\end{figure}

\subsection*{A10. Detailed NDA Results}

\begin{figure}[H]
\centering
\includegraphics[width=0.85\textwidth]{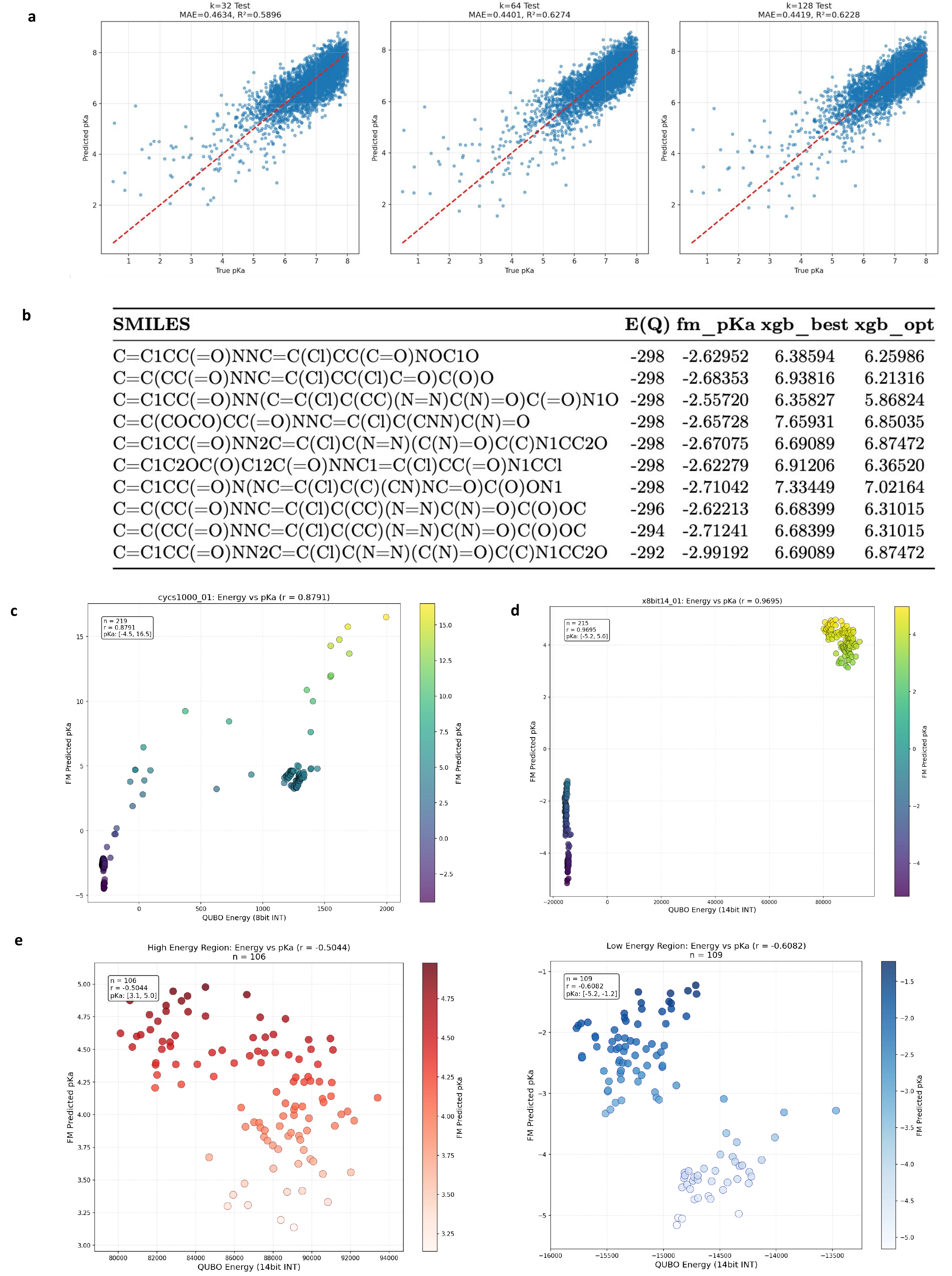}
\caption{NDA Results a) FM Results, b) 8-Bit Quota (low pKa only), c) 8-Bit Sample, d) 14-Bit Sample, e) 14-Bit Sample Fine-scale Features}
\label{NDA} 
\end{figure}

\subsection*{A11. Quantitative Description of Pearson and Spearman Correlation under QUBO Quantization}

Consider a QUBO energy function learned by a factorization machine (FM):

\begin{equation}
E(\mathbf{x})
=
\sum_i h_i x_i
+
\sum_{i<j} J_{ij} x_i x_j,
\qquad x_i \in \{0,1\},
\end{equation}

where \(h_i\) and \(J_{ij}\) are linear and pairwise interaction coefficients, respectively. After integer quantization with bit-width \(B\), the coefficients become

\begin{equation}
\tilde h_i = Q_B(h_i), \qquad \tilde J_{ij} = Q_B(J_{ij}),
\end{equation}

yielding the quantized energy:

\begin{equation}
\tilde E_B(\mathbf{x}) = \sum_i \tilde h_i x_i + \sum_{i<j} \tilde J_{ij} x_i x_j.
\end{equation}

The quantization error is

\begin{equation}
\epsilon_B(\mathbf{x}) = \tilde E_B(\mathbf{x}) - E(\mathbf{x}).
\end{equation}

Assuming a uniform quantizer with step \(\Delta_B \propto 2^{-B}\), we have

\begin{equation}
\mathrm{Var}(\epsilon_B) \propto 2^{-2B}.
\end{equation}

\subsubsection*{1. Pearson Correlation}

The above quantization analysis describes the relationship between the original QUBO energy
\(E\) and its quantized counterpart \(\tilde E_B\).
Under the assumptions

\begin{equation}
\mathbb{E}[\epsilon_B]=0,
\qquad
\mathrm{Cov}(E,\epsilon_B)=0,
\end{equation}

the Pearson correlation between \(E\) and \(\tilde E_B\) can be written as

\begin{equation}
\rho_P(E,\tilde E_B)
=
\frac{\sigma_E^2}
{\sigma_E\sqrt{\sigma_E^2+\sigma_{\epsilon_B}^2}}
=
\frac{1}
{\sqrt{
1+\sigma_{\epsilon_B}^2/\sigma_E^2
}}.
\end{equation}

Since

\begin{equation}
\sigma_{\epsilon_B}^2
\propto
2^{-2B},
\end{equation}

higher-bit quantization always provides a more faithful numerical representation of the original QUBO landscape.

However, in practical molecular design tasks, Pearson correlation is typically evaluated against the target property \(y\),

\begin{equation}
\rho_P(y,\tilde E_B),
\end{equation}

rather than against the original QUBO energy.
Assume the learned energy can be decomposed as

\begin{equation}
E=s+r,
\end{equation}

where \(s\) denotes the component correlated with the target property and \(r\) represents residual interactions.
After quantization,

\begin{equation}
\tilde E_B=s+r+\epsilon_B.
\end{equation}

Therefore, the resulting Pearson correlation depends not only on the quantization error but also on how quantization modifies the residual interaction structure.
Consequently, no general monotonic relationship between Pearson correlation and quantization bit-width is guaranteed.
Depending on the signal-to-noise composition of the learned interaction spectrum, either higher-bit or lower-bit quantization may yield a larger Pearson correlation with the target property.

\subsubsection*{2. Spearman Correlation and Ranking Inversion}

Spearman correlation measures ranking consistency. For two configurations \(\mathbf{x}_a\) and \(\mathbf{x}_b\), define

\begin{equation}
\Delta E = E_a - E_b, \qquad \Delta \epsilon_B = \epsilon_{B,a} - \epsilon_{B,b}.
\end{equation}

The quantized difference is

\begin{equation}
\Delta \tilde E_B = \Delta E + \Delta \epsilon_B.
\end{equation}

A ranking inversion occurs when

\begin{equation}
\mathrm{sign}(\Delta \tilde E_B) \neq \mathrm{sign}(\Delta E).
\end{equation}

If \(\Delta \epsilon_B\) is approximately Gaussian (by the central limit theorem over many independent weak interactions),

\begin{equation}
\Delta \epsilon_B \sim \mathcal{N}(0, \sigma_{\Delta \epsilon,B}^2),
\end{equation}

the ranking inversion probability is

\begin{equation}
P_\mathrm{flip} \approx 2 \Phi\Big(-\frac{|\Delta E|}{\sigma_{\Delta \epsilon,B}}\Big),
\end{equation}

where \(\Phi(\cdot)\) is the standard normal CDF. Spearman correlation decreases as \(P_\mathrm{flip}\) increases.

\subsubsection*{3. Effect of the Reverse Objective}

When training includes a reverse objective:

\begin{equation}
\mathcal{L} = \mathcal{L}_\mathrm{forward} + \lambda \mathcal{L}_\mathrm{reverse},
\end{equation}

the learned interaction spectrum is relatively smooth:

\begin{equation}
J = J_\mathrm{core} + J_\mathrm{weak},
\end{equation}

with many weak interactions still carrying meaningful ranking information. Low-bit quantization truncates these weak interactions, reducing \(|\Delta E|\) and increasing \(P_\mathrm{flip}\), thereby lowering Spearman correlation. High-bit quantization preserves them, yielding higher Spearman correlation:

\begin{equation}
\rho_S^{(14)} \gg \rho_S^{(8)}.
\end{equation}

\subsubsection*{4. Effect after Removing the Reverse Objective}

Moreover, the topology of the extracted QUBO provides direct evidence for this sparsifying effect.

Let

\begin{equation}
\mathcal E_B
=
\{
(i,j):
Q_B(J_{ij}) \neq 0
\}
\end{equation}

denote the set of nonzero couplings after quantization,
and let

\begin{equation}
N_B
=
|\mathcal E_B|
\end{equation}

be the corresponding number of effective edges.

Experimentally, both training objectives produce approximately the same number of couplings at 14-bit precision,

\begin{equation}
N_{14}
\approx
400.
\end{equation}

However, after removing the reverse objective, the number of surviving couplings after 8-bit quantization decreases dramatically,

\begin{equation}
N_8
:
256
\rightarrow
114.
\end{equation}

This observation indicates that a large fraction of interaction coefficients are concentrated near zero and become eliminated by low-bit quantization.

Consequently, 8-bit quantization behaves similarly to a hard-threshold sparsification operator,

\begin{equation}
Q_8(J)
\approx
\mathrm{HardThreshold}(J),
\end{equation}

removing hundreds of weak interactions while preserving only a small set of dominant couplings.

The resulting reduction in effective graph complexity provides direct structural evidence that low-bit quantization acts as an implicit sparsifying regularizer.

\subsubsection*{5. Summary}

\begin{enumerate}

\item Increasing bit-width always reduces the numerical quantization error of the QUBO coefficients and produces a more faithful representation of the original QUBO landscape.

\item Pearson correlation with the target property depends not only on quantization error but also on how quantization interacts with residual weak interactions. Therefore, no universal monotonic relationship between Pearson correlation and bit-width is expected.

\item Spearman correlation is governed by ranking inversion probability, which depends on both the intrinsic energy gaps and the effective perturbation variance.

\item When the reverse objective is present, weak interactions contain meaningful ranking information; preserving them through higher-bit quantization improves ranking consistency.

\item After removing the reverse objective, weak interactions become increasingly dominated by fitting noise. In this regime, low-bit quantization acts as an implicit sparsifying regularizer that suppresses noisy interactions and can improve ranking preservation.

\item Consequently, different Pearson and Spearman trends may emerge under different training objectives, even though higher-bit quantization always yields a numerically more accurate representation of the original QUBO energy function.

\item For the pKa QUBO problem specifically, 8-bit quantization induces a favorable level of natural sparsification (from 406 to 266 edges) that outperforms both higher-precision full-edge configurations and actively sparsified lower-edge configurations in our experiments. Active edge-selection methods such as negative data ablation did not show significant improvement even when preserving a similar number of edges as quantization-induced sparsification, indicating that magnitude-based filtering during quantization may align well with the intrinsic structure of the pKa energy landscape. 

\end{enumerate}

\subsection*{A12. Top 100 Molecules with Low and High pKa }

\scriptsize
\renewcommand{\arraystretch}{0.7}
\setlength{\tabcolsep}{4pt}

\begin{longtable}{@{}l r@{}}
\caption{Low-pKa Molecules} \\
\toprule
\endfirsthead
\multicolumn{2}{c}{{\tablename\ \thetable{} -- continued}} \\
\toprule
\endhead
\bottomrule
\endfoot
\verb|N#CC=C1Cc2cn(Br)cc2O1| & 1.2124 \\
\verb|N#CC=C1Cc2cn(F)cc2O1| & 1.7760 \\
\verb|N#C[N+]=C1Cc2cn(Br)cc2O1| & 1.7779 \\
\verb|O=C([O-])OC1=[N+]=CN1| & 1.8285 \\
\verb|CC(=O)C(CC(=O)O)=[N+]C#N| & 1.8641 \\
\verb|N#CC1OCC2C1=[N+]C(=O)C1(C=CCC1)[N+]=[N+]=C(F)C=CN2C(=O)O| & 1.8756 \\
\verb|O=C(O)C1=CC(=O)C(=O)OC2CCCC2CCCCCCC=C(O)NCS1| & 1.9781 \\
\verb|C=C=C1N(C=O)CC(Cl)C(C2N=N2)=P1(Cl)C(=O)O| & 2.1196 \\
\verb|NC=[N+]=C(F)C(=O)[O-]| & 2.2096 \\
\verb|CN(C=C1CC(Br)CC1OC(O)=[N+]C#N)OC(=O)[O-]| & 2.2490 \\
\verb|CC1CNC(CO)(C(=O)O)N2C=NCN=CN2C(=O)C2=CN(C)N1N2| & 2.3098 \\
\verb|N#C[N+]=C(O)OC1CC(Br)CC1=CNC(=O)O| & 2.5415 \\
\verb|N#C[N+]=C(O)OC1=CN(Br)CC1NC(=O)O| & 2.6775 \\
\verb|CC1=CCC(=O)C(C(=O)O)=NC=C(F)C2(F)N=NC=[N+]=C2O1| & 2.6987 \\
\verb|C=C=CC(=O)O| & 2.7169 \\
\verb|N#C[N+]=C(OCC=O)OC1(C(F)OC(=O)O)CC2(F)CC1C2| & 2.8277 \\
\verb|O=C(O)c1c(C(O)CN=CC2=C=C2)cnnc1Cl| & 2.8342 \\
\verb|NC=[N+]=C(O)OC1CC2=[N+](C=[N+](F)F)NN(C(=O)O)C1C2| & 2.8436 \\
\verb|CC(C)(COC(O)=CC#N)[N+]=C(F)F| & 2.8436 \\
\verb|C[N+](NC=O)[N+](=C1CC(CCCC(=O)O)=C1F)C(F)F| & 2.9061 \\
\verb|CCPC(F)(C=CC(=O)CC(=O)O)C[N+]C(=O)NC(C)=C=O| & 2.9091 \\
\verb|C=C=C(CC1N=N1)NC(=O)C=CNCNC1=CC(Cl)(C(=O)O)O1| & 2.9419 \\
\verb|N#C[N+]=C(O)OC1CC(Br)CC1NC(=O)O| & 2.9593 \\
\verb|CC1=CC(Br)C=C(C=CCC(O)=CC#N)C1| & 3.1451 \\
\verb|O=C(O)C1=C(Cl)N=CC1=CCOCCC=CCC1=C=C1| & 3.1708 \\
\verb|C=C=C(C=C1C2CN=C(C)OCC2(C(=O)O)CC1(C)C)NC(C)=O| & 3.2207 \\
\verb|CN(C)N(C(=O)O)C1CN([N+]C=C(F)F)CC1OC(O)=[N+]C#N| & 3.2208 \\
\verb|[N-]=[N+]=NC(CC(=O)O)CC1CCS(=O)(=O)NC2=C(C=C1CCO)C1C=CC=C1C=C2| & 3.2254 \\
\verb|N#C[N+]=C(O)OC1CC2=[N+](C=[N+](F)F)NN(C(=O)O)C1C2| & 3.2335 \\
\verb|C=CCC1CCC=CC=CC=CC=CC1OC(=O)C1CC1C(=O)C=CCC(F)=[N+]=[N+]=[N-]| & 3.2852 \\
\verb|N#COC1=CC(C(F)=C(F)F)([N+](=O)[O-])NC(=O)C1| & 3.3001 \\
\verb|CC1(F)C=CC(CC=O)C1C(=O)O| & 3.3001 \\
\verb|CC(C)(COC(O)=NC#N)[N+]=C(F)C=CNOOC(=O)O| & 3.3001 \\
\verb|N#C[N+]=C(O)OC1=CN(NC=C(F)Br)CC1N(C(=O)O)C(F)F| & 3.3273 \\
\verb|CCN(CO)C(=[N+]=CN)OC1CC2=[N+](C=[N+](F)F)NN(C(=O)O)C1C2| & 3.3320 \\
\verb|N#C[N+]=C(O)OC1(F)CCC1(F)F| & 3.3392 \\
\verb|CNCC=[N+]C(C)(CF)C(C(=O)O)C(=O)C(=[N+]=CN)OC(F)F| & 3.3513 \\
\verb|N#C[N+]=C1ONC(C(F)=CF)C(F)O1| & 3.3886 \\
\verb|N#C[N+]=C1ONC(C(F)=[N+]F)C(F)O1| & 3.3986 \\
\verb|CC(Cl)(NC(=O)O)C(Cl)=CNC(=O)NC1=C=C1| & 3.4114 \\
\verb|N#CN=C1OC2=C(C(F)(F)F)N=C1C2| & 3.4329 \\
\verb|CCCC1C2CC23C=NC=C(O)C2NN(CCC(F)(C(=O)[O-])S2)C(=O)NC13| & 3.4386 \\
\verb|CC1=NC(CC(=O)O)CCCCN1C(=O)Oc1ccccc1CCCC1C=CCO1| & 3.4451 \\
\verb|CC(=[N+]=CN)C(=O)[O-]| & 3.4470 \\
\verb|N#C[N+]=C(O)OC(F)C1=CC(=[N+]CF)C=C1| & 3.4563 \\
\verb|N#C[N+]=C(CC1C=CC(Br)C1)OCC=O| & 3.4614 \\
\verb|CN1C=C=C(OC2(CCl)C3=COC(=CN2C=O)[PH]3=O)C=C1| & 3.4683 \\
\verb|CC(OC(O)=[N+]C#N)N(F)C(F)=CF| & 3.4750 \\
\verb|O=C(O)CCC=CC(F)CC=C[N+](=O)[O-]| & 3.4836 \\
\verb|N#C[N+]=C(O)OC1CC2(Br)CC1ON2| & 3.4836 \\
\verb|NC=[N+]=C(O)OC(C=CNC(=O)O)C(CF)=C(F)F| & 3.5012 \\
\verb|CN(C=CC(F)(F)OC(O)=[N+]=CN)[N+]CCF| & 3.5121 \\
\verb|CN(CO)C(=[N+]=CN)OC(C)(NC(=O)O)c1ccn(Br)c1| & 3.5229 \\
\verb|NC=[N+]=C1ONC(=C(F)CF)C(F)(F)O1| & 3.5335 \\
\verb|CC1CCCC1C=[N+]=C1NC(=O)C(=O)C2(F)SCN1C2CC=CO| & 3.5471 \\
\verb|CCC1=C2CCCC2=[N+]=CN1C(=O)CC(=O)C(=O)C(C)=[N+]=[N+]=[N-]| & 3.5481 \\
\verb|NC=[N+]=C(O)OC1CC(NCC(F)F)=CC1N=C(C(=O)O)N(F)F| & 3.5629 \\
\verb|CC(Br)=CCC(F)(F)OC(O)=[N+]C#N| & 3.5709 \\
\verb|CCOCC(=O)OC1=C([N+](=O)[O-])CC(F)C1| & 3.5880 \\
\verb|N#C[N+]=C1ONC(C(F)CF)C(F)O1| & 3.5890 \\
\verb|CC1(F)COC(=[N+]C#N)OC1| & 3.6073 \\
\verb|N#C[N+]=C(CC1C=CC(F)C1)OCC=O| & 3.6379 \\
\verb|CN(C)C(CCl)(C(=O)CON=CC#N)[N+](=O)[O-]| & 3.6557 \\
\verb|CC(=O)C(=[N+]=CC(O)=[N+]C#N)C(C#N)=C(F)CCF| & 3.6568 \\
\verb|CC1C(OC(O)=[N+]C#N)=CN1Br| & 3.6626 \\
\verb|CC1=CCCCC2=C(F)SC(N1)C1(CCC(=O)C2=O)CC1| & 3.6686 \\
\verb|NC=[N+]=C(O)OC1C(CC(F)(F)F)CC1(F)F| & 3.6807 \\
\verb|CC1=CC=CC2CCC2C2C=CC(O)=NCS3=C(NC(=O)C(=O)C23F)O1| & 3.6938 \\
\verb|C[N+]=C(F)C1NC(=O)C=CCCOC(=[N+]=CN)C1C=COC(=O)O| & 3.7070 \\
\verb|N#C[N+]=C1ONC(CF)C(F)O1| & 3.7146 \\
\verb|CS(=O)C(=O)N=[N+]=[N-]| & 3.7200 \\
\verb|NC=[N+]=C1OC2C=CCC2=NC(C(F)CF)C(F)O1| & 3.7269 \\
\verb|N#C[N+]=C(O)OC1(F)CCN([N+]CCF)C1| & 3.7318 \\
\verb|CCCCCCCC=CC=CCOC(=O)N1NCSC2(F)C(=O)C=C1C2CCO| & 3.7376 \\
\verb|N#C[N+]=C(O)OC(F)C1=CC(Br)C=C1| & 3.7496 \\
\verb|N#C[N+]=C1ONC(=C(F)Br)C(CCON)O1| & 3.7516 \\
\verb|Cc1cc(N(C=O)CC#N)no1| & 3.7545 \\
\verb|Cn1c#cnn1| & 3.7713 \\
\verb|CC(=O)N(CN=CC=O)C(=O)C1=CC(C(=O)O)=[N+]S1| & 3.7716 \\
\verb|O=CCN(CC1N=N1)C1C(=O)N1C(=O)c1cc(C(=O)O)ns1| & 3.7728 \\
\verb|O=CCN(CC1N=N1)CN(C=O)C(=O)c1cc(C(=O)O)ns1| & 3.8004 \\
\verb|[N+]=C(C=O)C(=O)O| & 3.8023 \\
\verb|O=CC=NCN(C=O)C(=O)c1cc(C(=O)O)ns1| & 3.8065 \\
\verb|O=C1C(Pc2cnon2)C(=O)N1F| & 3.8240 \\
\verb|O=CCN(CNC(=O)c1cc(C(=O)O)ns1)C(=O)C1CC1O| & 3.8317 \\
\verb|CN(CN=CC=O)C(=O)C1=CC(C(=O)O)=[N+][SH]1C=O| & 3.8453 \\
\verb|O=CCN1CN(C(=O)c2cc(C(=O)O)ns2)C1=O| & 3.8453 \\
\verb|C=C(C=O)C(=O)O| & 3.8477 \\
\verb|CCOC(=O)C1=CN(C#N)C23CCCSCC(=C[SH]12)C(O)=NCC3C(=O)O| & 3.8538 \\
\verb|CCC=NC1=CC2=C3C(=O)N4SC(C(=O)O)=NN4C(=C2C)OC(C(C)=O)=[N+]=C13| & 3.8593 \\
\verb|CCCSC1C2=CN(C(=O)N3NC=C(C(=O)O)S3)C1N(C(C)=O)C=NC2=O| & 3.8676 \\
\verb|O=CNC1=[N+]=C(C=O)C(=O)C(C(=O)O)NC1=O| & 3.8677 \\
\verb|O=CCN(CNC(=O)C1=CC(C(=O)O)=[N+]S1)C(=O)C1CC1O| & 3.8916 \\
\verb|CCC(CC12NC=C=C(N=CC=[N+]=NC1=O)C(=O)O2)C(=O)O| & 3.8998 \\
\verb|COC(=O)C1C(F)CCN1CC1N=C2CC(Cl)(C=[N+]=C(NC(=O)O)N1)C2=O| & 3.9222 \\
\verb|COC(OC)(C(=O)O)C(F)=CC=CNC1CC1| & 3.9275 \\
\verb|O=C([O-])OC(=O)N1CC=NC1| & 3.9368 \\
\verb|CSC(C=CC(=O)N(C=O)CN=CC=O)C(=O)O| & 3.9386 \\
\verb|O=CC=C[SH](C[N+](=O)[O-])NC(=O)O| & 3.9429 \\
\verb|CCC1NC=C(C2(C(=O)O)CO2)C2NCC1(F)C(=O)NCSNO2| & 3.9468 \\
\verb|C1=CSCCCCCC(c2n[nH]nc2CCCc2cn[nH]n2)=C1| & 3.9546 \\
\verb|C=C=CSC=C(C=CC(=CC=NC=O)C(=O)O)[N+](=O)C=CN(CF)C1CCC1| & 3.9569 \\
\verb|C=C=C1C(=O)N1C(=O)[O-]| & 3.9738 \\
\verb|CC1(C)NC(=O)CNC2c3[nH]ccc3CC(=CC2(O)C(=O)O)N=C(O)C1Cl| & 3.9745 \\
\verb|CC1NC(=O)CN=CC(O)(C(=O)O)C=C2Cc3cc[nH]c3C1C(Cl)C(O)=N2| & 3.9819 \\
\verb|CC(=O)N1C=[N+]=C2C(=O)C=C21| & 3.9995 \\
\verb|CC(CCO)(C(=O)O)C12NC1=C=CN=CC=[N+]=NC2=O| & 4.0020 \\
\verb|O=C([O-])N1CC(Br)C1=O| & 4.0028 \\
\verb|CN1CCCC(C#N)[SH](NC(=O)O)C(=CC=O)SCC2C=C(CO2)C1=O| & 4.0037 \\
\verb|CC1(C(=O)O)CC2=NC=C=C3NC31C(=O)N=[N+]=C2| & 4.0098 \\
\verb|CCC(CC)NC=CCN(C(=O)O)C(C)OC=O| & 4.0098 \\
\verb|CN(C=CC(=O)Nc1cnoc1C1CC1C#N)C(C=O)C(=O)O| & 4.0098 \\
\verb|CCN(C(=O)O)C1OC2(CN(C)C1=O)CC(C)(C1(O)CC1)CN2| & 4.0117 \\
\verb|CN(O)C1CCCCCN1C=[N+]=C(CO)C1=CC(F)(C(=O)O)OCO1| & 4.0138 \\
\verb|CN(CCCC(C#N)[SH](NC(=O)O)c1cc1=O)C(=O)C1CCOCC1| & 4.0213 \\
\verb|CCC(CC1=Cc2c3c(cn2C(C)=O)C(=O)C13)=NOC(=O)O| & 4.0271 \\
\verb|CC1CC(=NC(C#N)(CF)C(=O)NCCF)N(C(=O)O)C=CC=N1| & 4.0342 \\
\verb|O=CNC1=[N+]=C(CC=CO)C(=O)C(C(=O)O)NC1=O| & 4.0419 \\
\verb|CC1C[N-]N=CCSC1=NC(=O)C1([N+]=CC(C)(CCO)C(=O)O)C=C=CN1| & 4.0454 \\
\verb|C=C1C(=O)CN(C(=[N+])C(=O)O)CN(CC=O)[SH]1C(=O)C1CC1O| & 4.0460 \\
\verb|O=C(O)CNC(=O)C1=CCCCCON(C(=O)F)C1| & 4.0507 \\
\verb|C=C(Cl)CCNC(=O)C=C1OC=C(C(F)C(C(=O)O)c2cocc2F)S1| & 4.0537 \\
\verb|CCOCCCC(=O)N(C)CCCC(C#N)[SH](C=CC=O)NC(=O)O| & 4.0544 \\
\verb|Cc1nnc(C=O)n1C| & 4.0585 \\
\verb|CCC=NSC1=CN2C(=O)NN=C(C(OC)SC=O)C(C=O)=[N+]=C[N+](C(=O)O)=C12| & 4.0640 \\
\verb|CN(C=C1C(=O)Nc2cnoc2CCC1C#N)C(C=O)C(=O)O| & 4.0655 \\
\verb|O=C(O)N1CC=NC1| & 4.0662 \\
\verb|CCN(CC(=O)O)OC(=O)C(Br)=NC(C)(C=CO)CNC(=O)SC=[N+]=CCNC| & 4.0860 \\
\verb|c1nc2cn-2cn[nH]1| & 4.0888 \\
\verb|C=CNC(C=CC(=O)NC(=O)O)=[N+]=CC(=O)NC1(Br)C=C2N3CCN2C31| & 4.0897 \\
\verb|CCCNCC(F)(C=O)ON=C(C)C=C(F)C(=O)O| & 4.0917 \\
\verb|CN1C(C=CC=NC=C(Cl)CF)C(=O)NC1C(=O)O| & 4.1006 \\
\verb|CCCSC1C2=CN(C(=O)N3NC=C(C(=O)O)S3)C1N(C(C)O)C=[N+]=C2O| & 4.1041 \\
\verb|C=C=CC(F)NC(=O)C(=CCC)C(=O)O| & 4.1078 \\
\verb|CCCSC1C2=CN(C(=O)N3NC=C(C(=O)O)S3)C1N(C(=O)O)C=[N+]=C2Br| & 4.1140 \\
\verb|CCCC1(C(C)O)CCCON(C(=O)O)PC=C(F)C1CF| & 4.1451 \\
\verb|C=C(Cl)CCNC(=O)C=C1NC=C(C(F)C(C(=O)O)c2cocc2F)S1| & 4.1938 \\
\verb|O=c1n2cnn1C2| & 4.2086 \\
\verb|CCCSCC1=CN(CC(C#N)N=CC(=O)OC)CN(C(=O)O)C=[N+]=C1O| & 4.2873 \\
\verb|CC1=NC2SCC([N+](=O)C=[N+]=C3CCC3S)=C(C(=O)CC(C)C)C1C2=O| & 4.2935 \\
\verb|CCCNC(=[N+]=C(C=[N+]CC(C=O)(NCCCON)C(F)F)C(=O)O)NC(=O)O| & 4.3057 \\
\verb|CCC(C=NCCF)NC=CCN(C(=O)O)C(CC)OC| & 4.3498 \\
\verb|N#CC(CCCC(O)CCC1CCOCC1)[SH](C=CC=O)NC(=O)O| & 4.3498 \\
\verb|CS(=O)C(C(=O)O)N(O)CC(=O)C1COC2CC2C1| & 4.3498 \\
\verb|C=CC(C)C(=O)C1=C([N+](=O)C=[N+]=C(S)C2CCC2)C2C=C(C=O)C1=NC=CC2=O| & 4.4023 \\
\verb|Cc1cc[nH]c1C1C(C)NC(=O)CSCC(=CC(O)C(=O)O)N=C(S)C1O| & 4.4245 \\
\verb|O=C(CN1N=CS(=O)C1C(=O)O)C1COC2CC2C1| & 4.4251 \\
\verb|O=C(CN1N=CS(=O)C1C(=O)O)C1COCCCN1| & 4.4615 \\
\verb|CCC=NC=C1C=C2C(=O)NN=C(C(OC)SC=O)C(C=O)=[N+]=C[N+](C(=O)O)C12| & 4.4662 \\
\verb|CC(=O)C1=[N+]=CC(c2ccco2)C(=O)N=N1| & 4.4729 \\
\verb|C#CN1N=CC([N+](=O)[O-])=NN=CCCN(C(=O)O)C=NC1(C)O| & 4.4819 \\
\verb|CCC(=CC(C)OC)NC=C1CNC(OC=O)(C(=O)O)C1| & 4.5155 \\
\verb|COC=CC1=CC([N+](=O)C(C)(CCl)N=N)=NC(=O)OC1(F)F| & 4.5444 \\
\verb|CC1=C([N+](=O)[O-])C=C1| & 4.5795 \\
\verb|CCC(=C=O)NC=CCN(C(=O)O)C(CC)OC| & 4.5881 \\
\verb|CC1CCC2C(COC3(C4C=CC=C(NC(=O)O)N4)CCCC23)NC1=O| & 4.5971 \\
\verb|C#CN(N=C(C=NN=CC)[N+](=O)[O-])C(C)(O)N=CN(C)C(=O)O| & 4.6094 \\
\verb|CC1=NC2=CC(=O)C=CC(C)CC(=O)C1=C([N+](=O)C=[N+]=C1CCC1S)C2| & 4.6597 \\
\verb|COC(F)N1N=CC2(N(C)C(=O)NC(=O)O)CC12| & 4.6615 \\
\verb|CN(C)CCN(CC(=O)O)C(=O)N1CC=[N+]=C(S)C1=O| & 4.6708 \\
\verb|CC(=O)ON(CC=O)CNC(=O)C1=CC(C(=O)O)=N[SH]1C| & 4.7060 \\
\verb|C#CC1NC([N+](=O)[O-])=NCN=CCCN(C(=O)O)C=NC1(C)O| & 4.7060 \\
\verb|CCC(=CC(C)OC)NC=C1CCC(OC=O)(C(=O)O)C1| & 4.7073 \\
\verb|CCC(NC=CC1(C(=O)O)CCCCC1OC)C(=O)C=C(C)F| & 4.7306 \\
\verb|C[SH](C)(=C[N+]1=CN(CC(=O)NC2=C3N=CCC=C2O3)N=CC=C1)CC(=O)O| & 4.7382 \\
\verb|CC[N+](C)=C(CC=C(N=CN)C(=O)NN=C1C(=[N+])CC1C(=O)O)C1CC1| & 4.7427 \\
\verb|C=CN=CCC(C(=O)O)C(=O)NC1(F)C=C(C)C=CC=CCCCS(=O)NC1| & 4.7450 \\
\verb|CC=CNC1(C)C=C(C(=O)O)C(=O)N2C3C=C1C2C(F)N(C)C3| & 4.7450 \\
\verb|CN1CC(=O)ON(C=O)C(C=NNC=O)C=[N+]=C1C(=O)C(=O)O| & 4.7450 \\
\verb|CC=CC(C)(C=CNC1(C(=O)C2C3CCOC2N(C)C3)CCCC1CCl)NC(=O)O| & 4.7657 \\
\verb|NCC(CNC=NN=C(CO)C(C=O)(C(=O)O)n1ccnn1)C1(O)C=C1| & 4.7928 \\
\verb|O=[N+]N(C(=O)NCCNC(=O)[O-])C1CCCO1| & 4.7997 \\
\verb|COCCN(C(=O)O)C(=O)N(C)C1CC1| & 4.8025 \\
\verb|CC(Br)C(=O)C(CCBr)(NC=CO)NC(=O)[O-]| & 4.8126 \\
\verb|N#CCCC1CC2(O)OC=NC(=O)C3=COCNC(NC(=O)O)=C3C(=O)NN=C12| & 4.8278 \\
\verb|C=CCC(CC)NC(=O)C(=CC(C)F)C(=O)O| & 4.8310 \\
\verb|CCC(CCNC(=O)Cc1cn(C(C)=O)cc1C=O)=NOC(=O)O| & 4.8509 \\
\verb|C=C(F)C(C(=O)[O-])N(CCC=NC=CC)C(=O)N1C(CC#N)C1(C)C| & 4.8972 \\
\verb|CCC(=C=O)NC=CCNC(CC)(OC)C(=O)O| & 4.9096 \\
\verb|CCC(=CCCO)NC=CCNC(CC)(OC)C(=O)O| & 4.9253 \\
\verb|CCC(=CC=CSN)c1n[nH]nc1CCCc1cn[nH]n1| & 4.9269 \\
\verb|C#CC1C=CC=C(N(C(=O)O)C(CC)OC)OC=C1C| & 4.9337 \\
\verb|O=[N+]N(CCO)C(=O)NCCNC(=O)[O-]| & 4.9507 \\
\verb|CC=CC(=O)C1(Br)CC(=O)N(CO)CC=NN=C(Cc2ccccc2O)N1C(=O)[O-]| & 4.9700 \\
\verb|CCC1N(CN(CC=O)CC2CO2)C(=O)C=CC(C(=O)O)[SH]1C| & 4.9719 \\
\end{longtable}

\scriptsize
\renewcommand{\arraystretch}{0.7}
\setlength{\tabcolsep}{4pt}

\begin{longtable}{@{}l r@{}}
\caption{High-pKa Molecules} \\
\toprule
\endfirsthead
\multicolumn{2}{c}{{\tablename\ \thetable{} -- continued}} \\
\toprule
\endhead
\bottomrule
\endfoot
\verb|CN=C1C=C(CN)N(C)C1NC(C)c1ccccc1| & 9.8899 \\
\verb|C=CCOC1NC=CCNC(F)=NOC(O)CC1C| & 9.8791 \\
\verb|CCC(C=NC=C(C)CO)NC=CCN(C(C)=N)N(C)CCF| & 9.8650 \\
\verb|CCC(C=NC(C)CO)NC=CCN(C(C)=N)N(C)CCF| & 9.8579 \\
\verb|CNC=C[SH](C)(=O)NC1CCNC1| & 9.7809 \\
\verb|C=COC1NC=NC=C1CC1=CSC2=[SH]3(CCC3(CC)OC2C)C1| & 9.7496 \\
\verb|C=C1CCNCC=C(C)c2ccccc2C(C)NC1| & 9.7485 \\
\verb|CCC1=NCC=CN1| & 9.7057 \\
\verb|CC(O)SC(=NNCC(C=CC=C[SH]1C=CC1C)CF)N(C)S| & 9.6858 \\
\verb|C=C=C(C)C1NC(CC)(C(F)F)C(N(C)F)NC(=S)NCC1C=CC| & 9.5841 \\
\verb|CCCNC=CC(NC(C)CO)=C(C)CCO| & 9.5836 \\
\verb|C=C(C)CCCC1SC=CC2=NC(Br)=C1CN2CC1CC1=CC1(C)CC1O| & 9.5788 \\
\verb|C=CC(F)=C1C=C2CCCN(ONC2(C)C)[SH]2C=C(CCCNCC2)C1| & 9.5775 \\
\verb|CN1CCC2CNCCNC(Cl)(C=C3NC=CC=C3C2)N1| & 9.5725 \\
\verb|CC(=NC(=N)N(C)CCCO)OCN1CC=CN2NCN=CCC(=O)OC12| & 9.5574 \\
\verb|CCCCC(CC=N)(CNC1=C[SH](C(CC)CN(C)C)N1)N(C)O| & 9.5451 \\
\verb|CCOCNCCCCCCN(C)C(=O)C=C1NC=CC1=N| & 9.5336 \\
\verb|C=C1CCN2CCCC(C)C2C2(CN3CC4CC3C4)CC(C)(NC1)c1ccccc12| & 9.5218 \\
\verb|CN(C)CC1=C(F)C2=C3C=NC3CCCNC=C(N(C)C)CCNCC2=C1| & 9.5152 \\
\verb|CNCCC1(C)C=CC(=C2C=NC(N)=C2Br)C=CN1CC(CN)CC(C)C| & 9.4766 \\
\verb|C=CNC=CCCC(C)NC1(Br)CCNC1| & 9.4688 \\
\verb|CNC(CN(C)C)C1=[N+]=CC=CC1CCF| & 9.4476 \\
\verb|S=C=[SH]C=CCC1=CN=CNCOC=CC1| & 9.4428 \\
\verb|CNCCC1(C)C(=CC=C2C=NC(N)=C2Br)C(C)=CN1CC(CN)CC(C)C| & 9.4279 \\
\verb|C1=CCNC2CC(=C1)C2| & 9.4266 \\
\verb|CON(N)C(O)(OC)N1CCCCOCC(=N)N(CC2=COC(CCCO)C2C)C1| & 9.4079 \\
\verb|CCN(C)C(C)C(C=CCC=C(C)NC(C)C)=C(F)CNC(C)C(N)OC| & 9.3867 \\
\verb|CC=CC1CCN([SH]2C=C(N(C)F)N2CC)SCCC=C=C1C| & 9.3800 \\
\verb|N=C(C=CN1CC=NOCCNC1=O)C1=CC=CON=CC=CC1| & 9.3793 \\
\verb|CC=CC=O| & 9.3686 \\
\verb|C=NC1(NC=CC=N)OCC(C)(NC)C1N(C)C=O| & 9.3453 \\
\verb|CC=CC1CNC1C(=O)NCN=CC(NC=NC=N)OC| & 9.3453 \\
\verb|C=C1CC(C=C(F)CNCC=C(C(C)=S)N(C)C)S1| & 9.3415 \\
\verb|C=COCC(O)(C=CN=CC=C[NH])NC| & 9.3365 \\
\verb|COC(C)(F)CS(N)(=O)=O| & 9.3339 \\
\verb|N=CC1COC(C=CNC2=CN2)C1C(F)F| & 9.3244 \\
\verb|C=CCNCCPc1ccccc1C1(CN(C)C)NC(O)CN1C| & 9.3096 \\
\verb|CCCNCCPc1ccccc1C1(N(NC)C(C)O)CN(C)C1| & 9.3060 \\
\verb|CC1N[N+]2=C(Cl)NC(CC2)C(C)C(C)C1NCCCN=[N+]=[N-]| & 9.3016 \\
\verb|CCC1(CNC(=O)C2=CN2CC=CC=N)CN1C1C=CCN1| & 9.2981 \\
\verb|CC(=CC(=N)ON)OCN=CNCC=CN1COC(=O)CC=NCN1| & 9.2977 \\
\verb|C=NC1CN2CC=C(F)C=CC1(C(C)CNCCC(O)NC(C)O)C2| & 9.2897 \\
\verb|N=CON1CCC(NCC=CNC2(CCO)CC2)COC1| & 9.2806 \\
\verb|CC(=N)C(=O)C=CCNC=NCNCC1COCC(=O)C=C(O)CC(C#N)O1| & 9.2764 \\
\verb|C=COCCNCCCN1CCS(=O)C12C=CC(CC=C(Cl)OC)=N2| & 9.2756 \\
\verb|C=C1CNC23C=C2NC3N=C1C1(N)NCCCCN1| & 9.2751 \\
\verb|CCC1NCP2C3=CC=[N+]=C4C(C=CC=CN2C3F)CC1C1CC1C41CN(C)N1| & 9.2603 \\
\verb|C=NCNC(C)C=[N+]=C1CC=C[SH]1C| & 9.2518 \\
\verb|CN(C)N1C2c3ccccc3CCC3CCCC14C=C(COCC4N3)C2N| & 9.2497 \\
\verb|CC(NCC=CNC1(CCO)CC1)C1OCN(C=N)CO1| & 9.2449 \\
\verb|C=NCNC(=CC=CS)CCC#N| & 9.2400 \\
\verb|CCCN1CCCC1CC(CF)=C(NN(C)C)C(C)C| & 9.2346 \\
\verb|CNCC1=CC(=O)N(NCC=CC(=N)N(C)C)C(=O)C1=NC=N| & 9.2289 \\
\verb|CC=CC(C(=O)NC1NCC1(O)CC(C=N)C(=O)C1(COC)CC1)N(C)C[NH]| & 9.2277 \\
\verb|CN1CC2=CC=[N+]=C(C=C(CNCC=C(F)N(CO)CON)N2)C1| & 9.2268 \\
\verb|CCNC=C=N[SH](C)(=O)C=CNC| & 9.2222 \\
\verb|CC1=C(N)OC2=NCNC3(COCN1C)CC3OC=C=C2| & 9.2199 \\
\verb|CC1=C(N)OC2=NCNC3(COCN1C)CC3OC=C=C2| & 9.2199 \\
\verb|C=C(C)C(=O)N1[SH]=C(NO)NC=C1C(=CCCNCC=NC)P1CCNC1C| & 9.2092 \\
\verb|CC1(NCCCON=N)CCNCCC=NNC=C(N)COC1| & 9.2071 \\
\verb|CCCN(CCCCCC1CCNCC2(CC)C1=CCC=[N+]2C(C)NN(C)C)NC| & 9.2025 \\
\verb|CNCCC1(C)C=C2C=CC([N+](F)F)=CC=CC(C(C)C)C(CN)N1N2| & 9.1979 \\
\verb|C#CC(C)(C)S(N)(=O)=O| & 9.1854 \\
\verb|NC(CF)CC1CNCO1| & 9.1748 \\
\verb|COCCC(CC1CCCO1)NCC1=NC=C2CC21| & 9.1719 \\
\verb|N#CC(C=NO)=C1CC2=CC3=[N+]=C2CC3=COC=CN1| & 9.1695 \\
\verb|C=CCNC1C2NC3CC4CN(C)C=C(N4)OC(=N3)C12| & 9.1684 \\
\verb|C=C(C)CCCC=NC(CCCO)=C(CC(C)C)CN1CC=CC1C(C)(N)O| & 9.1658 \\
\verb|CC(N)NC(O)=CC=CC=CPC(F)F| & 9.1559 \\
\verb|C=C1C(N(C=C2C=C(CCO)C=N2)CCCCNCC2CC2)=CC=CC1NN(C)C| & 9.1556 \\
\verb|CC=C1OCC2=CCNCCC(=O)C1(C)CN(C)C(C)(O)N2| & 9.1447 \\
\verb|CC1=CC2CNCC3OCC(C)N3C2=CCC=C(CF)N1| & 9.1412 \\
\verb|C=CNC(C=C1C=[N+]=CC2(CN(C)C2)N(C(CO)C2CCC2CC)N=N1)N(C)C| & 9.1406 \\
\verb|CC1=CC(C)(O)N(C)C1| & 9.1344 \\
\verb|CCCNCCPc1c2cccc1C1(CN(C)C1)N(C(C)O)CC=CC(F)=C2| & 9.1342 \\
\verb|C=CNC(N)C(=O)C=CN1CN(CCCCC=C(CO)CC=N)C(C)N1| & 9.1340 \\
\verb|CNS(N)(=O)=O| & 9.1300 \\
\verb|CN(C)C=[N+]=C1C=CC=C(F)C2CCNC(NC=CN=CCCC1)C2NC(O)CO| & 9.1248 \\
\verb|NC1CSC1| & 9.1245 \\
\verb|CPc1ccccc1C(=CNC1CC12C(=COCCF)C2C)CN(C)C| & 9.1233 \\
\verb|CCC(=CNCCC(O)C1C=CC(F)=C(CN(C)C)C1)N(C)C| & 9.1227 \\
\verb|C=C(C)NC1(Cc2ccccc2F)C(CNCC2CC2)C2(CO)C1N1CCC12C| & 9.1206 \\
\verb|CC(C)CC(C=CN(C(F)(F)F)C12CON(C1)C(NO)=CN2)CC#N| & 9.1150 \\
\verb|CCC1=NC(C(C)N)OCC(N(C)O)=CC=CN1| & 9.1142 \\
\verb|CNC1CC2CN(C(C)(CO)C3CC3C)C3=CCNC=C3C1(C)c1cccc2c1C| & 9.1131 \\
\verb|C=CCC(CO)C(=O)CC(CC)N(C)C(=O)C1(CC=N)CCOC1| & 9.1063 \\
\verb|CC1=[N+]=CC=C(NCCNCC=C(F)N(CO)CON)C1N(C)C| & 9.1052 \\
\verb|C=C1NC(N(C)C)C=CC=[N+]=C2NN=CN(C23CN(C)C3)C1(CO)N=N| & 9.1042 \\
\verb|CN1CC(=N)C(O)C(Nc2ccsn2)C2=CNC(C#N)(C=COC=CN1)CN=C2| & 9.1000 \\
\verb|C=CCC1N=CC=C1C(C)S(=O)C1CC2CC2CC(OC)(OC=C)C2=C1CNN2| & 9.0985 \\
\verb|NCC1NC(CN2CCCC2)C(CNC2OC3C(F)CC23)=C1c1ccccc1| & 9.0975 \\
\verb|CN1CCN2CCC2(C)C(CO)(CC2CC2)C(CN)C(C)(c2ccccc2)NC1| & 9.0964 \\
\verb|N=CN=CC(=O)CN=C(N)NCN=CC(=O)NCCOCNC=CC=NON=C1CCCC1| & 9.0944 \\
\verb|COC(Cl)=CCC1=NC2(C=C1)S(=O)CCC21C=COCCNCCC1| & 9.0926 \\
\verb|C=C=COC1CC12CCN=CN=C(O)C(=N)N(C)COC2| & 9.0889 \\
\verb|CN(CNC(Cl)Cl)c1cc2cn1C(Cl)=[N+]=C2N1CCOCC2(CCOC2)C1| & 9.0855 \\
\verb|CCCN1C=C2CNC(=O)COC(C2)NC1=CCO| & 9.0826 \\
\verb|CCCN1C2=CCNC(=O)COC(C2)NC1=CCO| & 9.0825 \\
\verb|CC=C=O| & 9.0794 \\
\verb|CNC1C=CC=CC(Cl)N(CC(CO)(NC)OC)CCCC(F)N1| & 9.0792 \\
\verb|COC(Cl)=CCC1=NC2(C=C1)S(=O)CCC21C=COCNCCCC1| & 9.0714 \\
\verb|[N+]COC1=C[SH](N2C=CC(F)=CC=CCCC2)C[SH]=NCCC1| & 9.0697 \\
\verb|CCSC1CCCCCNC(F)=CC(=CN)C2(CNN=N2)O1| & 9.0673 \\
\verb|CC1CC(CC#N)C=CN(C2(C)NC3=CNC(C3)O2)C1| & 9.0666 \\
\verb|C=NC(O)CCC(NC1CC1)C(F)(F)F| & 9.0661 \\
\verb|CCCC=CC12C=CNC(=CCN=C(F)C=C1)C(O)=C1SC(N(C)C)=CNC12| & 9.0660 \\
\verb|CC1NCN(C)CCNCC2OCC(CNCC3CC3)C2(C)c2ccccc21| & 9.0642 \\
\verb|CC1=CNCCPNN(C(C)O)C=CC=NC=CC(F)=CC=CC(NC(C)C)=C1| & 9.0641 \\
\verb|CCNC1=C(C=CN2C(CC)CC2CCO)CC(O)OCO1| & 9.0637 \\
\verb|CNC(O)(O)C1=[N+]NCC=CN(C(F)(Cl)OC=CCN)CCNCN=C1| & 9.0633 \\
\verb|CC1CC=C(CO)C1| & 9.0622 \\
\verb|CC1=[N+]=CC=CCN(CCCN2CC=C(O)N=C2CO)CC(C)C1| & 9.0603 \\
\verb|CCCC=CC12C=CC=C(C=C(F)C=C1)C(O)=C1NC(N(C)C)=CNC12| & 9.0558 \\
\verb|CC1=C2C=CC(N2)C2=CC3(O)COC=CCC3OC(=NN2)CN1| & 9.0536 \\
\verb|CC=C(F)C12C=CC(NCCc3ccccc3C3(CN(C)C3)C1)OC2| & 9.0532 \\
\verb|COC=CN1CN=CCC2CC(CCOC(OCCCO)N1)CN2| & 9.0457 \\
\verb|CC1=CCCNCN(C)C=CCC1(C=CNC1CCC(F)CN1)CCOCC1CN1C| & 9.0445 \\
\verb|CN(C)CC(F)c1ccccc1CCNC1CC1(C)C=CCO| & 9.0430 \\
\verb|CC1C=NC(CO)=CC1| & 9.0407 \\
\verb|CCc1n[nH]cc1OC1C=C(C)OCCN1CC(=O)C(C)=[N+]=NNC(=O)CC=NC=N| & 9.0347 \\
\verb|N=CN=CC(=O)CN=C(N)NCN=CC(=O)NCCOCNC=NC=CON=C1CCCC1| & 9.0326 \\
\verb|CC(=NC(=N)CON=CCCCO)OCN1CC=CN2NCN=CCC(=O)OC12| & 9.0314 \\
\verb|CN1CC(C=CSC=CC(F)N(C)C)(CC2=C(CCO)C=N2)C1| & 9.0302 \\
\verb|CON(N)C(O)(OC)N1CCCOCC(=N)C1NCC1=COC(C(C)C)C=[N+]=C1| & 9.0299 \\
\verb|C=COC1=NOC=CC[SH]2C(C=CC(C=COC)C2C)CN2CCCCC2CCN1| & 9.0291 \\
\verb|C=C=C(C(N)c1ccco1)C(F)(CCNCCO)N=CCC| & 9.0226 \\
\verb|CCC=CC1CC=CNC=CCN(C(=O)CO)CO1| & 9.0204 \\
\verb|C=CNCC=C(F)C=CC1=[N+]=CC(CN(C)C)NC=C(CO)N(C)CC(C)=C1| & 9.0164 \\
\verb|C=CC1C(=CC=N)C(=O)N2[SH]=C1NC=C2C1=CCCNCC=NC1| & 9.0149 \\
\verb|CCC1CN(C)C=C2NC3=CS1(CO2)CC(C(C)C)(N(C)O)CN3| & 9.0149 \\
\verb|CC=NN(C=CC=NC=CC(F)=CC(CN1CCCC1)=C(C)NN(C)C)C(C)O| & 9.0121 \\
\verb|C=C(N)CC1C2=CCN1CNC=CC21NC(=N)N1C(CC)CCONCC| & 9.0114 \\
\verb|CN(CC1=CCN1)NCCS| & 9.0108 \\
\verb|CCCNC=CC(NC(=O)C1CCO1)=C(C)CCO| & 9.0092 \\
\verb|CC1=NC(CNCCCC=CC=CC=C(F)NN(C)C)C1| & 9.0078 \\
\verb|C=C1C=C2CNCN(C=NC=CCCON=CC)CCOC(=NC=N)CC(=O)N2C1| & 9.0058 \\
\verb|CCC1CN(S(=O)(=O)CCC2CC23NCC=NCCc2[nH]cc4c2C43)C1| & 9.0042 \\
\verb|CC1C(C=[N+]=C2C=C3C=CC2(NN(C)C)O3)NC2CC21C| & 9.0040 \\
\verb|CCCNCCPc1c2cccc1C1(CN(C)C1)N(C(N)O)CC=CC(F)=C2| & 9.0028 \\
\verb|N=CN=CCC(=O)OCNC=CNc1cnoc1CC1=CC=CCON=CCC1| & 9.0002 \\
\verb|CCCNCCPc1ccccc1C1(N2C3CC32O)CN(C)C1| & 8.9990 \\
\verb|CNCCC1(C)C=CC2=CC([N+](F)F)=CC=CC(C(C)C)C(CN)N1CN2| & 8.9977 \\
\verb|CC1=CCN(CCCC=CNN(O)C(C)NN(C)C)O1| & 8.9960 \\
\verb|N=CC=C1OC2=[SH]N(C1=O)C(C1=CCCNCC=NC1)=CN2| & 8.9933 \\
\verb|COC1=CCC2=NC(S(=O)CCC3C=COCNCCCC3)(C=CNN=C1)C=C2| & 8.9877 \\
\verb|CCC(F)C=CC=NC(CO)C(C)(NC1CC1)N(C)CNC(=CC=CS)CC#N| & 8.9859 \\
\verb|CN(C)NCCC=[N+]=CC1CCCO1| & 8.9852 \\
\verb|CC=NNCCCCCNC1=CC2(CC2)C(CNC2=CC=CCC2C(F)CO)=NO1| & 8.9834 \\
\verb|CCC1C=CNC(N(C)C)C=CC=[N+]=CC2(CC(C)C2)N(F)C12CNC2| & 8.9823 \\
\verb|CC(CC(C)C)=[N+]=CC=C1C=CC(C)CN=CC2(CCO2)NC=CNCCCN(C)C1| & 8.9818 \\
\verb|CNC(O)(O)C1=[N+]N(CC=CN(CCN)C(F)(Cl)OC=CCN)CN=C1| & 8.9808 \\
\verb|CN(C)NC(c1ccccc1CCCF)N1CC12CC(O)CCC2CCN| & 8.9802 \\
\verb|CCC1CCC1C(CO)N(N=N)C1(C=[N+]=CC=CC(F)N(C)C)CC(C)C1| & 8.9767 \\
\verb|C=C1COC=CC=NC=C(C(F)=CC(NCC2CC2)=C(C)NN(C)C)C(C)O1| & 8.9736 \\
\verb|CC1C(N(C)C)N2Cc3cccc(c3C(F)F)C(C)(C2)C1(CN)N(C)C| & 8.9697 \\
\verb|CCN(C)C1(C(=O)C(=NC=N)C(=O)NCCNCc2ccccc2C)C=CN1| & 8.9676 \\
\verb|C=COC=CNC(O)N(CC)S(=O)C1C=CC(CC=C(Cl)OC)=N1| & 8.9673 \\
\verb|O=S1CCC2=COC3C(CC2)CNCOC=CC1C=C1C=NC2=C1CC3O2| & 8.9640 \\
\verb|CCC=C1OC(CNN=C(SCC(F)F)S(C)=S)C=CC1=C[SH]1C=CC1C| & 8.9625 \\
\verb|C=C(C)CN(Cc1ccccc1F)C1N2CCC2(C)C1(CO)CCNCC1CC1| & 8.9599 \\
\verb|CCC12CN(CC3=CCOC=C3C)CNCCC3COC4OC3C=C4C1=[N+]2| & 8.9589 \\
\verb|CCN1CCCCCC(NCC(=S)N(CO)NN)N(CN(C)C)c2cc1ccc2F| & 8.9577 \\
\verb|CCC1(C)CCOC2CCCOCCN(CN(O)C(O)(OC)N(N)OC)CC=C2N1| & 8.9534 \\
\verb|CC1COC2CNCC3CN(C)NC(CF)=CCC=C3N12| & 8.9513 \\
\verb|COC1=CCC2=NC(S(=O)CCC3C=CCCNCOC=C3)(C=CNN=C1)C=C2| & 8.9496 \\
\verb|CC1=[N+]=CC=CCN(CCCNCC2=C(F)N=C(N)CO2)CC(C)C1| & 8.9480 \\
\verb|CC1CC2NC(CCO)N(C=CCNC(=O)CO2)C1| & 8.9479 \\
\verb|CC=C1CNCCCCC(CCN2NNC2(C)O)C1c1cncs1| & 8.9472 \\
\end{longtable}

\normalsize

\subsection*{A14. Transformation of Linear Terms in Trained FM Model}
\begin{figure}[H]
\centering
\includegraphics[width=0.9\textwidth]{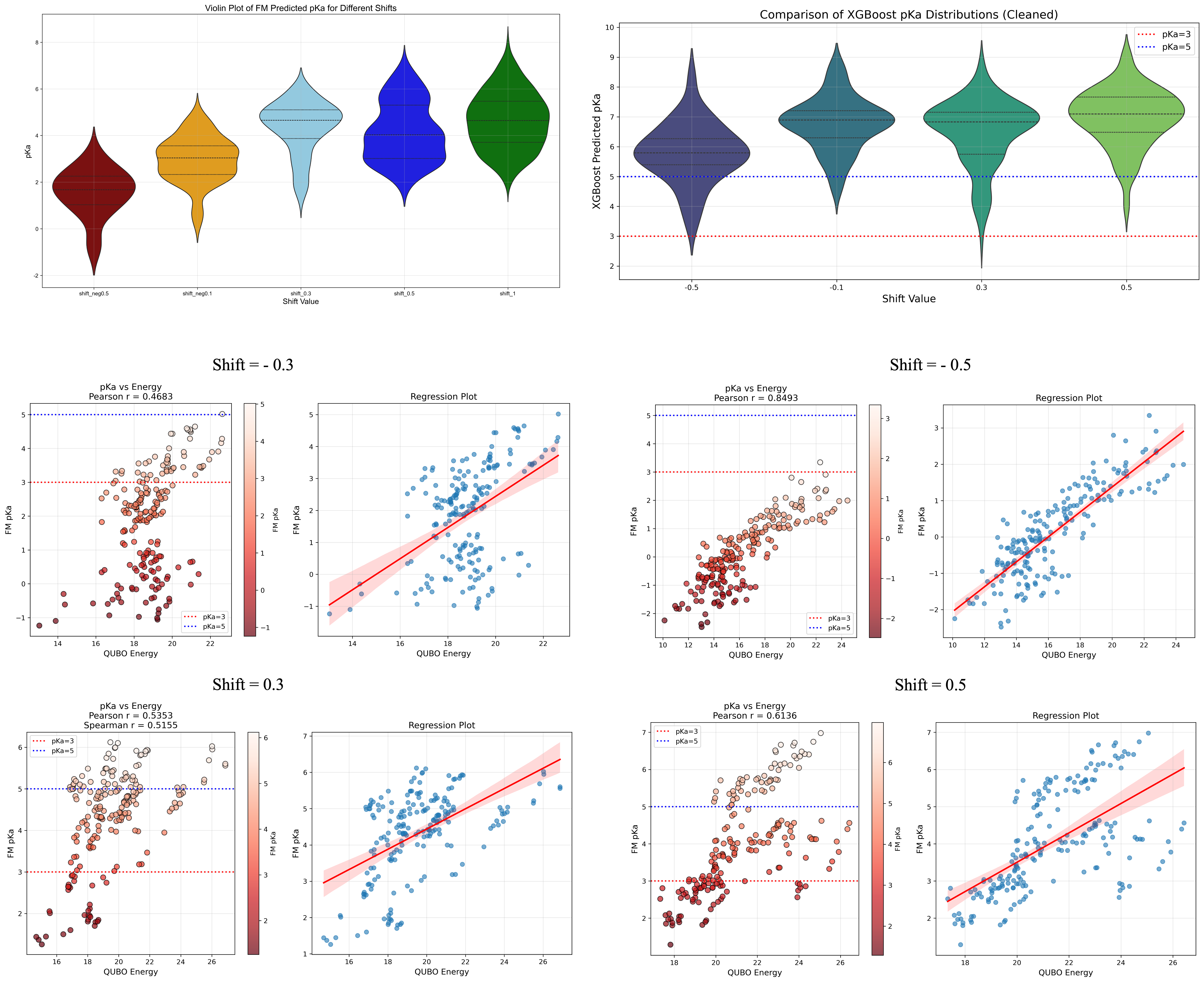}
\caption{Representative Shift Results}
\label{shift} 
\end{figure}

\subsection*{A15. Structural Bias in Joint Training of Factorization Machines with Long-Tail and Uniform Features}
\label{sec:conflict}

\subsubsection*{Problem Setup and Notation}
\label{sec:setup}

\paragraph{Feature 1: Long-Tail, Bucketized.}
Feature~1 is discretized into $m$ high-frequency (head) buckets $\{h_1,\dots,h_m\}$ and one low-frequency (tail) bucket $t$. Each sample belongs to exactly one bucket, represented by a one-hot vector.  
Let $n_{h_i}\approx N_h/m$ and $n_t = N_t$ be the sample counts, with $N_t \ll N_h$.

\paragraph{Feature 2: Uniform Continuous.}
Feature $f$ is a continuous variable, standardized such that $\mathbb{E}[x_f]=0$ and $\mathbb{E}[x_f^2]=1$. The true relationship involves only a linear effect of $f$:
\begin{equation}
y = \beta^* x_f + \varepsilon,
\end{equation}
where $\varepsilon$ is noise. Crucially, $f$ itself requires no interactions with any bucket.

\paragraph{Tail-Specific Effect of Feature~1.}
On top of the linear trend, samples from the tail bucket exhibit an additional interaction:
\begin{equation}
y = 
\begin{cases}
\beta^* x_f + \varepsilon, & \text{head samples},\\[2pt]
\beta^* x_f + \gamma^* x_f + \varepsilon, & \text{tail samples},
\end{cases}
\label{eq:true-model}
\end{equation}
with $\gamma^*\neq 0$. Equivalently,
\begin{equation}
y = \beta^* x_f + \sum_{b\in\mathcal{B}_1} \gamma^*_b \, x_b x_f + \varepsilon,
\end{equation}
where $\mathcal{B}_1 = \{h_1,\dots,h_m, t\}$, $\gamma^*_t = \gamma^*$, and $\gamma^*_{h_i}=0$ for all $i$.

\subsubsection*{FM Model and Weighted Loss}
\label{sec:model}
\setcounter{paragraph}{0}
We employ a factorization machine that captures interactions between the buckets of Feature~1 and the continuous Feature~$f$. The prediction for a sample in bucket $b$ is
\begin{equation}
\hat{y} = w_f x_f + \sum_{b\in\mathcal{B}_1} \langle \mathbf{v}_b, \mathbf{v}_f \rangle \, x_b x_f,
\label{eq:fm-pred}
\end{equation}
where $w_f\in\mathbb{R}$ and $\mathbf{v}_b,\mathbf{v}_f\in\mathbb{R}^k$ are latent vectors. Define the interaction scores $s_b := \langle \mathbf{v}_b, \mathbf{v}_f \rangle$. Then $\hat{y}_b = (w_f + s_b)x_f$.

The ideal parameters that perfectly fit the true model~\eqref{eq:true-model} are
\begin{equation}
w_f = \beta^*, \qquad s_t = \gamma^*, \qquad s_{h_i}=0 \;\;(\forall i).
\label{eq:ideal}
\end{equation}

To handle the long-tail distribution, sample weights $c_b$ are assigned based on bucket frequency, e.g., inversely proportional to $n_b^\alpha$. The tail weight $c_t$ is much larger than the head weight $c_h$. We define the effective total weight per bucket as $W_b := n_b c_b$:
\begin{itemize}
\setlength{\itemsep}{0pt}       
  \setlength{\parsep}{0pt}       
  \setlength{\parskip}{0pt} 
\item Head buckets: $W_{h_i} = \frac{W_h}{m}$, where $W_h = \sum_{i=1}^m W_{h_i}$.
\item Tail bucket: $W_t = N_t c_t$, with $W_t \gg W_h/m$ by design.
\end{itemize}

The training objective is the weighted mean squared error with $L_2$ regularization:
\begin{equation}
\mathcal{L} = \frac{1}{2} \sum_{b\in\mathcal{B}_1} W_b \, \mathbb{E}_{\text{bucket }b}\!\big[ (y - \hat{y}_b)^2 \big] + \frac{\lambda}{2} \Big( \sum_{b\in\mathcal{B}_1} \|\mathbf{v}_b\|^2 + \|\mathbf{v}_f\|^2 \Big).
\label{eq:loss}
\end{equation}
(We omit regularization on $w_f$ without loss of generality.) Substituting the true model and the FM prediction, and using $\mathbb{E}[x_f^2]=1$, the expected squared errors become
\begin{align}
\text{head } h_i &: \; (w_f + s_{h_i} - \beta^*)^2, \\
\text{tail } t   &: \; (w_f + s_t - \beta^* - \gamma^*)^2 .
\end{align}
Define the residuals
\begin{equation}
\delta_i := w_f + s_{h_i} - \beta^*, \qquad \delta_t := w_f + s_t - \beta^* - \gamma^*.
\label{eq:resids}
\end{equation}
The loss can then be written as
\begin{equation}
\mathcal{L} = \frac{1}{2} \sum_{i=1}^m \frac{W_h}{m} \delta_i^2 + \frac{1}{2} W_t \delta_t^2 + \frac{\lambda}{2} \Big( \sum_{i=1}^m \|\mathbf{v}_{h_i}\|^2 + \|\mathbf{v}_t\|^2 + \|\mathbf{v}_f\|^2 \Big).
\label{eq:loss-simplified}
\end{equation}

\subsubsection*{Stationary Point Conditions}
\label{sec:stationarity}

At any local minimum (or stationary point) of $\mathcal{L}$, the gradients vanish. We compute:

\noindent\textbf{(a) Gradient w.r.t.\ $w_f$:}
\begin{equation}
\frac{\partial \mathcal{L}}{\partial w_f} = \sum_{i=1}^m \frac{W_h}{m} \delta_i + W_t \delta_t = 0. \label{eq:grad-wf}
\end{equation}

\noindent\textbf{(b) Gradient w.r.t.\ $\mathbf{v}_{h_i}$:}
\begin{equation}
\nabla_{\mathbf{v}_{h_i}} \mathcal{L} = \frac{W_h}{m} \delta_i \, \mathbf{v}_f + \lambda \mathbf{v}_{h_i} = \mathbf{0}
\;\Longrightarrow\; \mathbf{v}_{h_i} = - \frac{W_h}{m\lambda} \delta_i \, \mathbf{v}_f. \label{eq:grad-vhi}
\end{equation}

\noindent\textbf{(c) Gradient w.r.t.\ $\mathbf{v}_t$:}
\begin{equation}
\nabla_{\mathbf{v}_t} \mathcal{L} = W_t \delta_t \, \mathbf{v}_f + \lambda \mathbf{v}_t = \mathbf{0}
\;\Longrightarrow\; \mathbf{v}_t = - \frac{W_t}{\lambda} \delta_t \, \mathbf{v}_f. \label{eq:grad-vt}
\end{equation}

\noindent\textbf{(d) Gradient w.r.t.\ $\mathbf{v}_f$:}
\begin{equation}
\nabla_{\mathbf{v}_f} \mathcal{L} = \sum_{i=1}^m \frac{W_h}{m} \delta_i \mathbf{v}_{h_i} + W_t \delta_t \mathbf{v}_t + \lambda \mathbf{v}_f = \mathbf{0}.
\end{equation}
Inserting~\eqref{eq:grad-vhi} and~\eqref{eq:grad-vt} yields
\begin{equation}
-\frac{1}{\lambda}\Big( \frac{W_h^2}{m^2} \sum_{i=1}^m \delta_i^2 + W_t^2 \delta_t^2 \Big) \mathbf{v}_f + \lambda \mathbf{v}_f = \mathbf{0}.
\end{equation}
If $\mathbf{v}_f = \mathbf{0}$, all $s_b = 0$ and $\gamma^*$ cannot be captured, contradicting the tail requirement. Hence $\mathbf{v}_f \neq \mathbf{0}$, and we obtain the scalar condition
\begin{equation}
\frac{W_h^2}{m^2} \sum_{i=1}^m \delta_i^2 + W_t^2 \delta_t^2 = \lambda^2. \label{eq:stationary-vf}
\end{equation}

\subsubsection*{Inherent Residual Bias under Shared Latent Regularization}
\label{sec:conflict-core}

\begin{proposition}[Non-vanishing residual bias]
\label{prop:bias}
Under the shared-embedding FM formulation with $L_2$ regularization and nonzero tail interaction $\gamma^*\neq0$, any permutation-symmetric stationary point satisfying $\mathbf{v}_f\neq\mathbf{0}$ necessarily exhibits nonzero residual bias:
\[
(\delta_h,\delta_t)\neq(0,0).
\]
\end{proposition}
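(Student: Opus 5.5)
The plan is a proof by contradiction that uses only the stationarity conditions already derived. Permutation symmetry means all head buckets share a common residual, $\delta_i=\delta_h$ for all $i$, and a common latent vector $\mathbf{v}_{h_i}=\mathbf{v}_h$. So we suppose $(\delta_h,\delta_t)=(0,0)$ at a stationary point with $\mathbf{v}_f\neq\mathbf{0}$, and derive a contradiction.

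\textbf{Step 1.} Substitute $\delta_h=\delta_t=0$ into \eqref{eq:grad-vhi} and \eqref{eq:grad-vt}. With $\lambda>0$, these give $\mathbf{v}_h=\mathbf{0}$ and $\mathbf{v}_t=\mathbf{0}$. Hence every interaction score vanishes: $s_h=\langle\mathbf{v}_h,\mathbf{v}_f\rangle=0$ and $s_t=\langle\mathbf{v}_t,\mathbf{v}_f\rangle=0$.

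\textbf{Step 2.} Insert these scores into the residual definitions \eqref{eq:resids}. The condition $\delta_h=0$ forces $w_f=\beta^*$. Then $\delta_t=w_f-\beta^*-\gamma^*=-\gamma^*$, which is nonzero by hypothesis. This contradicts $\delta_t=0$.

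\textbf{Step 3 (cross-check).} The contradiction can also be reached through the scalar condition \eqref{eq:stationary-vf}. When $\mathbf{v}_f\neq\mathbf{0}$, that condition reads $\tfrac{W_h^2}{m}\delta_h^2+W_t^2\delta_t^2=\lambda^2>0$. This is incompatible with $\delta_h=\delta_t=0$ as soon as $\lambda>0$. Both routes give the conclusion, and I would present Step 1--2 as the main argument and Step 3 as a one-line corollary.

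\textbf{Expected obstacle.} The main issue is not the algebra but stating the hypotheses precisely. The argument needs $\lambda>0$ strictly, since at $\lambda=0$ the ideal parameters \eqref{eq:ideal} are stationary. It also needs care in how \eqref{eq:stationary-vf} was obtained: dividing out $\mathbf{v}_f$ is legitimate only because $\mathbf{v}_f\neq\mathbf{0}$ is assumed, not derived.

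Optionally, the result can be sharpened quantitatively. Combining \eqref{eq:grad-wf}, which gives $W_h\delta_h=-W_t\delta_t$, with \eqref{eq:stationary-vf} yields
\[
\delta_t^2=\frac{\lambda^2}{W_t^2(1+m)},
\]
so the tail residual is of order $\lambda/W_t$ and cannot vanish for any finite tail weight.
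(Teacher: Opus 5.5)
Your main argument (Steps 1--2) is correct, and it takes a genuinely different route from the paper's.

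\textbf{The paper's route.} The paper's derivation of nonvanishing never invokes $\gamma^*\neq0$. It cancels $\mathbf{v}_f$ from the $\mathbf{v}_f$-stationarity condition to get \eqref{eq:stationary-vf}. Under permutation symmetry it combines this with the balance $W_h\delta_h+W_t\delta_t=0$ from \eqref{eq:grad-wf}. It then solves explicitly: $\delta_h=\pm\lambda/(W_h\sqrt{1+1/m})$ and $\delta_t=\mp\lambda/(W_t\sqrt{1+1/m})$. So nonvanishing comes from $\lambda>0$ and $\mathbf{v}_f\neq\mathbf{0}$ alone. Your Step 3 is exactly the qualitative core of that route.

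\textbf{Your route.} Steps 1--2 use only \eqref{eq:grad-vhi}, \eqref{eq:grad-vt}, $\lambda>0$ and $\gamma^*\neq0$. This makes them more general in a different direction: they need neither $\mathbf{v}_f\neq\mathbf{0}$ nor permutation symmetry. For any single $i$, $\delta_i$ and $\delta_t$ cannot vanish together at a stationary point. Your argument therefore also covers the stationary points with $\mathbf{v}_f=\mathbf{0}$, namely all embeddings zero and $w_f=\beta^*+W_t\gamma^*/(W_h+W_t)$. These do exist, even though the text before the proposition dismisses them heuristically.

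\textbf{What the paper's route buys.} It gives the explicit, $\gamma^*$-independent magnitudes. In particular it gives $\delta_t\neq0$ on its own, which is what the subsequent argument that $s_t\neq\gamma^*$ relies on. You can recover this cheaply: by \eqref{eq:grad-wf}, $(\delta_h,\delta_t)\neq(0,0)$ is equivalent to both components being nonzero.

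\textbf{A slip in your optional sharpening.} Substituting $W_h^2\delta_h^2=W_t^2\delta_t^2$ into $\frac{W_h^2}{m}\delta_h^2+W_t^2\delta_t^2=\lambda^2$ gives $W_t^2\delta_t^2(1+1/m)=\lambda^2$. Hence $\delta_t^2=\frac{\lambda^2}{W_t^2(1+1/m)}$, not $\frac{\lambda^2}{W_t^2(1+m)}$. Your ``order $\lambda/W_t$'' remark holds for the corrected formula, but not for the one you wrote when $m$ is large.
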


\begin{proof}
The uniform Feature~2 demands that all interactions with $f$ be zero, i.e., $s_{h_i}=0$ and, in the absence of the tail effect, also $s_t=0$. However, the tail bucket of Feature~1 requires $s_t = \gamma^*$. Both requirements are imposed through the same latent vector $\mathbf{v}_f$, creating a tension that cannot be fully resolved under the current regularized shared-embedding formulation.

From~\eqref{eq:grad-vhi} we compute the head interaction scores:
\begin{equation}
s_{h_i} = \langle \mathbf{v}_{h_i}, \mathbf{v}_f \rangle = -\frac{W_h}{m\lambda} \delta_i \|\mathbf{v}_f\|^2.
\end{equation}
Combined with the definition $\delta_i = w_f - \beta^* + s_{h_i}$, we solve for $\delta_i$:
\begin{equation}
\delta_i = \frac{w_f - \beta^*}{1 + \frac{W_h}{m\lambda} \|\mathbf{v}_f\|^2}. \label{eq:delta_i}
\end{equation}
Due to the permutation symmetry of the head buckets in the loss function, any permutation-symmetric stationary point must satisfy $\delta_i = \delta_h$ for all $i$. Similarly, for the tail:
\begin{equation}
s_t = -\frac{W_t}{\lambda} \delta_t \|\mathbf{v}_f\|^2, \qquad
\delta_t = \frac{w_f - \beta^* - \gamma^*}{1 + \frac{W_t}{\lambda} \|\mathbf{v}_f\|^2}. \label{eq:delta_t}
\end{equation}

Equation~\eqref{eq:grad-wf} gives the balance between head and tail residuals:
\begin{equation}
W_h \delta_h + W_t \delta_t = 0 \;\Longrightarrow\; \delta_t = -\frac{W_h}{W_t} \delta_h. \label{eq:balance}
\end{equation}

Substituting into~\eqref{eq:stationary-vf} and using $\sum_i \delta_i^2 = m\delta_h^2$ yields
\begin{equation}
W_h^2 \delta_h^2 \left( \frac{1}{m} + 1 \right) = \lambda^2,
\end{equation}

hence
\begin{equation}
\delta_h = \pm \frac{\lambda}{W_h \sqrt{1 + 1/m}}, \qquad
\delta_t = \mp \frac{\lambda}{W_t \sqrt{1 + 1/m}}. \label{eq:delta-solved}
\end{equation}

These residuals are strictly non-zero for any $\lambda>0$ and are entirely determined by the regularization strength and the effective weights, \emph{independent of $\gamma^*$}.
\end{proof}

We now examine whether the tail interaction $s_t$ can still attain the ideal value $\gamma^*$. From the definition of $\delta_t$,
\begin{equation}
s_t = \gamma^* + \delta_t - (w_f - \beta^*). \label{eq:st-exact}
\end{equation}

To achieve $s_t = \gamma^*$, one would need $w_f - \beta^* = \delta_t$. Using~\eqref{eq:delta_i} and~\eqref{eq:balance},
\begin{equation}
w_f - \beta^* = \delta_h \left( 1 + \frac{W_h}{m\lambda} \|\mathbf{v}_f\|^2 \right)
= -\frac{W_t}{W_h} \delta_t \left( 1 + \frac{W_h}{m\lambda} \|\mathbf{v}_f\|^2 \right).
\end{equation}

Setting this equal to $\delta_t$ requires
\begin{equation}
-\frac{W_t}{W_h} \left( 1 + \frac{W_h}{m\lambda} \|\mathbf{v}_f\|^2 \right) = 1,
\end{equation}
which is impossible because the left-hand side is strictly negative while the right-hand side is strictly positive (all quantities are positive). Thus $w_f - \beta^*$ can never cancel $\delta_t$, and $s_t$ is systematically biased away from $\gamma^*$.

More explicitly, substituting the expressions for $w_f - \beta^*$ and $\delta_t$ into~\eqref{eq:st-exact} gives
\begin{equation}
s_t = \gamma^* - \delta_t \left( 1 + \frac{W_t}{W_h} \cdot \frac{1 + \frac{W_h}{m\lambda} \|\mathbf{v}_f\|^2}{1 + \frac{W_t}{\lambda} \|\mathbf{v}_f\|^2} \right).
\end{equation}
Since the correction factor is strictly positive, $s_t$ is always biased away from $\gamma^*$. In the empirically typical case where $\delta_t > 0$ (which occurs when the tail signal is positive and regularization dominates), the tail interaction is underestimated, i.e., $|s_t| < |\gamma^*|$. Even for large $\|\mathbf{v}_f\|^2$, the residual bias remains non-zero due to the stationary constraints. The root cause is that $\delta_t$ cannot be zero, thereby preventing exact recovery of $\gamma^*$ under the given parameterization.

\subsubsection*{Extension: A Quadratic Term in $y$ without Providing the Quadratic Feature}
\label{sec:quadratic-y}

One might attempt to incorporate non-linear effects of Feature~2 by modifying the target variable itself, e.g., letting
\begin{equation}
y = \beta^* x_f + \gamma^*_b x_f + \eta \, (x_2 - c)^2 + \varepsilon,
\label{eq:y-quadratic}
\end{equation}
while keeping the FM model unchanged, i.e., still using \eqref{eq:fm-pred} without any explicit quadratic feature. We show that this modification does not alleviate the structural bias; the same inherent tension remains.

Since $x_f$ is the standardized version of $x_2$, we can express $(x_2-c)^2$ as a quadratic polynomial in $x_f$:
\[
x_f = \frac{x_2 - \mu}{\sigma},\qquad
(x_2-c)^2 = a x_f^2 + b x_f + d,
\]
with constants $a=\sigma^2$, $b=2\sigma(\mu-c)$, $d=(\mu-c)^2$. Substituting into~\eqref{eq:y-quadratic},
\[
y = (\beta^* + \eta b) x_f + \gamma^*_b x_f + \eta a x_f^2 + \eta d + \varepsilon.
\]

For a sample in bucket $b$, the FM prediction is $\hat{y}_b = (w_f+s_b)x_f$. The error becomes
\[
y - \hat{y}_b = \Delta_b x_f + \eta a x_f^2 + \eta d + \varepsilon,
\]
where $\Delta_b = \beta' + \gamma^*_b - w_f - s_b$, with $\beta' := \beta^* + \eta b$. Assuming the standardized feature has negligible skewness (i.e., $\mathbb{E}[x_f^3]\approx 0$, as is common for many real-world continuous variables after normalization), and using $\mathbb{E}[x_f]=0$, $\mathbb{E}[x_f^2]=1$, and denoting $\mathbb{E}[x_f^4]=\kappa$, the expected squared error for bucket $b$ is
\begin{align*}
\mathbb{E}\big[(y-\hat{y}_b)^2\big] 
&= \mathbb{E}\big[ (\Delta_b x_f + \eta a x_f^2 + \eta d + \varepsilon)^2 \big] \\
&\approx \Delta_b^2 \cdot 1 + \eta^2 a^2 \kappa + \eta^2 d^2 + 2\eta^2 a d + \sigma_\varepsilon^2 \\
&= \Delta_b^2 + \text{const},
\end{align*}
where the constant does not depend on the bucket $b$. Hence, up to an additive constant, the loss function retains the same form as before, with the residuals redefined as
\begin{equation}
\delta_i := w_f + s_{h_i} - \beta', \qquad
\delta_t := w_f + s_t - \beta' - \gamma^*.
\label{eq:resids-quad}
\end{equation}

Thus the weighted loss $\mathcal{L}$ is structurally identical to \eqref{eq:loss-simplified}, merely replacing $\beta^*$ by $\beta'$.

All stationary point conditions \eqref{eq:grad-wf}--\eqref{eq:stationary-vf} carry over verbatim, leading once again to the non-zero forced residuals
\[
\delta_h = \pm \frac{\lambda}{W_h \sqrt{1 + 1/m}}, \qquad
\delta_t = \mp \frac{\lambda}{W_t \sqrt{1 + 1/m}},
\]
and the same systematic deviation in the tail interaction score $s_t$,
\[
s_t = \gamma^* - \delta_t \left( 1 + \frac{W_t}{W_h} \cdot \frac{1 + \frac{W_h}{m\lambda} \|\mathbf{v}_f\|^2}{1 + \frac{W_t}{\lambda} \|\mathbf{v}_f\|^2} \right).
\]

Therefore, even when the target variable contains a quadratic component of the uniform feature, as long as the FM model does not explicitly include the corresponding quadratic feature, the optimization exhibits the same structural bias. The tail interaction $s_t$ cannot exactly recover $\gamma^*$, and the long-tail modeling capability is damaged. The presence of the quadratic term merely shifts the effective linear coefficient $\beta'$ and adds a constant to the loss, without resolving the core tension originating from the shared latent vector $\mathbf{v}_f$ and the $L_2$ regularizer.

\subsubsection*{Why Separate Training on Feature~1 Succeeds}
\label{sec:solo}

When Feature~1 is trained alone, there is no pressure from a uniform feature to force all interactions to zero. The head buckets can be heavily down-weighted (or their interactions can be structurally excluded by not introducing $\mathbf{v}_f$ for them), allowing the model to focus almost entirely on the tail residual. In that case, the loss simplifies to balancing the tail regression term and the regularizer, and the stationary point can closely approach $s_t \approx \gamma^*$ with $s_{h_i}\approx 0$. The bias described in Sections~\ref{sec:conflict-core} and~\ref{sec:quadratic-y} does not arise.

\subsubsection*{Discussion and Practical Implications}
\label{sec:conclusion}

Our stationary-point analysis under the regularized shared-embedding FM formulation shows that jointly training a Factorization Machine with a uniformly distributed feature—whether the target $y$ contains only a linear effect or also a quadratic term of that feature—while sharing the latent vector with the interaction embeddings of a long-tail feature, leads to an unavoidable residual bias. This bias systematically prevents the tail interaction from attaining its ideal value and introduces spurious interactions on head samples. It is a structural consequence of the regularized objective and cannot be eliminated under the current regularized shared-embedding FM formulation by merely increasing tail weights or by engineering the target variable.

Intuitively, the abundant head samples favor suppressing interaction magnitudes through the shared embedding, whereas the sparse tail samples require amplified interactions to accurately fit the rare regime. Under shared regularization, these competing objectives cannot simultaneously attain their respective optima, inevitably pulling the embedding toward a compromise that compresses the tail signal.

It is important to note that our analysis identifies a specific limitation of the shared-latent architecture; it does not preclude alternative architectures with decoupled or feature-specific embeddings, adaptive regularization, or higher-order interaction parameterizations. As such, this result should be interpreted as a diagnostic insight rather than a fundamental impossibility theorem. Rather, it highlights that in the standard shared-embedding FM with $L_2$ regularization, competing gradient signals inevitably pull the embedding toward a compromise solution, compressing the tail interaction.

This theoretical result is consistent with our empirical observation that introducing uniformly distributed auxiliary features or quadratic target engineering degrades sparse-tail fitting performance, even though it may improve global regression smoothness. Specifically, it explains why simply adding the quadratic feature as an independent linear term (without sharing $\mathbf{v}_f$) fully preserves the previously optimized long-tail interactions. The practical implication is clear: the uniform feature should be incorporated solely via separate, independent terms (a linear coefficient or an explicit quadratic feature, e.g., $(x_2-c)^2$), with all parameters related to the long-tail feature kept frozen. This preserves the previously acquired tail modeling capacity and circumvents the inherent bias present in any joint training that shares latent vectors under $L_2$ regularization.

\end{document}